\newcommand\Dim{\mathrm{Dm}}
\newcommand\Unit{\mathrm{Ut}}
\newcommand\Pref{\mathrm{Px}}
\newcommand\subbase{_{\mathrm{b}}}
\newcommand\subroot{_{\mathrm{r}}}
\newcommand\subpre{_{\mathrm{p}}}
\newcommand\subnorm{_{\mathrm{n}}}
\newcommand\subeval{_{\mathrm{e}}}
\newcommand\subabs{_{\mathrm{a}}}
\newcommand\subzero{\!_{/0}}
\newcommand\RatM{\mathrm{Qm}}
\newcommand\RatP{\mathbb{Q}_+}
\newcommand\FinSupp{\mathrm{Zf}}
\newcommand\Forget{\mathcal{U}}
\newcommand\Abel{\mathcal{A}}
\newcommand\Pair[1][]{\mathcal{C}_{#1}}
\newcommand\GPair[1][]{\mathcal{D}_{#1}}
\newcommand\conv[4][]{#2 \xrightarrow{#3}_{#1} #4}
\newcommand\de[1]{\lfloor #1\rfloor}
\newcommand\Usym[1]{\mathrm{#1}}
\renewcommand\_{\underline{\enspace}}
\newcommand\NEW[1]{#1}
\newenvironment{ENEW}{}{}
\newcommand\NOW[1]{#1}
\newenvironment{ENOW}{}{}
\title{A Theory of Conversion Relations for Prefixed Units of Measure}
\author{Baltasar Trancón y Widemann \\
  Technische Hochschule Brandenburg \\
  Brandenburg an der Havel, DE \\
  \NEW{trancon{@}th-brandenburg.de}
  \and
  Markus Lepper \\
  semantics gGmbH \\
  Berlin, DE}
\begin{document}
\maketitle

\runninghead{B. Trancón y Widemann, M. Lepper}{A Theory of Conversion Relations for Prefixed Units of Measure}

\begin{abstract}
  Units of measure with prefixes and conversion rules are given a formal
  semantic model in terms of categorial group theory.  Basic structures and both
  natural and contingent semantic operations are defined.  Conversion rules are
  represented as a class of ternary relations with both group-like and
  category-like properties.  A hierarchy of subclasses is explored, each
  \NEW{satisfying stronger useful algebraic properties} than the preceding,
  culminating in a direct efficient conversion-by-rewriting algorithm.
\end{abstract}

\noindent CCS Concepts: $\bullet$~\textit{Theory of computation}
$\to$~\textit{Theory and algorithms for application domains};
$\bullet$~\textbf{Applied computing} $\to$~\textbf{Physical sciences
  and engineering}; $\bullet$~\textbf{Software and its engineering}
$\to$~\textbf{Software notations and tools} $\to$~\textbf{Formal language definitions}
$\to$~\textbf{Semantics}; $\bullet$~\textit{Mathematics of computing} $\to$~\textit{Discrete
mathematics};

\section{Introduction}

In the mathematics of science, \emph{dimensions} and \emph{units of measure} are
used as static metadata for understanding and checking \emph{quantity} relations
\cite{Wallot1922}.  Quantities are variables that may assume a value expressing
a fact about a model, in terms of a formal multiplication of a \emph{numerical
  value} with a unit of measure.\footnote{~\NEW{In many fields of mathematics
    that deal with scalable entities, the former part would be called a
    \emph{magnitude}; however, in the context of quantities, the term
    \emph{magnitude} is applied to the pair.  Hence we avoid that term for the
    sake of clarity.}}  For example, $42\,195\;\mathrm{m}$ is a quantity value
of \emph{distance}, \NEW{with numerical value $42\,195$ and unit of measure
  $\mathrm{m}$}.  Quantities themselves have a complicated and controversial
ontology, which shall not be discussed here.

Quantities and quantity values are quite distinct epistemological types, and the
need to distinguish them properly is demonstrated beyond doubt by Partee's
Paradox \cite{Loebner}: ``The temperature is ninety'' and ``the temperature is
rising'' do not entail ``ninety is rising.''  The former premise is a statement
about (an instantaneous assignment of) a quantity value; note that the
\NEW{numerical value} is explicit but the unit is implicit.  The latter premise
is a statement about a quantity that persists over time.  The conclusion, being
in type error, is not a meaningful statement about anything physical.

Clearly, neither half of the \NEW{numerical value}--unit pair is sufficient for
interpretation: The \NEW{numerical value} carries the \NEW{relative} data, and
the unit of measure the context of reference.  \NEW{A system of quantities
  assigns a dimension to each quantity}, such that only quantities of identical
dimension are \emph{commensurable}, that is, may be added or compared (as apples
with apples), and may furthermore be associated with preferred units of
measurement.  For example, the quantities of \emph{molecular bond length} and
\emph{planetary perihelion} are both of dimension \emph{length} and thus
formally commensurable, but traditionally use quite different units of measure.

In scientific programming, traditional practice expresses only the numerical value
part of quantity values, whereas units of measure, and their consistency among
related quantities, are left implicit, just like in the quotation above.  The
evident potential for disastrous software errors inherent in that practice has
been demonstrated, for instance, by the famous crash of the Mars Climate Orbiter
probe, where \emph{pounds of force} were confounded with \emph{newtons} at a
subsystem interface \cite{mco}.  Practical tool solutions abound
\cite{McKeever2020}.  Theoretical foundations are rare, but essential for
program correctness, specification and verification.

In the seminal work of Kennedy~\cite{Kennedy1996}, the checking of stated or
implied units of measure is cast formally as a \emph{type inference} problem.
The approach has been implemented successfully, for example in
F\#~\cite{Kennedy2010} or Haskell~\cite{Gundry2015}.  However, the underlying
concept of units is strongly simplified, and falls far short of the traditional
scientific practice, embodied in the International Systems of Units and
Quantities (SI, ISQ~\cite{ISO80000-1}, respectively).

A particular challenge is the concept of \emph{convertible} units, where the
scaling factors for equivalent transformation of \NEW{numerical values} relative
to some other unit are fixed and known, and hence could be applied implicitly.
In the Mars Climate Orbiter case, implicit multiplication by
$4.448\,221\,615\,260\,5$ would have avoided the observed failure, and thus
potentially saved the mission.

Furthermore, as a special case of convertibility of such practical importance
that it comes with its own notation, there are \emph{prefixes} that can be
applied to any unit, and modify the scaling factor in a uniform way.  For
example, multiplication by $1000$ serves to convert from seconds to
milliseconds, and exactly in the same way from meters to millimeters.

\subsection{A Survey of Problem Cases}

Traditional concepts of units of measure also have their own curious,
historically grown idiosyncrasies and restrictions.  The following five problem
cases shall serve to illustrate the need for more rigorous formalization, and
also as challenges for the explanatory power of the model to be developed in the
main matter of the present paper, to be recapitulated in
Section~\ref{cases-rev}.

\subsubsection{Problem Case: Neutral Elements}
\label{1-aargh}
  
\begin{ENOW}
  The treatment of neutral elements of the algebras of dimensions and units in
  the standard literature exhibits several subtle inexactitudes and
  inconsistencies.

  For the neutral element of the algebra of dimensions, the version of the ISQ
  that we are citing throughout this paper discusses the form
  \begin{quote}
    ``[\dots] $A^0B^0C^0\dots = 1$, where the symbol $1$ denotes the corresponding
    dimension.  Such a quantity is called a quantity of dimension one and is
    expressed by a number.''\;\cite[§5]{ISO80000-1}
  \end{quote}
  whereas the slightly later German version \cite{ISO80000-1-de} replaces
  ``dimension one'' by ``dimension number''.\;\footnote{~\NOW{``Eine solche Größe
      wird als \emph{Größe der Dimension Zahl} bezeichnet, und ihr Wert wird
      durch eine Zahl angegeben.''}\;\cite[§5]{ISO80000-1-de}} That this is not
  merely a liberty of translation, but a conscious choice is reflected by
  remarks to the effect that the terms ``quantity of dimension one'' and
  ``dimensionless quantity'' are deprecated\footnote{~\NOW{``Die Benennungen
      `Größe der Dimension Eins' und `dimensionslose Größe' sind veraltet und
      sollten nicht mehr verwendet werden.''}\;\cite[p.~3]{ISO80000-1-de}}, and
  that $1$ as the expression of a dimension is not allowed\footnote{~\NOW{``Die
      Zahl $1$ ist keine Dimension.  Hier sollte korrekterweise `Dimension Zahl'
      stehen.''}\;\cite[footnote N12]{ISO80000-1-de}\\\NOW{``Die Zahl $1$ ist
      keine Dimension.  Ein Zeichen für die Dimension Zahl ist derzeit noch
      nicht festgelegt.''}\;\cite[footnote N14]{ISO80000-1-de}}.

  A recent revision of \cite{ISO80000-1} sums up the controversy with the more
  cautious, but rather indecisive statement:
  \begin{quote}
    ``[\dots] $A^0B^0C^0\dots = 1$, where the symbol $1$ denotes the
    corresponding dimension.  There is no agreement on how to refer to such
    quantities.  They have been called dimensionless quantities (although this
    term should now be avoided), quantities with dimension one, quantities with
    dimension number, or quantities with the unit one.  Such quantities are
    dimensionally simply numbers.  To avoid confusion, it is helpful to use
    explicit units with these quantities where possible, e.g.,
    $\mathrm{m}/\mathrm{m}$ [\dots]''\;\cite[§5]{ISO80000-1-new}
  \end{quote}
  Note the self-contradiction between ``quantities with the unit one'' and ``use
  explicit units''.
  
  With regard to the neutral element of the algebra of units, the SI has a
  precise notion of its semantic status,
  \begin{quote}
    ``The unit one is the neutral element of any system of units---necessary and
    present automatically.''\;\cite[§2.3.3]{sib9}
  \end{quote}
  but still, somewhat contradictory, classifies it as derived by means of a
  syntactic criterion:
  \begin{quote}
    ``All other units [i.e.\ other than the seven base units], described as
    \emph{derived units}, are constructed as products of powers of the base
    units.''\;\cite[§1.2]{sib9}.
  \end{quote}
  
  The ISQ likewise classifies the neutral unit as not a base unit, but
  acknowledges that there are non-derived quantities measured in that unit,
  namely \emph{counts} of entities.  This is possible by virtue of the laxer,
  non-exclusive characterization of derived quantities (note the omission of
  ``all''):
  \begin{quote}
    ``Other quantities, called derived quantities, are defined or expressed in
    terms of base quantities by means of equations.''\,\cite[§4.3]{ISO80000-1}
  \end{quote}

  The notation rules common to SI and ISQ state that the neutral unit has the
  symbol $1$, but should not be written as such in a quantity value expression.
  It follows that, unlike any other unit symbol with a definition, it cannot
  take a prefix.
\end{ENOW}

\subsubsection{Problem Case: Equational Reasoning}
\label{eq-aargh}

The ISQ standard defines derived units by equations, equating each to some
expression over other, more basic units \cite[§6.5]{ISO80000-1}.  However,
equals often cannot be substituted for equals, without making arbitrary choices,
or violating semantic conventions or even basic syntax:
\begin{itemize}
\item Applying the prefix \emph{micro-} ($\upmu$) to the SI ``base'' unit
  \emph{kilogram} ($\mathrm{kg}$) yields the syntactically illegal
  $\upmu\mathrm{kg}$; the recommended, semantically equivalent alternative
  is the \emph{milligram} ($\mathrm{mg}$).
\item A unit of \emph{volume}, the \emph{centilitre}, is defined as the
  prefix \emph{centi-}, implying a multiplicator of $0.01$, being applied
  to the unit \emph{litre} ($\mathrm{L}$), which in turn defined as the
  cubic \emph{decimeter} ($\mathrm{dm}^3$).  The expanded expression
  $\mathrm{cL} = \mathrm{c}(\mathrm{dm}^3)$ is syntactically invalid, and there is
  no semantically equivalent expression in SI syntax proper; the equation
  $10^{-5} = 10^{3k}$ obviously has no integer solutions.
\item The coherent SI unit of \emph{power}, the \emph{watt} ($\mathrm{W}$)
  is defined, formally via several reduction steps involving other units,
  as $\mathrm{kg}{\cdot}\mathrm{m}^2{\cdot}\mathrm{s}^{-3}$.  For the
  \emph{hectowatt}, there are no less than $44$ distinct semantically
  equivalent alternatives of prefix distribution, such as
  $\mathrm{fg}{\cdot}\mathrm{cm}^2{\cdot}\mathrm{ns}^{-3}$, or
  $\mathrm{dg}{\cdot}\mathrm{\upmu m}^2{\cdot}\mathrm{\upmu s}^{-3}$, or
  $\mathrm{Eg}{\cdot}\mathrm{cm}^2{\cdot}\mathrm{ks}^{-3}$, neither of
  which can be called obviously preferable.
\end{itemize}
      
Besides the apparently arbitrary restrictions of syntactic expressibility, there
is the nontrivial problem of getting the circumventing semantic equivalence
right; see Section~\ref{unit-dim-aargh} below.

\subsubsection{Problem Case: Cancellation}
  
The ISQ standard defines the units of \emph{plane angle}, radians, and of
\emph{solid angle}, steradians, as
\begin{align*}
  \Usym{rad} &= \Usym{m}/\Usym{m} &
  \Usym{sr} &= \Usym{m}^2/\Usym{m}^2
\end{align*}
respectively, but considers them different from each other, and from the unit
$1$ \cite[§6.5]{ISO80000-1}.  Thus, the operation written as unit division
appears cancellative in many, but not nearly all contexts.

\subsubsection{Problem Case: Prefix Families}
\label{prefix-aargh}
  
The SI prefix families are geometric sequences, but cannot be written as powers
of a generator.  For example, the General Conference on Weights and Measures
(CGPM) has recently adopted the metric prefixes \emph{ronna-} and \emph{quetta-}
with the values $10^{27}$ and $10^{30}$, respectively, as well as their inverses
\emph{ronto-} and \emph{quecto-} \cite{CGPM2022R3}.  These are clearly intended
to represent the ninth and tenth powers, respectively, of the generating prefix
\emph{kilo-}, but cannot be written as such:
\begin{ENEW}
  Defining equations of the form
  \begin{align*}
    \Usym{R} &= \Usym{k}^{9} &
    \Usym{r} &= \Usym{k}^{-9}
    \\
    \Usym{Q} &= \Usym{k}^{10} &
    \Usym{q} &= \Usym{k}^{-10}
  \end{align*}
  are syntactically forbidden.
\end{ENEW}

\subsubsection{Problem Case: Units, \NEW{Kinds} and Dimensions}
\label{unit-dim-aargh}
  
There is a tendency in both theory and practice to assert a distinguished
reference unit per dimension.  This has a number of consequences, some being
clearly unintended.

It is common for tools to specify conversion factors between commensurable units
in relation to the reference unit of that dimension, both in order to avoid
inconsistencies and as a means of data compression \cite{Allen2004}.  But
dimensions are too coarse to distinguish all semantically incommensurable
quantities, leading to manifestly unsound conversion judgements.  For example,
\emph{newton-meters} (of \emph{torque}) and \emph{joules} (of \emph{work})
become interconvertible in this manner; the same holds for \emph{becquerels} (of
\emph{radioactivity}) and \emph{revolutions-per-minute} (of \emph{angular
  velocity})\footnote{~The GNU command line tool \texttt{units}, as of
  version~2.21, happily suggests the conversion factor $30/\pi$.}.

\begin{ENEW}
  The ISQ admits a concept of \emph{kind of quantity}, such that units of the same
  dimension must additionally be of the same kind in order for their values
  to be commensurable.  Thus, the above problem could be prevented by noting
  that the respective quantities are of different kind.

  However, the standard gives no guidelines how a particular choice of kinds and
  their assignment to quantities can be made, justified or predicated on a
  context.  Instead, it states that ``division into several kinds [\dots] is to
  some extent arbitrary'', and that quantities may be ``generally considered''
  or ``by convention not regarded'' as being of the same kind
  \cite[§3.2]{ISO80000-1}.  Furthermore, there is no indication whether and how
  kinds of quantities should relate to the units used for expressing their
  respective values: Can incommensurable quantities have values involving the
  same units, and if so, what are the appropriate algebraic rules?
\end{ENEW}

\NEW{Logically related, but} even more worrying from the formal perspective is
the ensuing confusion about the very identity of units.  For example, the CGPM
has recognized the medically important, even potentially life-saving, need to
distinguish the units \emph{gray} and \emph{sievert} of \emph{absorbed} and
\emph{equivalent/effective dose}, respectively, of ionizing radiation
\cite{CGPM1979R5}.  Nevertheless, the same resolution ends with the ominous
statement ``The sievert is equal to the joule per kilogram'', where the latter
is the reductionistic definition of the \emph{gray}.  Whatever equality is
invoked in this statement, it is not the usual mathematical one that satisfies
the \emph{indiscernibility of identicals}, seeing as that would utterly defeat
the purpose of said resolution.

\subsection{Research Programme}

The current state of the theory of quantity values and units of measure is at
best semi-formal.  There is a collection of conventions, embodied explicitly in
international standards and implicitly in scientific education, that prescribes
the notations, pronounciations and calculations associated with quantity value
expressions.  What is lacking is an internal method, such as deduction from
axioms, for justification or critique of these rules, in particular where they
give rise to ambiguous, peculiar, awkward or outright contradictory usage.

Our present work is a proposal \NEW{of} how to improve the formal rigor of
reasoning by mathematical remodeling.  An interpretation of the domain of
discourse is given that is \emph{informed}, but not \emph{governed}, by the
traditional readings.  For such a model to be \emph{adequate}, it is required
that the uncontroversial \NEW{consequences} of the theory are upheld.  Where the
model begs to differ from conventional wisdom, it can be studied on an objective
formal basis, opening new and potentially fruitful avenues of criticism.

The result of this research aspires not to produce only \emph{some} model that
satisfies the intended propositions in arbitrary ways.  Rather, the goal is an
\emph{epistemological} model, where the reasons for the truth of some satisfied
proposition can be analyzed.  To this end, a clear line is carefully drawn
between \emph{necessary} and \emph{contingent} properties of the model: The
former are those that follow from the mathematical structure without regard to
the meaning of particular entity symbols, and cannot be subverted.  The latter
are those that represent the actual semantic conventions of science, which
``could be different but aren't''\,---\,i.e., which could be extended or
modified without breaking the underlying logic.

\subsection{Contributions}

The present work, an extended revision of
\cite{DBLP:conf/RelMiCS/WidemannL23,TyWLepper2022preprint}, lays the foundation
for a formal model of dimensions and units of measure that extends and
complements the findings of \cite{Kennedy1996}.  We refine the mathematical
structures outlined there, in order to accomodate unit conversions and prefixes.
In addition, the proposed model rectifies irrelevant, historically accrued
restrictions of traditional notation systems.

In particular, the main contributions of the present work are as follows:
\begin{itemize}
\item A formal mathematical model of dimensions and units with optional
  prefixes, separated into the generic structures shared by all unit systems
  (Section~\ref{structs}) and the properties to be interpreted contingently
  by each unit system (Section~\ref{props}), together with a hierarchy of
  equivalence relations.
\item A formal mathematical model of unit conversion rules as a class of ternary
  relations with a closure operator, that function both as groups and as
  categories (Section~\ref{rels}), together with a hierarchy of six subclasses,
  each \NEW{satisfying stronger useful algebraic properties} than the preceding,
  and in particular an efficient rewriting procedure for calculating conversion
  factors.
\end{itemize}

\section{Prerequisites}

We assume that the reader is familiar with basic concepts of abstract algebra
and category theory.  In particular, the proposed model hinges on the concepts
of \emph{abelian groups} and \emph{adjoint functors}; see \cite{Adamek1990} for
the categorial approach to the former and \cite{MacLane1978} for an introduction
to the pervasive structural role of the latter.

Namely, the promised epistemological distinction shall be realized by modeling
the necessary aspects of data and operations in terms of functors and natural
transformations that arise from adjunctions.  By contrast, contingent aspects
are modeled as particular objects and non-natural morphisms.

\subsection{Abelian Groups}

The category $\mathbf{Ab}$ has abelian groups as objects.  We write $\Forget(G)$
for the carrier set of a group $G$.  (A group and its carrier set are not the
same.)  A homomorphism $f : G \to H$ is a map $f : \Forget(G) \to \Forget(H)$
that commutes with the group operations.  The carrier-set operator $\Forget$,
together with the identity operation on homomorphism maps, $\Forget(f) = f$, is
the forgetful functor from the category $\mathbf{Ab}$ of abelian groups to the
category $\mathbf{Set}$ of sets.

Wherever a generic group variable $G$ is mentioned, we write $\diamond$ for the
group operation, $e$ for the neutral element, and ${}^\dag$ for inversion.
Alternatively, groups could be spelled out as formal tuples of the form
$G = (\Forget(G), {\diamond}, e, {}^\dag)$, but there is no pressing need for
that level of formality.  Powers of group elements are written and understood as
follows:
\begin{align*}
  x^{(0)} &= e & x^{(1)} &= x & x^{(m + n)} &= x^{(m)} \diamond x^{(n)} & x^{(-n)} &= (x^{(n)})^\dag
\end{align*}

In this and other places, the generic model structure makes use of the additive
group of integers~$\mathbb{Z}$.  Contingent actual numbers in some model
instance are from the set $\RatP$ of positive rationals.  For arithmetics only
the multiplicative group $\RatM$ on that set is required.  We refer to these
numbers as \emph{ratios}.

Kennedy \cite{Kennedy1996} observed that the algebras of dimensions and units
are essentially \emph{free abelian groups}.

\subsection{Free Abelian Groups}

We write $\Abel(X)$ for the free abelian group over a set $X$ of generators.
Whereas the concept is specified only up to isomorphism in category theory, we
conceive of a particular construction of the carrier set $\Forget(\Abel(X))$,
namely the space $\FinSupp(X)$ of finitely supported integer-valued functions on
$X$:
\begin{align*}
  \FinSupp(X) &= \{ f : X \to \mathbb{Z} \mid \operatorname{supp}(f) \enspace\text{finite} \}
  &
  \operatorname{supp}(f) &= \{ x \in X \mid f(x) \neq 0 \}
\end{align*}

If $X$ is finite, then the group $\Abel(X)$ is called \emph{finitely generated}.
Note that the finite-support constraint is redundant in this case.

For denoting a particular element of a free abelian group directly, it suffices
to refer to the support.  The finite partial maps of type
$f : X \to \mathbb{Z} \setminus \{ 0 \}$ are in one-to-one correspondence with
group elements $f\subzero \in \FinSupp(X)$:
\begin{align*}
  f\subzero(x) &=
  \begin{cases}
    f(x) & \text{if defined}
    \\
    0 & \text{otherwise}
  \end{cases}
\end{align*}

\begin{ENEW}
  For example, consider the positive rational numbers $\mathbb{Q}_+$, which by
  the fundamental theorem of arithmetic are isomorphic to the free abelian group
  over the prime numbers $\mathbb{P}$.  Namely, the map
  $^\star : \FinSupp(\mathbb{P}) \to \mathbb{Q}_+$ that sends every finitely
  supported integer-valued map on the primes to the corresponding product of
  prime powers,
  \begin{align*}
    f^\star &= \prod_{p \in P} p^{f(p)}
  \end{align*}
  is a bijection.  Then we can express the fact that $2/9$ factors as
  $2 \cdot 3^{-2}$ concisely as follows:
  \begin{align*}
    \frac29 &= {\{ 2 \mapsto 1, 3 \mapsto -2 \}\subzero}^\star
  \end{align*}
\end{ENEW}

The operation that turns the set $\FinSupp(X)$ into the abelian group $\Abel(X)$
is realized by pointwise addition.  However, since the standard interpretation
of function values is \emph{power exponents}, we shall use multiplicative
notation, such that $(fg)(x) = f(x) + g(x)$, with neutral element
$1 = \varnothing\subzero$.

For the following definitions, it is useful to introduce the \emph{Iverson
  bracket}:
\begin{align*}
  [p] &= \begin{cases}
    1 & \text{if}~p~\text{holds}
    \\
    0 & \text{otherwise}
  \end{cases}
\end{align*}

For one, $\Abel$ is a functor
\begin{align*}
  \Abel(f : X \to Y)(g \in \FinSupp(X))(y \in Y) &= \sum_{x \in X} [f(x) = y] \cdot g(x)
\end{align*}
that accumulates the contents of a group element $g$ over the kernel of the map
$f$.
Furthermore, $\Abel$ is left adjoint to $\Forget$: There are two natural
transformations
\begin{align*}
  \delta &: 1 \Rightarrow \Forget\Abel & \varepsilon &: \Abel\Forget \Rightarrow 1
\end{align*}
in $\mathbf{Set}$ and $\mathbf{Ab}$, named the
\emph{unit}\footnote{Adjunction/monad units are unrelated to units of measure.}
and \emph{counit} of the adjunction, respectively, such that:
\begin{align*}
  \varepsilon\Abel \circ \Abel\delta &= 1\Abel & \Forget\varepsilon \circ \delta\Forget &= 1\Forget
\end{align*}

Their concise definitions are:
\begin{align*}
  \delta_X(x) &= \{ x \mapsto 1 \}\subzero
  & \varepsilon_G(f \in \Abel\Forget(G)) &= \mathop\lozenge_{x \in \Forget(G)} x^{(f(x))}
  \\
              &= [x = \_]
\end{align*}
Note how the finite-support property of $f$ ensures that the infinitary
operator $\lozenge$ is well-defined.

\begin{figure}
  \begin{ENOW}
  \begin{align*}
    \xymatrix{
      \FinSupp
      \ar[r]^{\delta\FinSupp}
      \ar[d]_{\FinSupp\delta}
      \ar@{=}[rd]
      &
      \FinSupp^2
      \ar[d]^{\lambda}
      \\
      \FinSupp^2
      \ar[r]_{\lambda}
      &
      \FinSupp
    }
    &&
    \xymatrix{
      \FinSupp^3
      \ar[r]^{\FinSupp\lambda}
      \ar[d]_{\lambda\FinSupp}
      &
      \FinSupp^2
      \ar[d]^{\lambda}
      \\
      \FinSupp^2
      \ar[r]_{\lambda}
      &
      \FinSupp
    }
  \end{align*}
  \caption{Laws of the Monad $\FinSupp$}
  \label{fig:abel-monad}
\end{ENOW}
\end{figure}

From the adjoint situation follows that the composite functor
$\FinSupp = \Forget\Abel$ is a monad on $\mathbf{Set}$, with \emph{unit}
$\delta$ and \emph{multiplication} $\lambda = \Forget\varepsilon\Abel$:
\begin{align*}
  \delta &: 1 \Rightarrow \FinSupp
  \\
  \lambda &: \FinSupp^2 \Rightarrow \FinSupp & \lambda_X(f)(x) &= \sum_{\mathclap{g \in \FinSupp(X)}} f(g) \cdot g(x)
\end{align*}

Monads and their operations are often best understood as abstract syntax: The
elements of $\FinSupp(X)$ encode group words, up to necessary equivalence.  The
unit $\delta$ embeds the generators into the words as \emph{literals}.  The
multiplication $\lambda$ \emph{flattens} two \emph{nested} layers of words.
The counit $\varepsilon$ embodies the more general concept of interpreting
words over a particular target group by ``multiplying things out''.

\begin{lemma}
  \label{lemma:powers}
  Any element of a free abelian group can be written, uniquely up to
  permutation of factors, as a finite product of nonzeroth powers:
  \begin{align*}
    \{ x_1 \mapsto z_1, \dots, x_n \mapsto z_n \}\subzero &= \prod_{i=1}^n \delta(x_i)^{z_i}
  \end{align*}
\end{lemma}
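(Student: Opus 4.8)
The plan is to work concretely with the carrier $\FinSupp(X)$, first deriving a pointwise formula for powers of the generator images $\delta(x)$, and then reading off both existence and uniqueness directly from the values of the underlying function. The first step I would record is the elementary fact that, since the group operation of $\Abel(X)$ is pointwise addition, the abstract $z$-th power acts as scalar multiplication: for every $g \in \FinSupp(X)$, every $z \in \mathbb{Z}$ and every $x \in X$,
\begin{align*}
  g^{(z)}(x) &= z \cdot g(x).
\end{align*}
This follows by induction on $z \geq 0$ from the base case $g^{(0)} = \varnothing\subzero$ and the recurrence $g^{(m+n)} = g^{(m)} \diamond g^{(n)}$, and extends to $z < 0$ via $g^{(-n)} = (g^{(n)})^\dag$, using that inversion negates pointwise. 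Specialising to $\delta(x_i) = \{x_i \mapsto 1\}\subzero = [x_i = \_]$ then yields $\delta(x_i)^{(z_i)}(x) = z_i \cdot [x = x_i]$, the function carrying value $z_i$ at $x_i$ and $0$ elsewhere.

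For existence I would evaluate the right-hand product pointwise. Because multiplication in $\Abel(X)$ is pointwise addition, for arbitrary $x \in X$ we have
\begin{align*}
  \Bigl( \prod_{i=1}^n \delta(x_i)^{(z_i)} \Bigr)(x) &= \sum_{i=1}^n z_i \cdot [x = x_i].
\end{align*}
Since the generators $x_1, \dots, x_n$ are distinct, at most one Iverson bracket is nonzero, so this sum equals $z_j$ when $x = x_j$ and $0$ otherwise. That is precisely the function denoted $\{x_1 \mapsto z_1, \dots, x_n \mapsto z_n\}\subzero$, which establishes the displayed equation.

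For uniqueness up to permutation, the same computation shows that any product of the stated form\,---\,with distinct generators and \emph{nonzero} exponents\,---\,has support exactly $\{x_1, \dots, x_n\}$ and takes the value $z_i$ at $x_i$. Hence the collection of generator--exponent pairs is recovered intrinsically from the element $f$ as $\{(x, f(x)) \mid x \in \operatorname{supp}(f)\}$; this is nothing but the one-to-one correspondence $f \mapsto f\subzero$ already noted. Two representations of the same element must therefore carry the same factors and can differ only in their order. The main obstacle is the bookkeeping of the first step: faithfully translating the parenthesised power notation into scalar multiplication, handling negative exponents through inversion, and keeping the finite-support condition in view so that every sum stays well-defined. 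Once that pointwise formula is in hand, both halves of the claim are immediate.
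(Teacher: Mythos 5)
Your proof is correct. The paper states Lemma~\ref{lemma:powers} without any proof, treating it as a standard fact about free abelian groups, so there is no in-paper argument to compare against; your concrete pointwise verification\,---\,establishing $g^{(z)}(x) = z \cdot g(x)$, specializing to the generator images $\delta(x_i)$, and then reading off existence and uniqueness from the support and values of the underlying function in $\FinSupp(X)$\,---\,is precisely the natural argument for the paper's chosen construction of the carrier, and it invokes both implicit hypotheses (distinctness of the $x_i$, nonvanishing of the $z_i$) exactly where they are needed.
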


\subsection{Pairs}

Both the Cartesian product of two sets and the direct sum of two abelian groups
are instances of the categorial binary product.  All of their relevant
structures and operations arise from a \emph{diagonal functor}
\begin{align*}
  \Delta_C &: C \to C \times C & \Delta_C(X) &= (X, X) & \Delta_C(f : X \to Y) = (f, f)
\end{align*}
into the product category of some \NEW{suitable} category $C$ with itself, and
the fact that it has a right adjoint $(\_ \times \_)$.  Namely, there are unit
and counit natural transformations:
\begin{align*}
  \psi &: 1 \Rightarrow (\_ \times \_)\Delta & \pi &: \Delta(\_ \times \_) \Rightarrow 1
  \\
  \psi_X(x) &= (x, x) & \pi_{(X, Y)} &= (\pi_1, \pi_2) \quad\text{where}\quad
  \begin{aligned}
    \pi_1(x, y) &= x
    \\
    \pi_2(x, y) &= y
  \end{aligned}
\end{align*}

\begin{figure}
  \begin{ENOW}
  \begin{align*}
    \xymatrix{
      && Y
      \\
      X
      \ar[rru]^{f}
      \ar[rrd]_{g}
      \ar@{-->}[rr]^{\exists! \langle f, g \rangle}
      &&
      Y \times Z
      \ar[u]_{\pi_1}
      \ar[d]^{\pi_2}
      \\
      && Z
    }
  \end{align*}
  \caption{Cartesian Pair Operations}
  \label{fig:pairs}
\end{ENOW}
\end{figure}

In traditional notation, the counit components (projections) $\pi_1$ and $\pi_2$
are used separately, whereas the pairing of morphisms is written with angled
brackets:
\begin{align*}
  \langle f : X \to Y, g : X \to Z \rangle &= (f \times g) \circ \psi_X : X \to (Y \times Z)
\end{align*}

\subsection{Pairing With an Abelian Group}

Consider the functor that pairs elements of arbitrary sets $X$ with elements of
(the carrier of) a fixed abelian group $G$:
\begin{align*}
  \Pair[G](X) &= \Forget(G) \times X & \Pair[G](f) &= \mathrm{id}_{\Forget(G)} \times f
\end{align*}

From the group structure of $G$ (or actually any monoid), we obtain a
simple monad, with unit $\eta_G$ and multiplication $\mu_G$:
\begin{align*}
  \eta_G &: 1 \Rightarrow \Pair[G] & \eta_{G,X}(x) &= (e, x)
  \\
  \mu_G &: {\Pair[G]}^2 \Rightarrow \Pair[G] & \mu_{G,X}\bigl(a, (b, x)\bigr) &= (a \diamond b, x)
\end{align*}

When the second component is changed to abelian groups as well, a corresponding
functor on $\mathbf{Ab}$ is obtained, which creates the direct sum of abelian
groups instead of the Cartesian product of carrier sets:
\begin{align*}
  \GPair[G](H) &= G \times H & \GPair[G](f) &= \mathrm{id}_G \times f
\end{align*}

The two monads are related by the forgetful functor:
\begin{align*}
  \Pair[G] \Forget &= \Forget \GPair[G]
\end{align*}

For the present discussion, there is no need to spell out particular
adjunctions that give rise to these monads, nor their counits.

\subsection{Monad Composition}

There is a natural transformation in $\mathbf{Ab}$ that exchanges pairing and
free abelian group construction:
\begin{align*}
  \beta_G &: \Abel\, \Pair[G] \Rightarrow \GPair[G]\,\Abel & \beta_{G,X} &= \bigl\langle \varepsilon_G \circ \Abel(\pi_1), \Abel(\pi_2) \bigr\rangle
\end{align*}

\begin{figure}
  \begin{ENOW}
  \begin{equation*}
    \xymatrix{
      &
      \Abel\bigl(\Forget(G)\bigr)
      \ar[rr]^{\varepsilon_G}
      &&
      G
      \\
      \Abel\bigl(\Pair[G](X)\bigr)
      \ar@{=}[r]
      &
      \Abel\bigl(\Forget(G) \times X\bigr)
      \ar[u]^{\Abel(\pi_1)}
      \ar[rd]_{\Abel(\pi_2)}
      \ar@{-->}[rr]^{\exists!\beta_G}
      &&
      G \times \Abel(X)
      \ar@{=}[r]
      \ar[u]^{\pi_1}
      \ar[ld]_{\pi_2}
      &
      \GPair[G]\bigl(\Abel(X)\bigr)
      \\
      &&
      \Abel(X)
    }
  \end{equation*}
  \caption{Construction of $\beta$}
  \label{fig:beta}
\end{ENOW}
\end{figure}

Back down in $\mathbf{Set}$, this is a natural transformation
\begin{align*}
  \Forget(\beta_G) = \beta_G : \FinSupp\, \Pair[G] \Rightarrow \Pair[G]\, \FinSupp
\end{align*}
which can be shown to be a distributive law between the monads $\Pair[G]$ and
$\FinSupp$:
\begin{align*}
  \beta_G \circ \delta \Pair[G] &= \Pair[G] \delta
  & \beta_G \circ \FinSupp \mu_G &= \lambda \FinSupp \circ \Pair[G] \beta_G \circ \beta_G \Pair[G]
  \\
  \beta_G \circ \FinSupp \eta &= \eta \FinSupp
  & \beta_G \circ \lambda \Pair[G] &= \Pair[G] \mu_G \circ \beta_G \FinSupp \circ \FinSupp \beta_G
\end{align*}

\begin{figure}
  \begin{ENOW}
  \begin{align*}
    \xymatrix{
      &
      \FinSupp
      \ar[ld]_{\FinSupp\eta_G}
      \ar[rd]^{\eta_G\FinSupp}
      \\
      \FinSupp\,\Pair[G]
      \ar[rr]_{\beta_G}
      &&
      \Pair[G]\FinSupp
      \\
      &
      \Pair[G]
      \ar[lu]^{\delta\Pair[G]}
      \ar[ru]_{\Pair[G]\delta}
    }
    &&
    \xymatrix{
      \FinSupp\,\Pair[G]^2
      \ar[r]^{\beta_G\Pair[G]}
      \ar[d]_{\FinSupp\mu_G}
      &
      \Pair[G]\FinSupp\,\Pair[G]
      \ar[r]^{\Pair[G]\beta_G}
      &
      \Pair[G]^2\FinSupp
      \ar[d]^{\mu_G\FinSupp}
      \\
      \FinSupp\,\Pair[G]
      \ar[rr]_{\beta_G}
      &&
      \Pair[G]\FinSupp
      \\
      \FinSupp^2\Pair[G]
      \ar[r]_{\FinSupp\beta_G}
      \ar[u]^{\lambda\Pair[G]}
      &
      \FinSupp\,\Pair[G]\FinSupp
      \ar[r]_{\beta_G\FinSupp}
      &
      \Pair[G]\FinSupp^2
      \ar[u]_{\Pair[G]\lambda}
    }
  \end{align*}
  \caption{$\beta$ is a Distributive Law}
\end{ENOW}
\end{figure}

This turns the composite functor $\Pair[G] \FinSupp$ into a monad,
with unit $\theta_G$ and multiplication $\xi_G$:
\begin{align*}
  \theta_G &: 1 \Rightarrow \Pair[G]\, \FinSupp & \theta_G &= \eta_G\delta
  \\
  \xi_G &: (\Pair[G]\, \FinSupp)^2 \Rightarrow \Pair[G]\, \FinSupp & \xi_G &= \mu_G\lambda \circ \Pair[G]\beta_G\FinSupp
\end{align*}

\begin{figure}
  \begin{ENOW}
  \begin{align*}
    \xymatrix{
      &
      \FinSupp
      \ar[rd]^{\eta_G\FinSupp}
      \\
      1
      \ar[ru]^{\delta}
      \ar[rd]_{\eta_G}
      \ar@{-->}[rr]^{\theta_G}
      &&
      \Pair[G]\FinSupp
      \\
      &
      \Pair[G]
      \ar[ru]_{\Pair[G]\delta}
    }
    &&
    \xymatrix{
      &&&
      \Pair[G]^2\FinSupp
      \ar[rd]^{\mu_G\FinSupp}
      \\
      (\Pair[G]\FinSupp)^2
      \ar[rr]^{\Pair[G]\beta_G\FinSupp}
      &&
      \Pair[G]^2\FinSupp^2
      \ar[ru]^{\Pair[G]^2\lambda}
      \ar[rd]_{\mu_G\FinSupp^2}
      \ar@{-->}[rr]^{\xi_G}
      &&
      \Pair[G]\FinSupp
      \\
      &&&
      \Pair[G]\FinSupp^2
      \ar[ru]_{\Pair[G]\lambda}
    }
  \end{align*}
  \caption{Construction of Composite Monad $\Pair[G]\FinSupp$}
\end{ENOW}
\end{figure}

\subsection{Concrete Examples}

The structures and operations introduced so far are all natural in the
general categorial sense, but range from quite simple to rather
nontrivial in complexity.  To illustrate this, the following section
gives concrete examples with actual elements.

In all of the following formulae, let $X = \{ a, b, c, d, \dots \}$.

Consider two elements $g_1, g_2 \in \FinSupp(X)$.
Thanks to the finite map notation, their support is clearly visible:
\begin{align*}
  g_1 &= \{ a \mapsto 2, b \mapsto -1, c \mapsto 1 \}\subzero
  &
    g_2 &= \{ b \mapsto 2, c \mapsto -1, d \mapsto -2 \}\subzero
  \\
  \operatorname{supp}(g_1) &= \{ a, b, c \}
  &
  \operatorname{supp}(g_2) &= \{ b, c, d \}
\end{align*}

The group operation acts by pointwise addition, zeroes vanish in the notation:
\begin{align*}
  g_1g_2 &= \{ a \mapsto 2, b \mapsto 1, d \mapsto -2 \}\subzero
\end{align*}

There is no general relationship between $\operatorname{supp}(g_1)$,
$\operatorname{supp}(g_2)$ and $\operatorname{supp}(g_1g_2)$, because of
cancellation.
Now consider a map $f : X \to X$, such that:
\begin{align*}
  f(a) = f(b) = a && f(c) = f(d) = b
\end{align*}

Mapping $f$ over group elements yields
\begin{align*}
  \begin{aligned}
    \Abel(f)(g_1) &= \{ a \mapsto 1, b \mapsto 1 \}\subzero
    \\
    \Abel(f)(g_2) &= \{ a \mapsto 2, b \mapsto -3 \}\subzero
  \end{aligned}
  &&
  \Abel(f)(g_1g_2) &= \{ a \mapsto 3, b \mapsto -2 \}\subzero = \Abel(f)(g_1)\, \Abel(f)(g_2)
\end{align*}
where the coefficients of identified elements are lumped together.
Note that $\Abel(f)(g)$ is very different from (the contravariant) $g \circ f$.
Monad multiplication takes integer-valued maps of integer-valued maps,
distributes coefficients, and combines the results:
\begin{align*}
  \lambda_X\bigl(\{ g_1 \mapsto -2, g_2 \mapsto -1 \}\subzero\bigr) &= g_1^{-2}g_2^{-1} =
  \{ a \mapsto -4, c \mapsto -1, d \mapsto 2 \}\subzero
\end{align*}

The counit takes an integer-valued map of group elements, and evaluates it as a
group expression:
\begin{align*}
  \varepsilon_\RatM\bigl(\underbrace{\{ 2 \mapsto -3, 3 \mapsto 2, \tfrac25 \mapsto -1 \}}_{q_1}\bigr) &= 2^{-3}\, 3^2\, \bigl(\tfrac25\bigr)^{-1} = \frac{45}{16}
\end{align*}

Note how the integer coefficients in $\FinSupp(X)$ function as exponents when
the target group is multiplicative.
The operations of the pairing functor $\Pair[G]$ are straightforward:
\begin{align*}
  \eta_{\RatM,X}(a) &= (1, a)
  &
  \mu_{\RatM,X}\bigl(2, (3, b)\bigr) &= (2 \cdot 3, b) = (6, b)
  &
  \Pair[\RatM](f)(5, c) &= \bigl(5, f(c)\bigr) = (5, b)
\end{align*}

By contrast, the distributive law $\beta$ functions in a fairly complex way,
splitting the left hand sides of a group element map:
\begin{align*}
  \beta_{\RatM, X}\bigl(\bigl\{ (2, a) \mapsto -3, (3, b) \mapsto 2, \bigl(\tfrac25, c\bigr) \mapsto -1 \bigr\}\bigr)
  &=
  \bigl(\varepsilon_\RatM(q_1), \underbrace{\{ a \mapsto -3, b \mapsto 2, c \mapsto -1 \}}_{g_3}\bigr)
\end{align*}

The operations of the composite monad $\Pair[G]\FinSupp$ just combine the above.
\begin{align*}
  \theta_\RatM(c) &= \bigl(1, \{ c \mapsto 1\}\subzero \bigr)
  \\[1ex]
  &\hphantom{{}={}} \xi_{\RatM, X}\Bigl(2, \bigl\{ \bigl(3, \{ a \mapsto 5 \}\subzero\bigr) \mapsto -2, \bigl(7, \{ b \mapsto -1 \}\subzero\bigr) \mapsto 1\bigr\}\subzero\Bigr)
  \\
  &= \bigl(2 \cdot 3^{-2} \cdot 7^1, \{ a \mapsto 5 \cdot -2, b \mapsto -1 \cdot 1 \}\subzero\bigr)
  \\
  &= \bigl(\tfrac{14}{9}, \{ a \mapsto -10, b \mapsto -1 \}\subzero\bigr)
\end{align*}

\section{Data Structures}

\subsection{Generic Structures}
\label{structs}

This section defines the structures of a formal model of dimensions and units.
These are \emph{generic}: no interpretation of base symbols is presupposed.
This is achieved by $\delta$, $\lambda$, $\varepsilon$, $\pi$, $\psi$,
$\eta$, $\mu$, $\beta$, $\theta$, $\xi$ all being \emph{natural
  transformations}: They are parametric families of maps that transform data
between \emph{shapes} specified by functors, in a way that is logically
independent of the elementary \emph{content} specified by the type argument.

\begin{definition}[Dimension]
  \label{def:dim}
  The \emph{dimensions} $\Dim$ are the free abelian group over $\Dim\subbase$, a
  given finite set of \emph{base dimensions}:
  \begin{align*}
    \Dim &= \Abel(\Dim\subbase)
  \end{align*}
\end{definition}

\begin{example}[SI Dimensions]
  The SI/ISQ recognizes seven physical base dimensions,
  \begin{math}
    \Dim\subbase^{\mathrm{SI}} = \{ \mathsf{L}, \mathsf{T}, \mathsf{M}, \mathsf{I}, \mathsf{\Theta}, \mathsf{N}, \mathsf{J} \}
  \end{math}.   From these, compound dimensions can be
  formed; for example, quantities of \emph{thermodynamical entropy} are associated with
  \begin{math}
    d_H = \{ \mathsf{L} \mapsto 2, \mathsf{M} \mapsto 1, \mathsf{T} \mapsto -2, \mathsf{\Theta} \mapsto -1 \}\subzero
  \end{math}.  However, the actual physical interpretation of these symbols need
  not be considered at all for formal analysis.
\end{example}

\begin{remark} We do not use the product-of-powers notation of
  Lemma~\ref{lemma:powers} for concrete elements.  Neither do we recommend that
  use for formal analysis, because it is prone to causing subtle
  misunderstandings, in particular the highly overloaded expression for the
  empty case, $1 = \varnothing\subzero$.  Confer Section~\ref{1-aargh}.
\end{remark}

\begin{definition}[Root Unit]
  \label{def:root}
  The \emph{root units} $\Unit\subroot$ are the free abelian group over
  $\Unit\subbase$, a given finite set of \emph{base units}:
  \begin{align*}
    \Unit\subroot &= \Abel(\Unit\subbase)
  \end{align*}
  \NEW{Thus, root units have a structure analogous to dimensions, compare
    Definition~\ref{def:dim}.  The reason that they are called ``root units'' and
    not just ``units'' is that prefixes are not yet accounted for; see
    Definition~\ref{def:pref} below.}
\end{definition}

\begin{ENOW}
  \begin{example}[Base Units and the SI]
    We use the qualifier \emph{base} always in a strictly syntactic sense, namely
    to denote the atomic symbols over which compound expressions, or group
    words, can be formed.  By contrast, the SI/ISQ attaches to it a pragmatic
    meaning, namely to denote entities that require no definition by reduction
    to other, more basic entities.  While these contending concepts overlap to
    some degree, there are notable caveats and exceptions in both directions.

    On the one hand, \cite{ISO80000-1} distinguishes one base unit per base
    dimension.  Six of these are also base units in the formal syntactic
    sense.  The seventh however, for historical reasons, is not: The
    distinguished unit of \emph{mass} is the \emph{kilogram} ($\mathrm{kg}$),
    which is syntactically a compound derived from the simpler unit \emph{gram}
    ($\mathrm{g}$); see Example~\ref{ex:kilogram} below.
    \begin{equation*}
      \Unit\subbase^{\mathrm{SI}_0} = \{ \mathrm{m}, \mathrm{s}, \mathrm{kg}, \mathrm{A}, \mathrm{K}, \mathrm{mol}, \mathrm{cd} \}
    \end{equation*}

    On the other hand, SI units that have both a dedicated symbol and a
    definition are not considered base units in the pragmatic sense, but
    \emph{derived units with special names}.  For example, the \emph{ohm} is
    is associated with both the symbol $\mathrm{\Omega}$ and the defining term
    $\mathrm{kg}{\cdot}\mathrm{m}^2{\cdot}\mathrm{s}^{-3}{\cdot}\mathrm{A}^{-2}$.
    
    The devil is in the details: Before the major revision of SI definitions by
    the 26th CGPM in 2018, the SI standard stated that ``the ohm, symbol
    $\mathrm{\Omega}$, is uniquely defined by the relation
    $\mathrm{\Omega} =
    \mathrm{kg}{\cdot}\mathrm{m}^2{\cdot}\mathrm{s}^{-3}{\cdot}\mathrm{A}^{-2}$''\;\cite[§2.1.1]{sib8},
    implying that the two expressions are distinct, relatable entities.  By
    contrast, the statement has been dropped from the later version \cite{sib9}.
    There, the pragmatic foundation is discussed thus:

    \begin{quote}
      ``[\dots] this distinction is, in principle, not needed, since all units,
      \emph{base} as well as \emph{derived units}, may be constructed directly
      from the defining constants.  Nevertheless, the concept of base and
      derived units is maintained because it is useful and historically well
      established [\dots]''\;\cite[§2.3]{sib9}
    \end{quote}
    Compatibility with the ISQ, where base units are in one-to-one
    correspondence with base quantities and hence base dimensions, is also
    cited.  See Example~\ref{ex:si-well} and Section~\ref{si-co} for a
    formal discussion of the traditional base--derived dichotomy in the SI/ISQ.

    Since the explanatory power of our model depends essentially on a sharp
    distinction between syntax and semantics, and a full abstraction from
    pragmatics, we hold that $\mathrm{\Omega}$ is a base unit, whereas
    $\mathrm{kg}{\cdot}\mathrm{m}^2{\cdot}\mathrm{s}^{-3}{\cdot}\mathrm{A}^{-2}$
    is a (derived/compound) root unit, and that the two can and must be related
    explicitly.
  \end{example}
\end{ENOW}

\begin{ENOW}
  \begin{example}[Root Units and SI Coherence]
    \label{ex:root-si}
    The concept of root units is not explicit in the SI/ISQ, but a construct of
    our remodeling.  Again, it overlaps partially with a concept found there,
    namely \emph{coherent units}.  And again, the differences are mostly due to
    a distinction between syntax and semantics.

    Coherence has two alternative characterizations (semiformal definitions),
    and standards disagree about their relative logical precedence; the SI gives
    the first as a definition and the second as a corollary \cite[§2.3.4]{sib9},
    whereas the ISQ has them in reverse \cite[§6]{ISO80000-1}:

    In the first, semantic approach, defined units are considered coherent, if
    their definition does not imply a numerical factor other than one.  See
    Definition~\ref{def:val-base} and Remark~\ref{rem:conv} below for two
    distinct semantic sources of numerical factors.

    In the second, syntactic approach, defined units are be considered coherent
    if the form of their definition has the same shape as the underlying
    compound quantities or dimensions.  This notion can be formalized: Define a
    function $u : \Dim\subbase \to \Unit\subbase$ that chooses a distinguished
    base unit per base dimension, such as in \cite[§6.5.3]{ISO80000-1}.  Let
    $d \in \Dim$ be the dimension associated with a quantity $q$.  Then the
    coherent (root) unit for $q$ is $\Abel(u)(d)$.
    
    Unfortunately the latter, seemingly syntactic definition is clearly meant
    only up to semantic equivalence in the SI/ISQ: For example, the unit symbol
    $\mathrm{\Omega}$ is listed as coherent, but is clearly not of the proposed
    form, whereas the representation of its intended definition as a root unit,
    \begin{equation*}
      \{ \mathrm{kg} \mapsto 1, \mathrm{m} \mapsto 2, \mathrm{s} \mapsto -3, \mathrm{A} \mapsto -2 \}\subzero
    \end{equation*}
    
    \eject
    
    \noindent
    matches the shape of the dimension of the quantity \emph{electric
      resistance}
    \begin{equation*}
      \{ \mathsf{M} \mapsto 1, \mathsf{L} \mapsto 2, \mathsf{T} \mapsto -3, \mathsf{I} \mapsto -2 \}\subzero
    \end{equation*}
    perfectly, given $u(\mathrm{kg}) = \mathsf{M}$ etc.

    In order to disentangle syntax and semantics in our model, we shall use the
    novel concept of root units for the syntactic characterization, which is
    narrower than traditional coherence, and remodel the semantic aspect later;
    see Definition~\ref{def:unit-coherent} below.
    
  \end{example}
\end{ENOW}

\begin{definition}[Prefix]
  \label{def:pref}
  The \emph{prefixes} $\Pref$ are the free abelian group over
  $\Pref\subbase$, a given finite set of \emph{base prefixes}:
  \begin{align*}
    \Pref &= \Abel(\Pref\subbase)
  \end{align*}
\end{definition}

\begin{example}[SI Prefixes]
  \label{ex:si-pref}
  Three families of prefixes are recognized in combination with the SI units.
  The symbols come with associated numerical values, discussed in detail in
  Section~\ref{props} below.  For now it suffices to simply name
  them:
  \begin{align*}
    \Pref\subbase^{\mathrm{SI}} &= \left\{
      \begin{array}{l}
        \mathrm{d}, \mathrm{c}, \mathrm{m}, \mathrm{\upmu}, \mathrm{n}, \mathrm{p}, \mathrm{f}, \mathrm{a}, \mathrm{z}, \mathrm{y}, \mathrm{r}, \mathrm{q},
        \\
        \mathrm{da}, \mathrm{h}, \mathrm{k}, \mathrm{M}, \mathrm{G}, \mathrm{T}, \mathrm{P}, \mathrm{E}, \mathrm{Z}, \mathrm{Y}, \mathrm{R}, \mathrm{Q},
        \\
        \mathrm{ki}, \mathrm{Mi}, \mathrm{Gi}, \mathrm{Ti}, \mathrm{Pi}, \mathrm{Ei}, \mathrm{Zi}, \mathrm{Yi}
      \end{array}
    \right\}
  \end{align*}
  The symbol $\mathrm{m}$ occurs as both a base prefix and a base unit in the SI
  vocabulary.  However, this causes ambiguity issues only for parsing
  traditional notations, where prefix and unit symbols are simply concatenated:
  \NEW{For example, $\Usym{m}\hskip 1pt\Usym{s}^{-1}$ could parse either as
    \emph{meter per second} or as \emph{per millisecond}; see
    \cite[§7.2.2]{ISO80000-1}.} The formal semantics discussed here are not
  affected.  See \NEW{also} Example~\ref{ex:precip} below for unambiguous usage.
\end{example}

\begin{remark}
  Combinations of multiple base prefixes are forbidden in modern scientific
  notation.  However, there are good reasons to deviate from that rule:
  \begin{itemize}
  \item The empty prefix is always allowed, and modeled adequately as
    $\varnothing\subzero$.
  \item Some traditional usages of \emph{double} prefixes are known, and
    double prefixes beyond \emph{quetta-} have recently been resuggested
    \cite{BROWN2019237}.
  \item The inner logic of each SI prefix family is a geometric sequence,
    confer Section~\ref{prefix-aargh} \NEW{and Example~\ref{ex:si-val} below}.
  \item Last but not least, composite prefixes arise virtually in the algebra of
    composite units.  \NEW{For example, semantic analysis of the
      \emph{centilitre} discussed in Section~\ref{eq-aargh} by reduction to the
      coherent SI unit of volume, the \emph{cubic meter}, involves the effective
      composite prefix $\Usym{c\,d^3}$.  For a formalization of the procedure,
      see Definitions~\ref{def:subnorm} and \ref{def:norm} below.}
  \end{itemize}
  Thus, the generalization of prefixes to the free abelian group is a
  theoretical simplification and unification.
\end{remark}

\begin{definition}[Preunit]
  The \emph{preunits} $\Unit\subpre$ are the Cartesian pairs of a prefix,
  forgetting the group structure, and a base unit:
  \begin{align*}
    \Unit\subpre &= \Pair[\Pref](\Unit\subbase)
  \end{align*}
\end{definition}

\begin{example}
  \label{ex:kilogram}
  The SI unit \emph{kilogram}, already discussed above, is represented formally
  as a preunit,
  \begin{math}
    \mathrm{kg} = \bigl( \delta_{\Pref\subbase}(\mathrm{k}), \mathrm{g} \bigr)
  \end{math}.
\end{example}

\begin{definition}[Unit]
  The \emph{units} $\Unit$ are the free abelian group over the preunits:
  \begin{align*}
    \Unit &= \Abel(\Unit\subpre)
  \end{align*}
\end{definition}

\begin{remark}
  Unlike in the preceding constructions, the generator set $\Unit\subpre$ is
  generally infinite.  Fortunately, this causes no problems for the remainder of
  the present work; see Theorem~\ref{thm:closed-fg} below.
\end{remark}

\begin{definition}[\NEW{Embedding of Base Units}]
Since the structure $\Unit$ arises from the composition of two monadic functors,
the composition of monad unit maps is bound to occur often in its use.  We write
just the bracket $\de\_ : \Unit\subbase \to \Unit$ for \NEW{the horizontal
  composition of unit transformations, i.e.,} the precise but verbose map:
\begin{equation*}
  (\delta\eta_{\Pref})_{\Unit\subbase} = \delta_{\Unit\subpre} \circ \eta_{\Pref, \Unit\subbase}
  = \FinSupp(\eta_{\Pref, \Unit\subbase}) \circ \delta_{\Unit\subbase}
\end{equation*}
\NEW{More concretely, this boils down to the construction:}
\begin{equation*}
  \NEW{\de{x} = \{ (\varnothing\subzero, x) \mapsto 1 \}\subzero}
\end{equation*}
\end{definition}

In other occurrences, the subscripts of $\delta$ and $\eta$ may be
omitted in applications, and can be inferred from the context.

\begin{example}
  \label{ex:newton}%
  There are additional so-called \emph{derived} units in the SI, which have a
  base unit symbol of their own, but whose semantics are defined by reduction to
  other (composite) units, for example the \emph{newton},
  \begin{math}
    \de{\mathrm{N}} \equiv \{ \mathrm{kg} \mapsto 1, \eta(\mathrm{m}) \mapsto 1, \eta(\mathrm{s}) \mapsto -2 \}\subzero
  \end{math}, where $\equiv$ is some semantic equivalence relation yet to be
  specified; see Definitions~\ref{def:norm-equiv}, \ref{def:eval-equiv},
  \ref{def:unit-coherent} and \ref{def:well-def} below.
\end{example}

\begin{definition}[Prefix/Root of a Unit]
  Any unit can be decomposed into a prefix and a root unit, and the root units
  embedded back into the units, by means of natural group homomorphisms:
  \begin{align*}
    \operatorname{pref} &: \Unit \to \Pref & \operatorname{pref} &= \pi_1 \circ \beta_{\Pref,\Unit\subbase} = \varepsilon_\Pref \circ \Abel(\pi_1)
    \\
    \operatorname{root} &: \Unit \to \Unit\subroot & \operatorname{root} &= \pi_2 \circ \beta_{\Pref,\Unit\subbase} = \Abel(\pi_2)
    \\
    \operatorname{unroot} &: \Unit\subroot \to \Unit & \operatorname{unroot} &= \Abel(\eta_{\Pref,\Unit\subbase})
    \\
    \operatorname{strip} &: \Unit \to \Unit & \operatorname{strip} &= \operatorname{unroot} \circ \operatorname{root}
  \end{align*}
\end{definition}

\begin{lemma}
  \label{lemma:strip-idemp}
  The function $\operatorname{root}$ is the left inverse of
  $\operatorname{unroot}$.  Their flipped composition, the function
  $\operatorname{strip}$, is idempotent.
\end{lemma}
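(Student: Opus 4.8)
The plan is to work with the functorial presentations $\operatorname{root} = \Abel(\pi_2)$ and $\operatorname{unroot} = \Abel(\eta_{\Pref,\Unit\subbase})$ supplied in the definition, rather than the $\beta$-based formula, since functoriality then reduces the whole claim to a triviality at the level of sets. First I would form the composite $\operatorname{root} \circ \operatorname{unroot}$ and pull the functor out, using that $\Abel$ preserves composition:
\begin{align*}
  \operatorname{root} \circ \operatorname{unroot} = \Abel(\pi_2) \circ \Abel(\eta_{\Pref,\Unit\subbase}) = \Abel(\pi_2 \circ \eta_{\Pref,\Unit\subbase}).
\end{align*}

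The crux is then the inner map $\pi_2 \circ \eta_{\Pref,\Unit\subbase} : \Unit\subbase \to \Unit\subbase$. Unwinding the definition of the unit of $\Pair[\Pref]$, namely $\eta_{\Pref,X}(x) = (e, x)$, and of the second projection, $\pi_2(p, x) = x$, I get pointwise $\pi_2\bigl(\eta_{\Pref,\Unit\subbase}(u)\bigr) = \pi_2(e, u) = u$ for every base unit $u$, hence $\pi_2 \circ \eta_{\Pref,\Unit\subbase} = \mathrm{id}_{\Unit\subbase}$. Since $\Abel$ also preserves identities, $\operatorname{root} \circ \operatorname{unroot} = \Abel(\mathrm{id}_{\Unit\subbase}) = \mathrm{id}_{\Unit\subroot}$, which is exactly the assertion that $\operatorname{root}$ is a left inverse of $\operatorname{unroot}$.

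For idempotence I would simply substitute the definition $\operatorname{strip} = \operatorname{unroot} \circ \operatorname{root}$ and reassociate so that the left-inverse identity just proved appears in the middle:
\begin{align*}
  \operatorname{strip} \circ \operatorname{strip} = \operatorname{unroot} \circ (\operatorname{root} \circ \operatorname{unroot}) \circ \operatorname{root} = \operatorname{unroot} \circ \operatorname{root} = \operatorname{strip}.
\end{align*}

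I do not expect a genuine obstacle: the only real content is observing that the second projection annihilates the neutral prefix that the monad unit $\eta_{\Pref}$ inserts, which is an instance of the unit behaviour built into $\Pair[\Pref]$. The single point demanding care is to commit to the functorial form $\operatorname{root} = \Abel(\pi_2)$; running the computation through $\pi_2 \circ \beta_{\Pref,\Unit\subbase}$ instead would force a detour through the definition of $\beta$ and of the pairing $\langle \_, \_ \rangle$, whereas the equality $\pi_2 \circ \beta_{\Pref,\Unit\subbase} = \Abel(\pi_2)$ is itself nothing more than the projection law $\pi_2 \circ \langle f, g \rangle = g$.
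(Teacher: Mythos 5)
Your proposal is correct and follows essentially the same route as the paper: both use functoriality of $\Abel$ to reduce $\operatorname{root} \circ \operatorname{unroot}$ to $\Abel(\pi_2 \circ \eta_{\Pref,\Unit\subbase}) = \Abel(\mathrm{id}_{\Unit\subbase})$, and then derive idempotence of $\operatorname{strip}$ by re-associating to expose the left-inverse identity. The only difference is that you spell out the pointwise computation $\pi_2(e,u) = u$, which the paper leaves implicit.
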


\begin{shortproof} \leavevmode
\vspace{-3ex}
  \begin{gather*}
    \operatorname{root} \circ \operatorname{unroot} = \Abel(\pi_2) \circ \Abel(\eta_{\Pref,\Unit\subbase})
    = \Abel(\pi_2 \circ \eta_{\Pref,\Unit\subbase})
    = \Abel(\mathrm{id}_{\Unit\subbase})
    = \mathrm{id}_{\Unit\subroot}
    \\
    \operatorname{strip} \circ \operatorname{strip} = \operatorname{unroot} \circ \underbrace{\operatorname{root} \circ \operatorname{unroot}}_{\mathrm{id}_{\Unit\subroot}} \circ \operatorname{root}
    = \operatorname{unroot} \circ \operatorname{root} = \operatorname{strip}
  \end{gather*}
  \vspace{-3ex}
  
  {~}
\end{shortproof}

\begin{definition}[Root Equivalence]
  \label{def:root-equiv}
  Two units are called \emph{root equivalent}, written $\simeq\subroot$, iff
  their roots coincide:
  \begin{align*}
    u \simeq\subroot v \iff \operatorname{root}(u) = \operatorname{root}(v)
  \end{align*}
\end{definition}

The units as defined above are a faithful semantic model of traditional
notations, confer \cite{sib9}.  However, an algebraically more
well-behaved structure can be derived by transposing prefixes
and free abelian group construction:

\begin{definition}[Normalized Unit]
  \label{def:subnorm}
  The \emph{normalized units} $\Unit\subnorm$ are the direct sum of a
  prefix and a root unit:
  \begin{align*}
    \Unit\subnorm &= \GPair[\Pref](\Unit\subroot)
  \end{align*}
\end{definition}

\begin{definition}[Normalization]
  \label{def:norm}
  The normalized units are derived naturally from the units proper, in terms of
  the natural group homomorphism $\beta$:
  \begin{align*}
    \operatorname{norm} &: \Unit \to \Unit\subnorm & \operatorname{norm} &= \beta_{\Pref,\Unit\subbase} = \langle \operatorname{pref}, \operatorname{root} \rangle
  \end{align*}
\end{definition}

The composite monad structure of
$\Forget(\Unit\subnorm) = \Pair[\Pref]\FinSupp(\Unit\subbase)$ comes with a
multiplication $\xi_{\Pref,\Unit\subbase}$, and thus provides exactly the
compositionality lacking in the traditional model $\Unit$: In a normalized unit,
other normalized units can be substituted for base units and just ``multiplied
out''.  In $\Unit$, this is formally impossible:

\begin{theorem}
  \label{thm:no-strength}
  The group $\Unit = \FinSupp \Pair[\Pref](\Unit\subbase)$ cannot be given a
  composite monadic structure in the same way as $\Unit\subnorm$; except for
  degenerate cases, there is no distributive law
  $\Pair[G]\, \FinSupp \Rightarrow \FinSupp\, \Pair[G]$ that is also a group
  homomorphism.
\end{theorem}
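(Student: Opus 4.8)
The plan is to assume, for a nontrivial group $G$, that a distributive law $\tau : \Pair[G]\FinSupp \Rightarrow \FinSupp\Pair[G]$ of the monad $\Pair[G]$ over the monad $\FinSupp$ exists whose components are group homomorphisms, and to derive a contradiction from just its two unit laws. First I would fix the typing. Since $\Pair[G]\Forget = \Forget\GPair[G]$, the domain of each component is $\Pair[G]\FinSupp(X) = \Forget\bigl(\GPair[G]\Abel(X)\bigr)$, the carrier of the direct sum $G \times \Abel(X)$, while the codomain $\FinSupp\Pair[G](X) = \Forget\bigl(\Abel(\Forget(G)\times X)\bigr)$ is the carrier of a free abelian group; thus ``group homomorphism'' asks each $\tau_X$ to respect these two group structures.

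Next I would read off the unit laws of a distributive law $ST \Rightarrow TS$ with $S = \Pair[G]$ and $T = \FinSupp$. The $\FinSupp$-unit law $\tau \circ \Pair[G]\delta = \delta\Pair[G]$ evaluates on a generator to
\begin{align*}
  \tau_X\bigl(a, \delta(x)\bigr) &= \delta(a, x) = \{ (a,x) \mapsto 1 \}\subzero ,
\end{align*}
for all $a \in \Forget(G)$ and $x \in X$, while the $\Pair[G]$-unit law $\tau \circ \eta_G\FinSupp = \FinSupp\eta_G$, using naturality of $\delta$, gives
\begin{align*}
  \tau_X\bigl(e, \delta(x)\bigr) &= \delta(e, x) = \{ (e,x) \mapsto 1 \}\subzero .
\end{align*}
These two identities fix $\tau_X$ on the images of all generators.

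The contradiction then comes from running the factorization $(a, \delta(x)) = (a, 1) \diamond (e, \delta(x))$ in the group $G \times \Abel(X)$ through the homomorphism $\tau_X$: combining with the two identities above forces
\begin{align*}
  \tau_X(a, 1) &= \{ (a,x) \mapsto 1, (e,x) \mapsto -1 \}\subzero
\end{align*}
for every generator $x$. The left-hand side does not depend on $x$, yet for $a \neq e$ the right-hand side has support $\{(a,x),(e,x)\}$, which does; choosing two distinct generators $x \neq x'$ then equates two group elements with different supports, which is impossible. Hence no homomorphic $\tau$ can survive once $G$ is nontrivial and $X$ has at least two elements\,---\,exactly the non-degenerate regime.

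The main obstacle I expect is conceptual rather than computational: one must notice that the clash is not visible on the generators, where the two unit laws are individually consistent, but only at the value $\tau_X(a, 1)$ on the empty word, which the homomorphism property overdetermines; and one must recognize that the two unit axioms alone suffice, so the more intricate multiplication axioms of a distributive law need never be touched. A secondary point worth stating carefully is the scope of ``degenerate'': because a distributive law is a natural transformation supplying a component for every set, and $\mathbf{Set}$ contains sets of cardinality $\ge 2$, the only escape is $G$ trivial, where $\Pair[G] \cong \mathrm{Id}$ and the identity transformation is vacuously such a law.
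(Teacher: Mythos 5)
Your proof is correct, but it takes a genuinely different route from the paper's. The paper does not argue from the unit laws at all: it first observes that a distributive law of this type is in particular a distributive law of $\FinSupp$ over the endofunctor $\Forget(G) \times {-}$, i.e.\ a \emph{strength} for $\FinSupp$, then cites the uniqueness of strengths for $\mathbf{Set}$-monads to pin $\tau$ down to the canonical formula $\tau_G(b, f) = \FinSupp\bigl(\operatorname\lambda z\mathpunct.(b,z)\bigr)(f)$, and finally derives the contradiction by squaring a single element: with $p = \bigl(a, \delta_X(x)\bigr)$ one gets $\tau_G(p)^2 = \delta_{\Forget(G)\times X}(a,x)^2$ but $\tau_G(p^2) = \delta_{\Forget(G)\times X}(a^2,x)^2$, and these differ by Lemma~\ref{lemma:powers} since $a^2 \neq a$. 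You instead use only the two unit axioms together with the homomorphism property, and locate the clash at the overdetermined value $\tau_X(a, \varnothing\subzero)$, whose putative support $\{(a,x),(e,x)\}$ cannot be independent of the choice of generator $x$. Each approach has its merits: yours is fully self-contained (no appeal to the strength-uniqueness literature, and the multiplication axioms of the distributive law are never touched) and proves something formally a bit stronger, namely that the two unit laws alone are already incompatible with homomorphy for nontrivial $G$; the paper's proof is more informative about \emph{what} the law must be (there is exactly one candidate, the canonical strength, which fails to be homomorphic), and its contradiction lives in a single component with a single generator, whereas yours needs a component at a set with at least two elements --- a harmless requirement, as you correctly note, since a natural transformation must supply components at all sets.
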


\begin{proof}
  A distributive law $\tau_G : \Pair[G]\,\FinSupp \Rightarrow \FinSupp\,\Pair[G]$ of
  the monad $\FinSupp$ over $\Pair[G]$ is s special case of a distributive law
  of the monad $\FinSupp$ over the endofunctor $\Forget(G) \times {-}$.
  The latter is exactly a \emph{strength} for the monad $\FinSupp$ with respect
  to the Cartesian product in $\mathbf{Set}$.
  Such monad strengths are known to exist uniquely
  \cite[§3.1]{McDermott_2022}.  They take the form
  \begin{equation*}
    \operatorname\lambda(x, y)\mathpunct. T\bigl(\operatorname\lambda z\mathpunct.(x, z)\bigr)(y)
  \end{equation*}
  for any $\mathbf{Set}$-monad $T$. 

  Proceed by contradiction.  Let $G$ be nontrivial.  Pick some arbitrary
  $a \in \Forget(G)$ with $a \neq e$ and hence $a^2 \neq a$, and some
  $x \in X$.  Now let $p = \bigl(a, \delta_X(x)\bigr)$, and hence
  $p^2 = \bigl(a^2, \delta_X(x)^2\bigr)$.  Then we get, on the one hand,
  \begin{equation*}
    \tau_G(p) = \delta_{\Forget(G) \times X}(a, x)
  \end{equation*}
  and hence
  \begin{equation*}
    \tau_G(p)^2 = \delta_{\Forget(G) \times X}(a, x)^2
  \end{equation*}
  but on the other hand:
  \begin{equation*}
    \tau_G(p^2) = \delta_{\Forget(G) \times X}(a^2, x)^2
  \end{equation*}

  If $\tau_G$ were a group homomorphism, then the latter two had to coincide;
  but that is not the case by Lemma~\ref{lemma:powers}.
\end{proof}

\begin{remark}
  Normalization is generally not injective, since the actual distribution of
  partial prefixes is also ``multiplied out'', and forgotten.  For example,
  \emph{micrometer-per-microsecond} cancels to \emph{meter-per-second}:
  \begin{align*}
    \operatorname{norm}\bigl(\bigl\{ (\delta(\mathrm{\upmu}), \mathrm{m}) \mapsto 1,
    (\delta(\mathrm{\upmu}), \mathrm{s}) \mapsto -1\bigr\}\subzero\bigr) &=
    \operatorname{norm}\bigl(\bigl\{ \eta(\mathrm{m}) \mapsto 1, \eta(\mathrm{s}) \mapsto -1
    \bigr\}\subzero\bigr) = \theta(\mathrm{m})\,\theta(\mathrm{s})^{-1}
  \end{align*}
\end{remark}

\begin{definition}[Normal Equivalence]
  \label{def:norm-equiv}
  Two units are called \emph{normally equivalent}, written $\simeq\subnorm$, iff
  their normalizations coincide:
  \begin{align*}
    u \simeq\subnorm v \iff \operatorname{norm}(u) = \operatorname{norm}(v)
  \end{align*}
\end{definition}

\begin{example}[Precipitation]
  \label{ex:precip}
  Normalization is not part of the traditional notation of units.  However, it
  has the beneficial property that entangled, redundant prefixes and base units
  are cancelled out orthogonally:

  Consider a meteorological unit $p$ for \emph{amount of precipitation}, denoting
  \emph{litre-per-square-meter} ($\mathrm{L}/\mathrm{m}^2$), where a
  \emph{litre} ($\mathrm{L}$) is defined as the third power of a \emph{decimeter}
  ($\mathrm{dm}$), which parses as a simple preunit analogous to $\mathrm{kg}$;
  that is formally:
  \begin{align*}
    p &= \bigl\{ (\delta(\mathrm{d}), \mathrm{m}) \mapsto 3, \eta(\mathrm{m}) \mapsto -2 \bigr\}\subzero
  \end{align*}
  By normalization, a root unit factor of $\delta(\mathrm{m})^2$ is cancelled out:
  \begin{align*}
    \operatorname{norm}(p) &= \bigl( \delta(\mathrm{d})^3, \delta(\mathrm{m}) \bigr)
  \end{align*}
  Thus we find that $p$ is normally equivalent to a \emph{deci-deci-deci-meter},
  but not to a \emph{millimeter};
  \begin{align*}    
    p &\simeq\subnorm \delta\bigl(\bigl(\delta(\mathrm{d})^3, \mathrm{m}\bigr)\bigr)
    &
    p &\not\simeq\subnorm \delta\bigl(\bigl(\delta(\mathrm{m}), \mathrm{m}\bigr)\bigr)
  \end{align*}
\end{example}

In the semantic structures presented so far, the base symbols are free; they
stand only for themselves, operated on exclusively by natural transformations,
and do not carry any attributes for comparison.  As the last example has shown,
the resulting notions of semantic equivalence may be narrower than intended.
The following section introduces one such attribute each for prefixes and units,
and explores the semantic consequences.  Note that any actual assignment of
attribute values is \emph{contingent}; it could well be different in another
possible world, i.e., system of units, whereas all of the preceding reasoning is
logically \emph{necessary}.

\subsection{Specific Attributes}\label{props}

\begin{definition}[Base Prefix Value]
  \label{def:val-base}
  Every base prefix shall be assigned a ratio as its numerical value.
  \begin{align*}
    \operatorname{val}\subbase &: \Pref\subbase \to \RatP
  \end{align*}
\end{definition}

\begin{example}[SI Prefix Values]
  \label{ex:si-val}
  The three families of SI prefixes are defined numerically as negative and
  positive (mostly triple) powers of ten, and positive (tenfold) powers of two,
  respectively:\,\footnote{~It is scientific standard to assign context-free
    numerical values to prefixes; some traditional notations do not follow the
    practice.  For example, consider the popular (ab)use of \emph{kilobyte} for
    $2^{10}$ bytes, which has given rise to the binary family for proper
    distinction.}
    
  \begin{align*}
    \operatorname{val}\subbase^{\mathrm{SI}} &= \left\{
      \begin{array}{l@{{}\mapsto{}}ll@{{}\mapsto{}}lllll}
        \mathrm{d} & 10^{-1}, & \mathrm{c} & 10^{-2}, & \mathrm{m} \mapsto 10^{-3}, &
        \mathrm{\upmu} \mapsto 10^{-6}, & \mathrm{n} \mapsto 10^{-9}, & \dots
        \\
        \mathrm{da} & 10^{+1}, & \mathrm{h} & 10^{+2}, & \mathrm{k} \mapsto 10^{+3}, &
        \mathrm{M} \mapsto 10^{+6}, & \mathrm{G} \mapsto 10^{+9}, & \dots
        \\
        \mathrm{ki} & 2^{+10}, & \mathrm{Mi} & 2^{+20}, & \mathrm{Gi} \mapsto 2^{+30}, & \dots
      \end{array}
    \right\}
  \end{align*}
  etc.
\end{example}

\begin{definition}[Base Unit Dimension]
  Every base unit shall be assigned a (possibly compound) dimension.
  \begin{align*}
    \operatorname{dim}\subbase &: \Unit\subbase \to \Forget(\Dim)
  \end{align*}
\end{definition}

\begin{example}[SI Unit Dimensions]
  The SI base units correspond to the SI base dimensions in the respective order
  presented above, that is
  \begin{align*}
    \operatorname{dim}\subbase^{\mathrm{SI}}(\mathrm{m}) &= \delta(\mathsf{L})
    &
    \operatorname{dim}\subbase^{\mathrm{SI}}(\mathrm{s}) &= \delta(\mathsf{T})
  \end{align*}
  etc., whereas derived SI units may have more complex dimensions:
  \begin{align*}
    \operatorname{dim}\subbase^{\mathrm{SI}}(\mathrm{N}) &= \{ \mathsf{L} \mapsto 1, \mathsf{T} \mapsto -2, \mathsf{M} \mapsto 1 \}\subzero
  \end{align*}
\end{example}

\begin{definition}[Prefix Value]
  \label{def:val}
  Prefix value assignment lifts naturally to all concerned structures:
  \begin{align*}
    \operatorname{val} &: \Pref \to \RatM & \operatorname{val} &= \varepsilon_\RatM \circ \Abel(\operatorname{val}\subbase)
    \\
    \operatorname{pval}\subpre &: \Unit\subpre \to \RatP & \operatorname{pval}\subpre &= {\operatorname{val}} \circ \pi_1
    \\
    \operatorname{pval} &: \Unit \to \RatM & \operatorname{pval} &= \varepsilon_\RatM \circ \Abel(\operatorname{pval}\subpre)
    \\
    \operatorname{pval}\subnorm &: \Unit\subnorm \to \RatM & \operatorname{pval}\subnorm &= {\operatorname{val}} \circ \pi_1
  \end{align*}
\end{definition}

\begin{definition}[Unit Dimension]
  \label{def:unit-dim}
  Dimension assignment lifts naturally to all concerned structures:
  \begin{align*}
    \operatorname{dim}\subroot &: \Unit\subroot \to \Dim & \operatorname{dim}\subroot &= \varepsilon_\Dim \circ \Abel(\operatorname{dim}\subbase)
    \\
    \operatorname{dim}\subpre &: \Unit\subpre \to \Forget(\Dim) & \operatorname{dim}\subpre &= {\operatorname{dim}\subbase} \circ \pi_2
    \\
    \operatorname{dim} &: \Unit \to \Dim & \operatorname{dim} &= \varepsilon_\Dim \circ \Abel(\operatorname{dim}\subpre)
    \\
    \operatorname{dim}\subnorm &: \Unit\subnorm \to \Dim & \operatorname{dim}\subnorm &= {\operatorname{dim}\subroot} \circ \pi_2
  \end{align*}
\end{definition}

\begin{example}[Density]
  Consider a unit of \emph{mass density}, $\mathrm{kg}/\mathrm{cm}^3$, that is formally
  \begin{align*}
    q &= \bigl\{ (\delta(\mathrm{k}), \mathrm{g}) \mapsto 1, (\delta(c), \mathrm{m}) \mapsto -3 \bigr\}\subzero
  \end{align*}
  In the SI interpretation, it follows that
  \begin{align*}
    \operatorname{pval}(q) &= 10^3 \cdot (10^{-2})^{-3} = 10^9
    &
    \operatorname{dim}(q) = \{ \mathsf{L} \mapsto -3, \mathsf{M} \mapsto 1 \}\subzero
  \end{align*}
\end{example}

\begin{definition}[Codimensionality]
  Two (base, pre-, \dots) units are called \emph{codimensional}, written
  $\sim_{\Box}$, with the appropriate initial substituted for $\Box$,
  iff their assigned dimensions coincide:
  \begin{align*}
    u \sim_{\Box} v \iff \operatorname{dim}_{\Box}(u) = \operatorname{dim}_{\Box}(v)
  \end{align*}
\end{definition}

\begin{definition}[Evaluated Unit]
  The \emph{evaluated units} $\Unit\subeval$ are the direct sum of a ratio and a
  root unit:
  \begin{align*}
    \Unit\subeval &= \GPair[\RatM](\Unit\subroot)
  \end{align*}
\end{definition}

\begin{definition}[Evaluation]
  Evaluation separates prefix value and root unit:
  \begin{align*}
    \operatorname{eval} &: \Unit \to \Unit\subeval & \operatorname{eval} &= \langle \operatorname{pval}, \operatorname{root} \rangle = \operatorname{eval}\subnorm \circ \operatorname{norm}
    \\
    \operatorname{eval}\subnorm &: \Unit\subnorm \to \Unit\subeval & \operatorname{eval}\subnorm &= \operatorname{val} \times \operatorname{id}_{\Unit\subroot}
  \end{align*}
\end{definition}

\begin{definition}[Numerical Equivalence]
  \label{def:eval-equiv}
  Two units are called \emph{numerically equivalent}, written $\simeq\subeval$,
  iff their evaluations coincide:
  \begin{align*}
    u \simeq\subeval v \iff \operatorname{eval}(u) = \operatorname{eval}(v)
  \end{align*}
\end{definition}

\begin{theorem}
  \label{thm:norm-hier}
  Normalization equivalence entails numerical equivalence, which entails root
  equivalence, which entails codimensionality:
  \begin{align*}
    u \simeq\subnorm v \implies u \simeq\subeval v \implies u \simeq\subroot \implies u \sim v
  \end{align*}
\end{theorem}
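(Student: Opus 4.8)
The plan is to read all four relations as \emph{equal\nobreakdash-value} relations of characteristic maps: $u \simeq\subnorm v$, $u \simeq\subeval v$, $u \simeq\subroot v$ and $u \sim v$ hold exactly when $\operatorname{norm}$, $\operatorname{eval}$, $\operatorname{root}$ and $\operatorname{dim}$, respectively, agree on $u$ and $v$. With this reading, every implication in the chain reduces to a single principle: if a coarser characteristic map $g$ factors through a finer one $f$, say $g = h \circ f$, then $f(u) = f(v)$ forces $g(u) = g(v)$, so the $f$-relation entails the $g$-relation. Thus the whole proof is a matter of exhibiting, for each consecutive pair, the appropriate factorization and then applying $h$ to both sides of an assumed equality.

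For the first implication I would simply invoke the factorization $\operatorname{eval} = \operatorname{eval}\subnorm \circ \operatorname{norm}$ already recorded in the definition of evaluation: assuming $\operatorname{norm}(u) = \operatorname{norm}(v)$ and applying $\operatorname{eval}\subnorm$ to both sides yields $\operatorname{eval}(u) = \operatorname{eval}(v)$, i.e.\ $u \simeq\subeval v$. For the second implication I would use that $\operatorname{eval} = \langle \operatorname{pval}, \operatorname{root}\rangle$, whence $\pi_2 \circ \operatorname{eval} = \operatorname{root}$; equality of the pairs $\operatorname{eval}(u) = \operatorname{eval}(v)$ in $\Unit\subeval$ entails equality of their second components, which is precisely $\operatorname{root}(u) = \operatorname{root}(v)$, i.e.\ $u \simeq\subroot v$. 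Both steps are immediate once the relevant factorization is named.

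The third implication is the only one that calls for a small computation rather than reading off a given factorization, and so I expect it to be the main (if mild) obstacle. Here I would establish $\operatorname{dim} = \operatorname{dim}\subroot \circ \operatorname{root}$ by unfolding definitions and using functoriality of $\Abel$:
\begin{align*}
  \operatorname{dim}\subroot \circ \operatorname{root}
  &= \varepsilon_\Dim \circ \Abel(\operatorname{dim}\subbase) \circ \Abel(\pi_2)
   = \varepsilon_\Dim \circ \Abel(\operatorname{dim}\subbase \circ \pi_2)
   = \varepsilon_\Dim \circ \Abel(\operatorname{dim}\subpre)
   = \operatorname{dim}.
\end{align*}
The key observation to spot is that $\operatorname{dim}\subpre = \operatorname{dim}\subbase \circ \pi_2$ lets the functor $\Abel$ absorb exactly the projection $\pi_2$ that defines $\operatorname{root} = \Abel(\pi_2)$, after which the two $\varepsilon_\Dim$\nobreakdash-prefixed expressions coincide. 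With this factorization in hand, $\operatorname{root}(u) = \operatorname{root}(v)$ gives $\operatorname{dim}(u) = \operatorname{dim}(v)$, i.e.\ $u \sim v$, completing the chain. Everything else is bookkeeping with the projections and the defining equations of the maps.
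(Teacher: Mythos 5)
Your proposal is correct and takes essentially the same route as the paper: the paper's proof also reads each relation as the kernel of its defining map and chains the factorizations $\operatorname{eval} = (\operatorname{val} \times \operatorname{id}_{\Unit\subroot}) \circ \operatorname{norm}$, $\operatorname{root} = \pi_2 \circ \operatorname{eval}$, and $\operatorname{dim} = \operatorname{dim}\subroot \circ \operatorname{root}$, concluding that the kernels grow along the chain. The only difference is that you explicitly verify the last factorization via functoriality of $\Abel$, which the paper simply asserts (it is implicit in Definition~\ref{def:dim} and the commuting diagram of Figure~\ref{fig:structure}).
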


\begin{proof}
  The relations arise from maps by successive composition:
  \begin{enumerate}
  \item The relation $\simeq\subnorm$ is the kernel of the map $\mathrm{norm}$.  
  \item The relation $\simeq\subeval$ is the kernel of the map
    $\mathrm{eval} = (\operatorname{val} \times
    \operatorname{id}_{\Unit\subroot}) \circ \mathrm{norm}$.
  \item The relation $\simeq\subroot$ is the kernel of the map
    $\operatorname{root} = \pi_2 \circ \mathrm{eval}$.  
  \item The relation $\sim$ is the kernel of the map
    $\operatorname{dim} = {\operatorname{dim}\subroot} \circ
    {\operatorname{root}}$.
  \end{enumerate}
  Thus the kernel relations increase with each step.  %\qed
\end{proof}

Numerical equivalence is coarser than normalization equivalence, because it uses
an additional source of information\NEW{: W}hereas the latter depends only on the
universal group structures of prefixes and units, the former takes contingent
value assignments ($\operatorname{val}\subbase$) into account.

\begin{example}[Precipitation revisited]
  Continuing Example~\ref{ex:precip}, we find that, given the SI interpretation
  of prefix values, the precipitation unit $p$ is indeed numerically equivalent
  to the \emph{millimeter}:
  \begin{align*}
    \operatorname{eval}^{\mathrm{SI}}(p) = \bigl(10^{-3}, \delta(\mathrm{m})\bigr) = \operatorname{eval}^{\mathrm{SI}}\Bigl(\delta\bigl(\bigl(\delta(\mathrm{m}), \mathrm{m}\bigr)\bigr)\Bigr) \implies p \simeq\subeval \delta\bigl(\bigl(\delta(\mathrm{m}), \mathrm{m}\bigr)\bigr)
  \end{align*}
\end{example}

So far, three points on the syntax--semantics axis have been discussed, namely
units proper as the formal model of traditional syntax, and normalized and
evaluated units as two steps of semantic interpretation, giving rise to
successively coarser equivalence relations.  Evaluated units are the coarsest
interpretation that is arguably adequate in general practice.
For the sake of critical argument, a further semantic interpretation step
shall be considered.  That one fits the logical framework as well, see
Figure~\ref{fig:structure} below, but is too coarse for sound reasoning in
most cases.
  
\begin{definition}[Abstract Unit]
  The \emph{abstract units} $\Unit\subabs$ are the direct sum of a ratio and a
  dimension:
  \begin{align*}
    \Unit\subabs &= \GPair[\RatM](\Dim)
  \end{align*}
\end{definition}

\begin{definition}[Abstraction]
  Abstraction separates prefix value and dimension assignments:
  \begin{align*}
    \operatorname{abs} &: \Unit \to \Unit\subabs & \operatorname{abs} &= \langle \operatorname{pval}, \operatorname{dim} \rangle = \operatorname{abs}\subnorm \circ \operatorname{norm}
    \\
    \operatorname{abs}\subnorm &: \Unit\subnorm \to \Unit\subabs & \operatorname{abs}\subnorm &= \langle \operatorname{pval}\subnorm, \operatorname{dim}\subnorm \rangle = \operatorname{val} \times \operatorname{dim}\subroot = \operatorname{abs}\subeval \circ \operatorname{eval}\subnorm
    \\
    \operatorname{abs}\subeval &: \Unit\subeval \to \Unit\subabs & \operatorname{abs}\subeval &= \GPair[\RatM](\operatorname{dim}\subroot)
  \end{align*}
\end{definition}

Abstract units arise in the remodeling of theoretical works that assert one
canonical unit per dimension \cite[e.g.]{Allen2004,Kennedy1996}; see also
Section~\ref{re-dim-aargh} below.

\begin{ENEW}
\begin{example}[Pecipitation Abstracted]
  The precipitation unit $p$ abstracts to
  \begin{equation*}
    \operatorname{abs}^{\mathrm{SI}}(p) = \bigl(\operatorname{pval}^{\mathrm{SI}}(p), \operatorname{dim}^{\mathrm{SI}}(p)\bigr) = (10^{-3}, \mathsf{L})
  \end{equation*}
  which means ``one thousandth of the coherent SI unit of length''.
\end{example}
\end{ENEW}

\subsection{The Big Picture}

\begin{figure}
  \begin{equation*}
    \xymatrix{
      &
      \Pref\subbase
      \ar[ld]_{\operatorname{val}\subbase}
      \ar[d]_{\delta}
      &
      \Unit\subpre
      \ar[ld]_{\pi_1}
      \ar[d]_{\delta}
      \ar[r]^{\pi_2}
      &
      \Unit\subbase
      \ar[d]_{\delta}
      \ar[rd]^(.45){\operatorname{dim}\subbase}
      &
      \Dim\subbase
      \ar[d]^{\delta}
      \\
      \RatM
      &
      \Pref
      \ar[l]_-{\operatorname{val}}
      &
      \Unit
      \ar[l]_-{\operatorname{pval}}
      \ar[d]|{\vphantom{qX}\operatorname{norm}}
      \ar[r]^-{\operatorname{root}}
      &
      \Unit\subroot
      \ar[r]^-{\operatorname{dim}\subroot}
      &
      \Dim
      \\
      &&
      \Unit\subnorm
      \ar[lu]^{\pi_1}
      \ar[ru]_(.3){\pi_2}
      \ar[d]|{\operatorname{eval}\subnorm}
      \\
      &&
      \Unit\subeval
      \ar@/^/[lluu]_{\pi_1}
      \ar@/_/[ruu]_{\pi_2}
      \ar[d]|{\operatorname{abs}\subeval}
      \\
      &&
      \Unit\subabs
      \ar@/^/[lluuu]^{\pi_1}
      \ar@/_/[rruuu]_{\pi_2}
    }
  \end{equation*}
  \caption{Overview of Model Structures}
  \label{fig:structure}
\end{figure}

Figure~\ref{fig:structure} depicts all mathematical structures defined in the
preceding sections and their relating operations in a single diagram.  The
diagramm commutes, such that, for every pair of objects, all paths from one to
the other denote the same function, \NEW{namely} \emph{the} model operation of
that type signature.

The depiction abuses notation mildly for the sake of conciseness; the top row
and the rest of the diagram live in categories $\mathbf{Set}$ and $\mathbf{Ab}$,
respectively.  However, the perspective shift is easily explained: The
adjunction of functors $\Abel \dashv \Forget$ ensures that, for any map $f$ that
takes values from the carrier of a group $G$, there exists a unique homomorphic
extension $f'$ to the free abelian group over the domain, i.e., the \NEW{corresponding}
diagrams commute simultaneously in $\mathbf{Set}$ and $\mathbf{Ab}$,
respectively; see Figure~\ref{fig:adj}.

\begin{figure}
  \begin{align*}
    \xymatrix{
      X
      \ar[rd]^{f}
      \ar[dd]_{\delta_X}
      \\
      & \Forget(G)
      \\
      \Forget\bigl(\Abel(X)\bigr)
      \ar@{-->}[ur]_{\Forget(f')}
    }
    &&
    \xymatrix{
      & \Abel\bigl(\Forget(G)\bigr)
      \ar[dd]^{\varepsilon_G}
      \\
      \Abel(X)
      \ar[ru]^{\Abel(f)}
      \ar@{-->}[rd]_{\exists!f'}
      \\
      & G
    }
  \end{align*}
  \caption{Correspondence Between $\mathbf{Set}$ and $\mathbf{Ab}$ by Adjunction}
  \label{fig:adj}
\end{figure}

All horizontal arrows in the second row of Figure~\ref{fig:structure} arise from
this principle (Definitions~\ref{def:val},~\ref{def:unit-dim}).  The syntactic model
$\Unit$ is in the center, its semantics projections extend to the left and
right.  The two independent sources of contingent data are in the top left and
right corner, respectively; all other operations arise from adjoint functors and
their natural transformations.  Extending downward from the center is a sequence
of componentwise semantic interpretations.  Named operations that are just
compositions of others are not shown.

\section{Unit Conversion Relations}
\label{rels}

In the following sections, we deal with ternary relations of a particular
type, namely the carrier of a direct sum of three groups:
\begin{align*}
  \mathrm{Conv} &= \Unit \times \RatM \times \Unit &
  \Forget(\mathrm{Conv}) &= \FinSupp(\Unit\subpre) \times \RatP \times \FinSupp(\Unit\subpre)
\end{align*}
The middle component is called a \emph{conversion ratio}.  We shall write
$\conv[R]urv$ for $(u, r, v) \in R$, a notation that alludes to categorial
diagrams; see Theorem~\ref{thm:conv-cat} below for justification.  As with
diagrams, the subscript $R$ shall be omitted where it is obvious from the
context, and adjacent triples $\conv urv,\conv vsw$ shortened to
$\conv ur{\conv vsw}$.

\subsection{\NEW{Unit Conversions and Closures}}

\begin{definition}[Unit Conversion]
  A relation $C \subseteq \Forget(\mathrm{Conv})$ is called a \emph{(unit)
    conversion}, iff it is dimensionally consistent and functional in its
  arguments of unit type:
  \begin{subequations}
    \begin{align}
      \conv[C]urv &\implies u \sim v \label{eq:conv-dim}
      \\
      \conv[C]urv \land \conv[C]u{r'}v &\implies r = r' \label{eq:conv-fun}
    \end{align}
  \end{subequations}
\end{definition}

\begin{remark}
  \label{rem:conv}
  A judgement $\conv[C]urv$ means ``One $u$ is $r$ $v$s,'' and hence
  introduces an algebraic rewriting rule for quantity values that converts
  (values measured in) $u$ to $v$, since multiplication with the
  conversion factor $r$ is taken as associative:
  \begin{equation}
    \label{eq:conv}
    x\,u = x\,(r\,v) = (xr)\, v
  \end{equation}
\end{remark}

\begin{example}[UK Units]
  \label{ex:uk}
  Standards in the United Kingdom define the regional customary units
  \emph{pound} ($\mathrm{lb}$) and \emph{pint} ($\mathrm{pt}$) with the following
  conversion rules:
  \begin{align*}
    \conv%[\mathrm{SI}^+]
    {\de{\mathrm{lb}}}{453.592\,37}{\de{\mathrm{g}}}
    &&
    \conv%[\mathrm{SI}^+]
    {\de{\mathrm{pt}}}{568.261\,25}{\delta\bigl(\bigl(\delta(\mathrm{c}),\mathrm{m}\bigr)\bigr)^3}
  \end{align*}
\end{example}

\begin{definition}[Conversion Closure]
  Let $C \subseteq \Forget(\mathrm{Conv})$ be a unit conversion.  Its
  \emph{(conversion) closure} is the smallest relation $C^*$ with
  $C \subseteq C^* \subseteq \Forget(\mathrm{Conv})$ such that the following
  axioms hold:
  \begin{subequations}
    \begin{align}
      \conv[C^*]urv \land \conv[C^*]{u'}{r'}{v'} &\implies \conv[C^*]{uu'}{rr'\kern-1pt}{vv'} \label{eq:close-mult}
      \\
      \conv[C^*]urv &\implies \conv[C^*]{u^{-1}}{r^{-1}\kern-1pt}{v^{-1}} \label{eq:close-inv}
      \\
      &\hspace{-2cm} \conv[C^*]u{\operatorname{pval}(u)}{\conv[C^*]{\operatorname{strip}(u)}{\operatorname{pval}(u)^{-1}}{u}} \label{eq:close-dist}
    \end{align}
  \end{subequations}
\end{definition}

\begin{remark}
  The closure of a unit conversion may fail to be a unit conversion itself,
  since contradictory factors can arise from closure axioms,
  violating~\eqref{eq:conv-fun}.  For example, take both $\conv[C]{u}{2}{v}$ and
  $\conv[C]{u^{-1}}{3}{v^{-1}}$, but clearly $2 \neq 3^{-1}$.
\end{remark}

\begin{remark}
  Closure ensures operational \emph{completeness} of the reasoning, namely that
  all potential rewriting rules for composite units implied by rules for their
  constituents are actually available.  For example, if we know how to convert
  from \emph{furlongs} to \emph{meters}, and from \emph{seconds} to
  \emph{fortnights}, then we can deduce how to convert from
  \emph{furlongs-per-fortnight} to \emph{millimeters-per-second}.
\end{remark}

Conversion closures have a rich algebraic structure:

\begin{theorem}
  \label{thm:conv-sub}
  The closure of a unit conversion forms a subgroup of $\mathrm{Conv}$.
\end{theorem}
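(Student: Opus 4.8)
The plan is to check the standard subgroup criterion directly against the three closure axioms, exploiting the fact that the operations of the direct sum $\mathrm{Conv} = \Unit \times \RatM \times \Unit$ act componentwise. Concretely, the product of two triples is $(u, r, v) \diamond (u', r', v') = (uu', rr', vv')$, inversion is $(u, r, v)^\dag = (u^{-1}, r^{-1}, v^{-1})$, and the neutral element is $(1, 1, 1)$, all inherited pointwise from the three factors. With this observation in hand, I would phrase the subgroup test as: $C^*$ contains the neutral element, is closed under $\diamond$, and is closed under ${}^\dag$.

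Two of these three conditions are then immediate and require no computation, only a matching of shapes. Closure of $C^*$ under the group operation $\diamond$ is precisely the multiplicativity axiom \eqref{eq:close-mult}, once that axiom is read as a statement about componentwise products; and closure under inversion ${}^\dag$ is precisely axiom \eqref{eq:close-inv}. So the bulk of the argument reduces to recognizing that these two axioms were designed to coincide with the respective subgroup conditions under the componentwise structure.

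For the remaining condition, that the neutral element lies in $C^*$, I would instantiate the distribution axiom \eqref{eq:close-dist} at the neutral unit $u = 1$. Since $\operatorname{pval}$ and $\operatorname{strip}$ are group homomorphisms, they preserve neutral elements, so $\operatorname{pval}(1) = 1$ and $\operatorname{strip}(1) = 1$; the axiom then collapses to $\conv[C^*]{1}{1}{1}$, placing $(1, 1, 1) \in C^*$. Combining this with the previous paragraph establishes all three conditions, hence that $C^*$ is a subgroup of $\mathrm{Conv}$.

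There is essentially no hard step here: the whole content is the alignment of the closure axioms with the componentwise group operations, and the one nontrivial observation is that \eqref{eq:close-dist} supplies the identity. Should one prefer to avoid instantiating at $u = 1$, an alternative route is to note that \eqref{eq:close-dist} forces $C^*$ to be nonempty for every $C$, and then to recover $(1, 1, 1)$ from any element $\conv[C^*]{u}{r}{v}$ by one application each of \eqref{eq:close-inv} and \eqref{eq:close-mult}, since $(uu^{-1}, rr^{-1}, vv^{-1}) = (1,1,1)$.
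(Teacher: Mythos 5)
Your proof is correct and follows essentially the same route as the paper: axioms \eqref{eq:close-mult} and \eqref{eq:close-inv} give closure under the componentwise operation and inversion, and instantiating \eqref{eq:close-dist} at the neutral unit yields the triple $\conv{\varnothing\subzero}{1}{\varnothing\subzero}$, which is exactly how the paper establishes nonemptiness (your explicit identity-element check and the paper's nonemptiness criterion are interchangeable here, as your final paragraph itself notes).
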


\begin{proof}
  Axioms \eqref{eq:close-mult} and \eqref{eq:close-inv} state directly that a
  unit conversion closure is closed under the direct sum group operation and
  inversion.  The closure is nonempty by virtue of \eqref{eq:close-dist}; at
  least the triple $\conv{\varnothing\subzero}{1}{\varnothing\subzero}$ is
  always contained.  Any subset of a group with these three properties is a
  subgroup.
\end{proof}

\begin{theorem}
  \label{thm:conv-cat}
  The closure of a unit conversion forms an invertible category
  \NEW{or groupoid}, the categorial generalization \NEW{both of a
    group and} of an equivalence relation, with unit objects and
  conversion factor morphisms:
  \begin{subequations}
    \begin{align}
      &\conv[C^*]u1u \label{eq:close-cat-id}
      \\
      \conv[C^*]urv \conv[C^*]{}sw &\implies \conv[C^*]u{rs}w \label{eq:close-cat-comp}
      \\
      \conv[C^*]urv &\implies \conv[C^*]v{r^{-1}}u \label{eq:close-cat-inv}
    \end{align}
  \end{subequations}
\end{theorem}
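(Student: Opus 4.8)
The plan is to build on Theorem~\ref{thm:conv-sub}: since $C^*$ is a subgroup of $\mathrm{Conv}$, it is already closed under the componentwise product of triples and under componentwise inversion $(u, r, v) \mapsto (u^{-1}, r^{-1}, v^{-1})$, and it contains the group identity $(\varnothing\subzero, 1, \varnothing\subzero)$. The whole theorem then reduces to the reflexivity law \eqref{eq:close-cat-id}, i.e.\ to producing the diagonal arrow $\conv[C^*]{u}{1}{u}$ for every unit $u$. Once these identity arrows are available, the composition law \eqref{eq:close-cat-comp} and the inversion law \eqref{eq:close-cat-inv} follow by the standard trick of splicing in an identity and cancelling in the group: given $\conv[C^*]{u}{r}{v}$ and $\conv[C^*]{v}{s}{w}$, the group element $(u, r, v)\,(v, 1, v)^{-1}\,(v, s, w)$ evaluates componentwise to $(u, rs, w)$, and given $\conv[C^*]{u}{r}{v}$, the element $(v, 1, v)\,(u, r, v)^{-1}\,(u, 1, u)$ evaluates to $(v, r^{-1}, u)$; both lie in $C^*$ by subgroup closure.

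The real work, then, is \eqref{eq:close-cat-id}, and the key idea is to bootstrap it from the distribution axiom \eqref{eq:close-dist}. First I would dispatch the special case of prefix-stripped units. By Lemma~\ref{lemma:strip-idemp}, $\operatorname{strip}$ is idempotent, so $\operatorname{strip}(\operatorname{strip}(u)) = \operatorname{strip}(u)$; moreover a stripped unit carries only empty prefixes, whence $\operatorname{pval}(\operatorname{strip}(u)) = 1$. Instantiating \eqref{eq:close-dist} at $\operatorname{strip}(u)$ therefore collapses both halves to the single assertion $\conv[C^*]{\operatorname{strip}(u)}{1}{\operatorname{strip}(u)}$, the identity arrow on every stripped unit. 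To reach an arbitrary $u$, I would combine the two halves of \eqref{eq:close-dist} for $u$ itself, namely $\conv[C^*]{u}{\operatorname{pval}(u)}{\operatorname{strip}(u)}$ and $\conv[C^*]{\operatorname{strip}(u)}{\operatorname{pval}(u)^{-1}}{u}$, with the just-established stripped identity, again by splicing: the group product $(u, \operatorname{pval}(u), \operatorname{strip}(u))\,(\operatorname{strip}(u), 1, \operatorname{strip}(u))^{-1}\,(\operatorname{strip}(u), \operatorname{pval}(u)^{-1}, u)$ cancels $\operatorname{strip}(u)$ in the first and third components and $\operatorname{pval}(u)$ in the second, yielding exactly $(u, 1, u) \in C^*$.

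With \eqref{eq:close-cat-id}, \eqref{eq:close-cat-comp} and \eqref{eq:close-cat-inv} in hand, the groupoid claim is complete: objects are units, a morphism $u \to v$ is a triple $\conv[C^*]{u}{r}{v}$, associativity and the two identity laws are inherited verbatim from the multiplicative group $\RatM$ of conversion ratios, and \eqref{eq:close-cat-inv} supplies a two-sided inverse for each morphism, making the category invertible. The main obstacle I anticipate is precisely the identity law for non-stripped units: one must resist the temptation to multiply the two halves of \eqref{eq:close-dist} directly, which only yields the diagonal on the doubled unit $u\,\operatorname{strip}(u)$, and instead interpose the stripped identity arrow; this in turn hinges on the two auxiliary facts $\operatorname{strip}\circ\operatorname{strip} = \operatorname{strip}$ and $\operatorname{pval}\circ\operatorname{strip} \equiv 1$. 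Everything downstream is routine componentwise computation in the direct-sum group.
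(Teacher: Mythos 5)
Your proposal is correct and follows essentially the same route as the paper's own proof: the identity law \eqref{eq:close-cat-id} is bootstrapped from \eqref{eq:close-dist} instantiated at $u$ and at $\operatorname{strip}(u)$, using the idempotence of $\operatorname{strip}$ (Lemma~\ref{lemma:strip-idemp}) together with $\operatorname{pval}\circ\operatorname{strip}=1$ (a fact you state explicitly, which the paper leaves implicit), and then \eqref{eq:close-cat-comp} and \eqref{eq:close-cat-inv} are obtained by multiplying with identity arrows and cancelling, exactly as in the paper. Your phrasing in terms of the subgroup structure of Theorem~\ref{thm:conv-sub} rather than direct appeals to axioms \eqref{eq:close-mult} and \eqref{eq:close-inv} is only a cosmetic repackaging, as is your splicing order for the identity arrow (the paper first forms the diagonal on $u\,\operatorname{strip}(u)$ and then cancels, which works equally well).
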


\newcommand\infer[3][]{\dfrac{#2}{#3}\hbox to 0pt{\,\text{#1}\hss}}

\begin{shortproof}
  \begin{itemize}
  \item For \eqref{eq:close-cat-id}: By \eqref{eq:close-dist} we have
    $\conv[C^*]{u}{\operatorname{pval}(u)}{\operatorname{strip}(u)}$ and
    $\conv[C^*]{\operatorname{strip}(u)}{\operatorname{pval}(u)^{-1}}{u}$.
    Multiplication via \eqref{eq:close-mult} yields:
    \begin{align*}
      \conv[C^*]{u \operatorname{strip}(u)}{1}{u \operatorname{strip}(u)}
    \end{align*}
    By substitution of $\operatorname{strip}(u)$ for $u$ in \eqref{eq:close-dist},
    we also obtain
    $\conv[C^*]{\operatorname{strip}(u)}{\operatorname{pval}(\operatorname{strip}(u))}{\operatorname{strip}(\operatorname{strip}(u))}$,
    which simplifies via Lemma~\ref{lemma:strip-idemp} to
    $\conv[C^*]{\operatorname{strip}(u)}{1}{\operatorname{strip}(u)}$, and
    inversion via \eqref{eq:close-inv} yields:
    \begin{align*}
      \conv[C^*]{\operatorname{strip}(u)^{-1}}{1}{\operatorname{strip}(u)^{-1}}
    \end{align*}
    Multiplication of both via \eqref{eq:close-mult} concludes:
    \begin{align*}
      \conv[C^*]{u}{1}{u}
    \end{align*}
    
  \item For \eqref{eq:close-cat-comp} assume $\conv[C^*]urv$ and
    $\conv[C^*]vsw$.  Multiplication via \eqref{eq:close-mult} yields:
    \begin{align*}
      \conv[C^*]{uv}{rs}{vw}
    \end{align*}
    By substitution of $v^{-1}$ for $u$ in \eqref{eq:close-cat-id}, we also obtain
    $\conv[C^*]{v^{-1}}{1}{v^{-1}}$.
    Multiplication of both via \eqref{eq:close-mult} concludes:
    \begin{align*}
      \conv[C^*]{u}{rs}{w}
    \end{align*}
    
  \item For \eqref{eq:close-cat-inv} assume $\conv[C^*]urv$.  Inversion via
    \eqref{eq:close-inv} yields:
    \begin{align*}
      \conv[C^*]{u^{-1}}{r^{-1}}{v^{-1}}
    \end{align*}
    By \eqref{eq:close-cat-id}, used both directly and substituting $v$ for $u$,
    we also obtain $\conv[C^*]{u^{-1}}{1}{u^{-1}}$ and
    $\conv[C^*]{v^{-1}}{1}{v^{-1}}$, respectively.  Multiplication of all three
    via \eqref{eq:close-mult} concludes:
    \begin{align*}
      \conv[C^*]{v}{r^{-1}}{u}
    \end{align*}
  \end{itemize}
\end{shortproof}

\begin{remark}
  The closure of a unit conversion is again a conversion iff it is
  \emph{thin} as a category.  Such relations shall take center stage in the next
  section.
\end{remark}

\begin{theorem}
  \label{thm:conv-part}
  Conversions can be decomposed and partitioned by dimension:
  \begin{align*}
    \begin{aligned}
      \Unit_{(d)} &= \{ u \in \Unit \mid \operatorname{dim}(u) = d \} \subseteq \Forget(\Unit)
      \\
      C_{(d)} &= C \cap (\Unit_{(d)} \times \mathbb{Q}_+ \times \Unit_{(d)})
    \end{aligned}
    &&
    C &= \bigcup_{\mathclap{d \in \Dim}} C_{(d)}
  \end{align*}
\end{theorem}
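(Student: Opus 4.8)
The statement bundles several claims: the displayed set equation $C = \bigcup_{d} C_{(d)}$, the implicit assertion (in the word ``partitioned'') that the slices $C_{(d)}$ are pairwise disjoint, and the reading of ``decomposed'' that each slice is again a conversion. The plan is to establish the set equation by double inclusion, then read off disjointness from the single-valuedness of $\operatorname{dim}$, and finally note that the slices inherit the conversion axioms.

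The inclusion $\bigcup_d C_{(d)} \subseteq C$ is immediate, as each $C_{(d)}$ is by definition an intersection with $C$. For the reverse inclusion I would take an arbitrary triple $\conv[C]urv$ and invoke the dimensional-consistency axiom \eqref{eq:conv-dim}, which yields $u \sim v$, that is $\operatorname{dim}(u) = \operatorname{dim}(v)$. Setting $d := \operatorname{dim}(u) = \operatorname{dim}(v)$, both $u$ and $v$ lie in the fibre $\Unit_{(d)}$, so the triple lies in $\Unit_{(d)} \times \RatP \times \Unit_{(d)}$ and, being already in $C$, in $C_{(d)}$. Hence every element of $C$ occurs in the union.

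Disjointness follows because the family $\{\Unit_{(d)}\}_{d \in \Dim}$ is exactly the collection of fibres $\operatorname{dim}^{-1}(d)$ of the map $\operatorname{dim} : \Unit \to \Dim$; distinct fibres of a function are disjoint, so $C_{(d)} \cap C_{(d')} = \varnothing$ whenever $d \neq d'$ (empty slices are harmless and drop out of the union). Each slice is moreover a conversion in its own right: functionality \eqref{eq:conv-fun} is inherited by restriction to a subrelation, and dimensional consistency \eqref{eq:conv-dim} holds trivially since both unit components of any triple in $C_{(d)}$ already sit in $\Unit_{(d)}$. I expect no real obstacle; the entire force of the theorem is that axiom \eqref{eq:conv-dim} pins the two unit components of each triple to a common fibre, which is precisely what allows each triple to be assigned to a unique dimension class. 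Were that axiom dropped, a triple could straddle two fibres and belong to no single $C_{(d)}$, and the decomposition would fail.
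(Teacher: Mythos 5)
Your proposal is correct and takes exactly the route the paper intends: its entire proof is the one-liner ``Follows directly from axiom \eqref{eq:conv-dim}'', and your double-inclusion argument, the disjointness of the fibres of $\operatorname{dim}$, and the inheritance of both conversion axioms by the slices are precisely the routine details that remark compresses. Nothing is missing or superfluous in your write-up.
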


\begin{shortproof}%[Theorem~\ref{thm:conv-part}]
  Follows directly from axiom \eqref{eq:conv-dim}.  %\qed
\end{shortproof}

\begin{remark}
  A non-converting theory of units, such as \cite{Kennedy1996}, is
  characterized by being partitioned into \emph{trivial} subgroups.  This
  demonstrates that the theory being presented here is a complementary extension
  to previous work.
\end{remark}

\begin{remark}
  Closure cannot be performed on the partitions individually, mostly because of
  axiom \eqref{eq:close-mult} that allows for multiplication of units with
  orthogonal dimensions.  Thus it is generally the case that:
  \begin{align*}
    C^* &\neq \bigcup_{\mathclap{d \in \Dim}} {C_{(d)}}^*  
  \end{align*}
\end{remark}

\begin{definition}[Unit Convertibility]
  Two units are called \emph{convertible}, with respect to a unit conversion
  $C$, written $\propto_C$, iff they are related by some factor:
  \begin{align*}
    u \propto_C v \iff \conv[C]u{\exists r}v
  \end{align*}
\end{definition}

\begin{definition}[Unit Coherence]
  \label{def:unit-coherent}
  Two units are called \emph{coherent}, with respect to a unit conversion $C$,
  written $\cong_C$, iff they are related by the factor one:
  \begin{align*}
    u \cong_C v \iff \conv[C]u1v
  \end{align*}
\end{definition}

\begin{theorem}
  \label{thm:conv-equiv-hier}
  Coherence entails convertibility, which entails codimensionality:
  \begin{align*}
    u \cong_C v \implies u \propto_C v \implies u \sim v
  \end{align*}
\end{theorem}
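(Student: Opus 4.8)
The plan is to prove each implication by simply unfolding the relevant definition; the full closure machinery plays no role here, and I would rely only on the two defining properties of a unit conversion $C$ together with the definitions of coherence (Definition~\ref{def:unit-coherent}) and convertibility.

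For the first implication, $u \cong_C v \implies u \propto_C v$, I would note that coherence is by definition the special case of convertibility in which the witnessing factor is pinned to $1$. Concretely, $u \cong_C v$ unfolds to $\conv[C]u1v$, so the triple $(u, 1, v)$ lies in $C$. Instantiating the existential quantifier in the definition of convertibility with $r = 1$ then immediately yields $u \propto_C v$. This step is purely logical: it is the standard passage from an exhibited witness to an existential claim.

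For the second implication, $u \propto_C v \implies u \sim v$, I would discharge the existential witness and appeal to dimensional consistency. By definition $u \propto_C v$ asserts that there is some $r$ with $\conv[C]urv$; since $C$ is a unit conversion, axiom~\eqref{eq:conv-dim} applies to exactly that triple and delivers $u \sim v$. No property of $r$ beyond its bare existence is used.

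Honestly, the main obstacle is that there isn't one of substance: both steps reduce to unfolding definitions and invoking axiom~\eqref{eq:conv-dim} once. The only point needing minor care is to keep in view that a single fixed relation $C$\,---\,a unit conversion, hence satisfying \eqref{eq:conv-dim}\,---\,is in play throughout, so that the axiom is legitimately available; the functionality axiom \eqref{eq:conv-fun} is never invoked.
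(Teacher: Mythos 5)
Your proof is correct and matches the paper's own argument, which likewise just unfolds Definitions~\ref{def:unit-coherent} and of convertibility and applies axiom~\eqref{eq:conv-dim}; you merely spell out what the paper compresses into one line. Your observation that \eqref{eq:conv-fun} is never needed is also accurate.
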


\begin{shortproof}%[Theorem~\ref{thm:conv-equiv-hier}]
  Follows directly from the definitions and axiom \eqref{eq:conv-dim}.  %\qed
\end{shortproof}

\begin{definition}[Conversion Coherence]
  By extension, a unit conversion $C$ is called \emph{coherent} iff all
  units \NEW{that are convertible with respect to $C$} are coherent:
  \begin{align*}
    u \cong_C v \iff u \propto_C v
  \end{align*}
\end{definition}

\begin{ENOW}
  \begin{remark}
    \label{rem:si-coherent}
    The notion of coherence given here is a formal remodeling of the one used by
    the SI; recall Example~\ref{ex:root-si}.  We take the semantic
    characterization, namely a conversion factor of one, as the actual
    definition.  The pseudo-syntactic characterization shall be recovered later
    as a result for \emph{regular} conversions, that are algebraically complete
    but otherwise sparse; see Theorem~\ref{thm:reg-eval}.
  \end{remark}
\end{ENOW}

\begin{example}[SI Unit Coherence]
  The conversion relation of the seven canonical base units of the SI is
  trivially coherent \NEW{in the more general sense}, since they are pairwise
  inco\NEW{n}vertible.  In addition, the SI recognizes 22 derived units with
  coherent conversion rules, \NEW{see Section~\ref{si-co}}.  By contrast, many
  traditional units, such as the \emph{hour}~($\mathrm{h}$), are convertible but
  not coherently so: $\conv{\mathrm{h}}{3600}{\mathrm{s}}$.
\end{example}

\subsection{\NEW{Defining Unit Conversions and Rewriting}}

Interesting subclasses of conversions arise as the closures of syntactically
restricted generators.

\begin{definition}[Defining Conversion]
  A unit conversion $C$ is called \emph{defining} iff it is basic and
  functional in its first component:
  \begin{subequations}
    \begin{align}
      \conv[C]urv &\implies \operatorname\exists u_0\mathpunct. u = \de{u_0}
      \\
      \conv[C]urv \land \conv[C]u{r'}{v'} &\implies v = v'
    \end{align}
  \end{subequations}
\end{definition}

\begin{definition}[Definition Expansion]
  Every defining unit conversion $C$ gives rise to a totalized \emph{expansion}
  function:
  \begin{align*}
    \operatorname{xpd}(C) &: \Unit\subbase \to \Pair[\RatM](\Unit) & \operatorname{xpd}(C)(u_0) &=
    \begin{cases}
      (r, v) & \text{if~} \conv[C]{\de{u_0}}rv
      \\
      \eta(\de{u_0}) & \text{if no match}
    \end{cases}
  \end{align*}
  This in turn gives rise to a mapping of base to evaluated units, and ultimately
  to an iterable rewriting operation on evaluated units:
  \begin{align*}
    \operatorname{rwr}\subbase(C) &: \Unit\subbase \to \Unit\subeval & \operatorname{rwr}\subbase(C) &= \mu_{\RatM,\Unit\subroot} \circ \Pair[\RatM](\operatorname{eval}) \circ \operatorname{xpd}(C)
    \\
    \operatorname{rwr}\subeval(C) &: \Unit\subeval \to \Unit\subeval & \operatorname{rwr}\subeval(C) &= \xi_{\RatM,\Unit\subbase} \circ \Pair[\RatM]\FinSupp\bigl(\operatorname{rwr}\subbase(C)\bigr)
  \end{align*}
\end{definition}

\begin{definition}[Dependency Order]
  Let $C$ be a defining unit conversion.  The relation
  $(>_C) \subseteq \Forget(\Unit\subbase)^2$ is the smallest transitive relation such
  that:
  \begin{align*}
    \de{u_0} \propto_C v \land v_0 \in \operatorname{supp}\bigl(\operatorname{root}(v)\bigr) \implies u_0 >_C v_0
  \end{align*}
  We say that, with respect to $C$, $u_0$ depends on $v_0$.
\end{definition}

\begin{definition}[Well-Defining Conversion]
  \label{def:well-def}
  A defining unit conversion $C$ is called \emph{well-defining} iff its
  dependency order $>_C$ is well-founded.  Since $\Unit\subbase$ is finite, this
  is already the case if $>_C$ is antireflexive.
\end{definition}

\begin{theorem}
  \label{thm:rew-fix}
  Let $C$ be a well-defining conversion.  The iteration of
  $\operatorname{rwr}\subeval(C)$ has a fixed point, which is reached after a
  number $N_C$ of steps that is bounded by the number of distinct base units,
  and independent of the input unit:
  \begin{align*}
    \lim_{n \to \infty}{\operatorname{rwr}\subeval(C)}^n &= {\operatorname{rwr}\subeval(C)}^{N_C \leq \lvert \Unit\subbase \rvert}
  \end{align*}
  The proof requires some auxiliary machinery, and is given in
  Section~\ref{rwr-fix} below.
\end{theorem}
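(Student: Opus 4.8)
The plan is to strip the monadic definition of $\operatorname{rwr}\subeval(C)$ down to its underlying combinatorial action on root supports, and then to measure termination by a rank function induced by the well-founded dependency order $>_C$.

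First I would establish a concrete description of one rewriting step. Unfolding the composite-monad multiplication $\xi_{\RatM,\Unit\subbase}$ as exhibited in the concrete examples, I expect that $\operatorname{rwr}\subeval(C)$ acts on an evaluated unit $(q, w)$, with $w \in \FinSupp(\Unit\subbase)$, by replacing each base unit $u_0$ in the support of $w$ in parallel: every $u_0$ is expanded to $\operatorname{rwr}\subbase(C)(u_0) = (\rho_{u_0}, w_{u_0})$, the resulting root words $w_{u_0}$ are recombined with the original exponents $w(u_0)$, and the ratios $\rho_{u_0}$ are multiplied into $q$ with the same exponents. The key fact to extract is the inclusion $\operatorname{supp}(w') \subseteq \bigcup_{u_0 \in \operatorname{supp}(w)} \operatorname{supp}(\operatorname{root}(v_{u_0}))$, where $(r_{u_0}, v_{u_0}) = \operatorname{xpd}(C)(u_0)$. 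For a terminal base unit (one matched by no rule) this expansion is the identity $u_0 \mapsto (1, \delta(u_0))$, so purely terminal evaluated units are exactly the fixed points, and the accumulated ratio is stationary precisely when the root word is.

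Next I would introduce a rank on base units: since $>_C$ is well-founded on the finite set $\Unit\subbase$, let $\operatorname{rank}(u_0)$ be the length of the longest $>_C$-chain descending from $u_0$, so terminal units have rank $0$ and $\operatorname{rank}(u_0) \geq 1$ iff $u_0$ is non-terminal. By the definition of $>_C$, whenever $\conv[C]{\de{u_0}}{r}{v}$ and $v_0 \in \operatorname{supp}(\operatorname{root}(v))$ we have $u_0 >_C v_0$, hence $\operatorname{rank}(v_0) < \operatorname{rank}(u_0)$. Combining this with the support inclusion above, I would prove the crucial monotonicity lemma: writing $h(q,w) = \max\{\operatorname{rank}(u_0) : u_0 \in \operatorname{supp}(w)\}$, one step satisfies $h(\operatorname{rwr}\subeval(C)(q,w)) \leq h(q,w) - 1$ whenever $h(q,w) \geq 1$, while $h = 0$ is preserved. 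Iterating, after $R = \max_{u_0} \operatorname{rank}(u_0)$ steps any input reaches height $0$, i.e.\ a fixed point, after which further applications leave both components unchanged.

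Finally, since a strict-order chain on $\Unit\subbase$ has at most $\lvert \Unit\subbase \rvert$ distinct elements, $R \leq \lvert \Unit\subbase \rvert - 1$, so setting $N_C = R$ yields a bound that depends only on $C$, not on the input, and that satisfies $N_C \leq \lvert \Unit\subbase \rvert$. I expect the main obstacle to be the first step: the monadic definition of $\operatorname{rwr}\subeval(C)$ hides its plain parallel-substitution semantics, so the real work (the auxiliary machinery promised in the statement) is to verify, via the distributive law $\beta$ and the composite multiplication $\xi$, that one application performs exactly one level of definitional unfolding simultaneously on all generators, with ratios correctly multiplied out. Once that normal form for a single step is in hand, the rank argument is routine.
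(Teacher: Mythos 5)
Your overall strategy --- unfold one rewriting step into a parallel substitution on root supports, then define a rank from the well-founded order $>_C$ and show it strictly decreases --- is exactly the paper's own proof (its \emph{unit depth} functions $d\subbase^C$, $d\subroot^C$, $d\subeval^C$ and the termination lemma built on them). However, there is a genuine off-by-one flaw in your rank function. You define $\operatorname{rank}(u_0)$ as the length of the longest $>_C$-chain descending from $u_0$ and claim that $\operatorname{rank}(u_0) \geq 1$ iff $u_0$ is non-terminal, so that height $0$ characterizes fixed points. This fails for any base unit whose defining rule has empty root support on its right-hand side, e.g.\ the \emph{dozen} with $\conv[C]{\de{u_0}}{12}{\varnothing\subzero}$ --- a case the paper singles out explicitly. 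Such a $u_0$ lies in $\operatorname{dom}(C)$, hence is not fixed by rewriting (one step replaces $\delta(u_0)$ by $\varnothing\subzero$ and multiplies the ratio by $12$), yet it has no $>_C$-successors at all, so your chain-length rank assigns it $0$. Concretely, for $C$ consisting of only that one rule, your $R = \max_{u_0} \operatorname{rank}(u_0) = 0$, and your claimed $N_C = R$ would assert that zero rewriting steps already suffice, which is false; one step is needed.

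The paper avoids this by building an unconditional increment on domain membership into the definition: $d\subbase^C(u_0) = 1 + \operatorname{sup} \{ d\subbase^C(v_0) \mid u_0 >_C v_0 \}$ when $u_0 \in \operatorname{dom}(C)$, and $0$ otherwise, so the dozen gets depth $1$ and depth $0$ exactly characterizes units on which rewriting acts as the identity. The repair to your argument is small: either adopt that definition of rank, or keep chain length and set $N_C = R + 1$; since a strict $>_C$-chain has at most $\lvert \Unit\subbase \rvert$ elements, $R + 1 \leq \lvert \Unit\subbase \rvert$ and the stated bound survives. Everything else in your proposal --- the support-inclusion description of a single step (the ``auxiliary machinery,'' which the paper also leaves as a ``longish but straightforward calculation''), the strict decrease of the maximal rank in one parallel step, and the input-independent global bound --- matches the paper's argument.
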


\begin{example}[SI Derived Units]
  \label{ex:si-well}%
  Traditional systems of units, including the SI, follow a well-defining
  approach: Starting from an irreducible set of base units, additional
  ``derived'' base units are added in a stratified way, by defining them as
  convertible to (expressions over) preexisting ones.  The SI ontology is rather
  complex; the details are discussed in Section~\ref{si-co} below.
\end{example}

\subsubsection{Rewriting Termination}
\label{rwr-fix}

\begin{definition}[Unit Depth]
  Let $C$ be a well-defining conversion.  The \emph{domain} of $C$ is the set of
  base units occurring in its first component.
  \begin{align*}
    \operatorname{dom}(C) &= \bigl\{ u_0 \bigm| \conv[C]{\de{u_0}}{\exists r}{\exists v} \bigr\}
  \end{align*}
  The \emph{depth} of a (base/root/evaluated) unit is defined by well-founded
  recursion as follows:
  \begin{align*}
    d\subbase^C &: \Unit\subbase \to \mathbb{N} & d\subbase^C(u_0) &=
    \begin{cases}
      1 + \operatorname{sup} \{ d\subbase^C(v_0) \mid u_0 >_C v_0 \} & u_0 \in \operatorname{dom}(C)
      \\
      0 & \text{otherwise}
    \end{cases}
    \\
    d\subroot^C &: \Unit\subroot \to \mathbb{N} & d\subroot^C(u) &= \operatorname{sup} \{ d\subbase^C(u_0) \mid u_0 \in \operatorname{supp}(u) \}
    \\
    d\subeval^C &: \Unit\subeval \to \mathbb{N} & d\subeval^C &= d\subroot^C \circ \pi_2
  \end{align*}
  Note that $\operatorname{sup} \varnothing = 0$, such that a base unit $u_0$
  with $\conv[C]{\de{u_0}}{r}{\varnothing\subzero}$ (for example, the
  \emph{dozen} with $r = 12$) has depth $1$, not $0$.
\end{definition}

\begin{lemma}
  Base unit depth is monotonic with respect to dependency.
  \begin{align*}
    u_0 >_C v_0 \implies d\subbase^C(u_0) > d\subbase^C(v_0)
  \end{align*}
\end{lemma}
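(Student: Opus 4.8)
The plan is to unfold the recursive definition of $d\subbase^C$ and to exploit the fact that whenever two base units stand in the dependency relation, the first argument necessarily lies in the domain of $C$. First I would observe that $u_0 >_C v_0$ forces $u_0 \in \operatorname{dom}(C)$. By definition $>_C$ is the \emph{transitive closure} of the generating relation, so $u_0 >_C v_0$ is witnessed by a finite chain $u_0 = w_0 >_C w_1 >_C \dots >_C w_k = v_0$ in which every link is a generating pair. In particular the first link $w_0 >_C w_1$ is generating, so there is a conversion $\conv[C]{\de{u_0}}{r}{v}$ with $w_1 \in \operatorname{supp}(\operatorname{root}(v))$; the mere existence of such a conversion is exactly the membership condition $u_0 \in \operatorname{dom}(C)$.

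Given $u_0 \in \operatorname{dom}(C)$, the nontrivial branch of the case distinction applies, so that
\begin{align*}
  d\subbase^C(u_0) &= 1 + \operatorname{sup}\{\, d\subbase^C(w_0) \mid u_0 >_C w_0 \,\}.
\end{align*}
Since $v_0$ itself satisfies $u_0 >_C v_0$, the value $d\subbase^C(v_0)$ is one of the elements over which the supremum is taken, whence the supremum is at least $d\subbase^C(v_0)$. Combining these facts yields
\begin{align*}
  d\subbase^C(u_0) &\geq 1 + d\subbase^C(v_0) > d\subbase^C(v_0),
\end{align*}
which is the claim. Because $\Unit\subbase$ is finite, the indexing set is finite and the supremum is an honest maximum, so every quantity involved is a genuine natural number and the inequalities are unproblematic.

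I expect essentially no obstacle here: the well-definedness of $d\subbase^C$ itself rests on the well-foundedness of $>_C$, which holds by the hypothesis that $C$ is well-defining, and this is precisely what licenses the recursion. The only point that deserves explicit mention, rather than being silently assumed, is the claim that the first argument of $>_C$ always lies in $\operatorname{dom}(C)$; this is immediate from the transitive-closure presentation but must be stated in order to justify selecting the recursive clause over the base clause. Everything else is a direct reading-off of the definition together with the elementary fact that a supremum dominates each member of its index set.
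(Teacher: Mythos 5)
Your proof is correct and is precisely the spelled-out version of what the paper leaves implicit (its entire proof reads ``follows directly from the definition''). The two points you isolate---that $u_0 >_C v_0$ forces $u_0 \in \operatorname{dom}(C)$ via the first link of the transitive-closure chain, so the recursive clause applies, and that the supremum dominates the member $d\subbase^C(v_0)$ of its index set---are exactly the content of that one-line argument, so there is nothing to add.
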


\begin{shortproof}
  Follows directly from the definition.  %\qed
\end{shortproof}

\begin{lemma}
  \label{lemma:depth-bound}
  Evaluated unit depth is linearly bounded.
  \begin{align*}
    d\subeval^C(u) \leq \lvert \Unit\subbase \rvert
  \end{align*}
\end{lemma}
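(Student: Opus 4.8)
The plan is to reduce the statement to a pointwise bound on the base unit depth and then let the two suprema inherit it. Since $d\subeval^C = d\subroot^C \circ \pi_2$ and $d\subroot^C(w) = \operatorname{sup}\{ d\subbase^C(u_0) \mid u_0 \in \operatorname{supp}(w) \}$ with the stated convention $\operatorname{sup}\varnothing = 0$, it suffices to prove $d\subbase^C(u_0) \leq \lvert \Unit\subbase \rvert$ for every base unit $u_0$. The two suprema are then bounded by the same value, and composing with $\pi_2$ changes nothing, so the theorem follows at once from the base case.

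For the base case I would use that $C$ is well-defining, so by Definition~\ref{def:well-def} the dependency order $>_C$ is antireflexive; being in addition transitive and well-founded, it is a strict partial order on the finite set $\Unit\subbase$. The preceding lemma on monotonicity of base depth supplies the crucial fact that $d\subbase^C$ strictly decreases along $>_C$, which forces the elements of any descending chain to be pairwise distinct.

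The key step is a well-founded induction carefully arranged to avoid an off-by-one error: a naive induction only gives $d\subbase^C(u_0) \leq 1 + \lvert \Unit\subbase \rvert$, since the recursive clause adds one at every level. I would instead strengthen the hypothesis by bounding the depth with the cardinality of a strict down-set inside the domain. Define $S(u_0) = \{ w \in \operatorname{dom}(C) \mid u_0 >_C w \} \cup (\{ u_0 \} \cap \operatorname{dom}(C))$ and prove $d\subbase^C(u_0) \leq \lvert S(u_0) \rvert$ by well-founded recursion on $>_C$. The case $u_0 \notin \operatorname{dom}(C)$ is immediate. For $u_0 \in \operatorname{dom}(C)$, transitivity and antireflexivity give $S(v_0) \subseteq S(u_0) \setminus \{ u_0 \}$ for every dependency $v_0$ with $u_0 >_C v_0$, so the inductive hypothesis bounds each $d\subbase^C(v_0)$ by $\lvert S(u_0) \rvert - 1$; adding the one contributed by the recursive clause recovers exactly $\lvert S(u_0) \rvert$. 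Since $S(u_0) \subseteq \operatorname{dom}(C) \subseteq \Unit\subbase$, the bound $d\subbase^C(u_0) \leq \lvert \Unit\subbase \rvert$ follows.

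The main obstacle I anticipate is precisely this off-by-one. The remark about the \emph{dozen} shows that a domain element with empty support still has depth one, so $d\subbase^C$ counts the number of domain elements along a longest descending chain rather than the chain length itself. Tracking whether $u_0$ itself lies in $\operatorname{dom}(C)$ through the down-set $S(u_0)$ is exactly what keeps the final bound at $\lvert \Unit\subbase \rvert$ instead of one more, so the bookkeeping in that single induction is where the real work sits; everything else is routine unwinding of the supremum definitions.
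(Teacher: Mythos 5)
Your proof is correct and rests on exactly the same fact as the paper's one-line proof: that $>_C$, being transitive and antireflexive, is cycle-free on the finite set $\Unit\subbase$, so depth is bounded by the number of distinct base units along any descending chain. The paper leaves every detail implicit; your down-set induction with $S(u_0)$ is a sound way of filling them in, including the off-by-one subtlety (a $>_C$-minimal element of $\operatorname{dom}(C)$ already has depth $1$) that the paper's proof does not even acknowledge.
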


\begin{shortproof}
  Follows from $>_C$ being transitive and antireflexive, hence cycle-free.  %\qed
\end{shortproof}

\begin{lemma}
  \label{lemma:depth-term}
  Evaluated unit depth is a termination function for rewriting.
  \begin{align*}
    d\subeval^C(u) = 0 &\implies \operatorname{rwr}\subeval(C)(u) = u
    \\
    d\subeval^C(u) > 0 &\implies d\subeval^C(u) > d\subeval^C\bigl(\operatorname{rwr}\subeval(C)(u)\bigr)
  \end{align*}  
\end{lemma}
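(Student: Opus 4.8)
The plan is to unfold $\operatorname{rwr}\subeval(C)$ into an explicit formula on the root component and then track how the depth of the support drops. Recall that the carrier $\Forget(\Unit\subeval)$ is exactly $\Pair[\RatM]\FinSupp(\Unit\subbase)$, so that $\operatorname{rwr}\subeval(C) = \xi_{\RatM,\Unit\subbase} \circ \Pair[\RatM]\FinSupp(\operatorname{rwr}\subbase(C))$ is precisely the Kleisli extension (monadic bind) of the one-step map $\operatorname{rwr}\subbase(C) : \Unit\subbase \to \Unit\subeval$ for the composite monad $\Pair[\RatM]\FinSupp$. Since both $d\subeval^C$ and $d\subroot^C$ factor through $\pi_2$, I would first record that for an evaluated unit $u = (r, w)$ the root component of $\operatorname{rwr}\subeval(C)(u)$ is the product $\prod_{u_0 \in \operatorname{supp}(w)} \bigl(\pi_2(\operatorname{rwr}\subbase(C)(u_0))\bigr)^{w(u_0)}$, so that $\operatorname{supp}\bigl(\pi_2(\operatorname{rwr}\subeval(C)(u))\bigr) \subseteq \bigcup_{u_0 \in \operatorname{supp}(w)} \operatorname{supp}\bigl(\pi_2(\operatorname{rwr}\subbase(C)(u_0))\bigr)$; possible cancellation only shrinks the left-hand support, which is the direction needed.

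Next I would evaluate $\operatorname{rwr}\subbase(C)$ along the two branches of $\operatorname{xpd}(C)$. For $u_0 \notin \operatorname{dom}(C)$ the ``no match'' branch gives $\operatorname{rwr}\subbase(C)(u_0) = (1, \delta(u_0))$, whose root is $\delta(u_0)$ with support $\{u_0\}$; these are exactly the base units with $d\subbase^C(u_0) = 0$. For $u_0 \in \operatorname{dom}(C)$ with defining rule $\conv[C]{\de{u_0}}{r}{v}$, unfolding $\mu_{\RatM,\Unit\subroot} \circ \Pair[\RatM](\operatorname{eval})$ shows the root component is $\operatorname{root}(v)$; here $\de{u_0} \propto_C v$, so by the definition of the dependency order every $v_0 \in \operatorname{supp}(\operatorname{root}(v))$ satisfies $u_0 >_C v_0$.

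The two implications then follow. If $d\subeval^C(u) = 0$, every $u_0 \in \operatorname{supp}(w)$ has $d\subbase^C(u_0) = 0$, hence lies outside $\operatorname{dom}(C)$ and is fixed by $\operatorname{rwr}\subbase(C)$; the root product reassembles $w$ and the ratio is multiplied only by ones, so $\operatorname{rwr}\subeval(C)(u) = u$. If $d\subeval^C(u) = D > 0$, fix any $u_0 \in \operatorname{supp}(w)$, so $d\subbase^C(u_0) \le D$: in the fixed-point branch the only new support element is $u_0$ with depth $0 < D$, and in the expanding branch each new $v_0$ satisfies $u_0 >_C v_0$, whence by monotonicity of base-unit depth (the lemma shown above) $d\subbase^C(v_0) < d\subbase^C(u_0) \le D$. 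Thus every element of $\operatorname{supp}\bigl(\pi_2(\operatorname{rwr}\subeval(C)(u))\bigr)$ has base-unit depth strictly below $D$, and taking the supremum yields $d\subeval^C\bigl(\operatorname{rwr}\subeval(C)(u)\bigr) < D$. I expect the only real work to be the first, bookkeeping step of reading off the monadic bind on the root component; once that explicit product formula is in hand, the depth bound is an immediate consequence of the dependency order together with its monotonicity.
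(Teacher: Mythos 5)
Your proof is correct and takes essentially the same approach as the paper's: the explicit product formula for the root component of $\operatorname{rwr}\subeval(C)(u)$ that you extract from the monadic bind is precisely the ``longish but straightforward calculation'' the paper leaves implicit in its first part, and your depth-drop argument via monotonicity of $d\subbase^C$ along $>_C$, combined with the observation that cancellation only shrinks the support, matches the paper's second part. If anything, your case analysis is slightly more careful than the paper's sketch, since you handle \emph{all} elements of $\operatorname{supp}(\pi_2(u))$ (fixed ones of depth $0$ as well as expanded ones) rather than only those of maximal depth.
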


\begin{shortproof}
  \begin{enumerate}
  \item Assume that $d\subeval^C(u) = 0$.  Expand
    $\pi_2(u) = \delta(u_1)^{z_1} \cdots \delta(u_n)^{z_n}$ by
    Lemma~\ref{lemma:powers}.  A longish but straightforward calculation, mostly
    with natural transformations, yields $\operatorname{rwr}\subeval(C)(u) = u$.
  \item Assume that $n = d\subeval(u) > 0$.  Then
    $M^C_u = \{ v_0 \in \operatorname{supp}(\pi_2(u)) \mid d\subbase^C(v_0) = n - 1
    \}$ are the base units occuring in $u$ that have maximal depth.  Each of
    these is replaced in $\operatorname{rwr}\subeval(C)(u)$ by base units of
    strictly lesser depth.  Thus
    $d\subeval^C(\operatorname{rwr}\subeval(C)(u)) \leq n - 1$.
  \end{enumerate}
  %\qed
\end{shortproof}

\begin{shortproof}(Theorem~\ref{thm:rew-fix}) \\
  The depth function $d\subeval^C$ provides a bound on the number of iterations
  of $\operatorname{rwr}\subeval(C)$ required to fix a particular element.
  Namely, from Lemma~\ref{lemma:depth-term} it follows that:
  \begin{align*}
    n \geq d\subeval^C(u) \implies {h_C}^n(u) = u
  \end{align*}
  By maxing over all possible base unit depths we obtain a global bound.
  \begin{align*}
    N_C &= \operatorname{max} \{ d\subbase^C(u_0) \mid u_0 \in \Unit\subbase \}
    &
    d\subeval^C(u) \leq N_C
  \end{align*}
  It follows that
  \begin{align*}
    n \geq N_C \implies \operatorname{rwr}\subeval(C)^n(u) &= u
  \end{align*}
  and, by Lemma~\ref{lemma:depth-bound}
  \begin{align*}
    n \geq \lvert \Unit\subbase \rvert \implies {h_C}^n(u) &= u
  \end{align*}
  %\qed
\end{shortproof}
  
\subsubsection{Extended Example: SI Derived Units}
\label{si-co}
     
This section gives an overview of the named coherent SI units, namely the seven
irreducible base units and 22 ``derived units with special names''.  In terms of
the model presented here, of course all 29 are base units, and being irreducible is
synonymous with being $>_C$-minimal.

Table~\ref{tab:si} lists the standardized names and reductionistic definitions
of all ``derived coherent SI units with special names'' \cite{sib9}.  The
definitions are usually given as equations, which is quite problematic: Two
terms being equated, without the possibility to substitute one for the other,
violates the principle of the indiscernibility of identicals.  For easy
reference, all tabulated definitions have been normalized, and then expanded in
terms of exponents of the relevant base prefix ($\mathrm{k}$) and the seven
irreducible base units.

In the model presented here, the table forms a well-defining coherent
conversion.  Figure~\ref{fig:si-hasse} depicts the corresponding dependency
relation $>_C$ in terms of a Hasse diagram.  Dotted lines indicate pathological
cases to be discussed in Section~\ref{cases-rev-cancel}.  \NEW{Compare
also \cite{nist-sp1247}.}

\begin{table}
  \caption{SI derived units with special names \cite{ISO80000-1}}
  \label{tab:si}
  \vspace{-\abovedisplayskip}
  \begin{equation*}
    \renewcommand\arraystretch{1.1}
    \newcommand\0{\hphantom{+}0}
    \arraycolsep 1ex
    \begin{array}{cccccccccc}
      \toprule
      \textbf{Symbol} & \textbf{Definition} & \multicolumn{8}{l}{\textbf{Normalization Exponents}}
      \\
      && \operatorname{pref} & \multicolumn{7}{l}{\operatorname{root}}
      \\
      && \mathrm{k} & \mathrm{m} & \mathrm{g} & \mathrm{s} & \mathrm{A} & \mathrm{K} & \mathrm{mol} & \mathrm{cd}
      \\ \midrule
      \mathrm{rad} & \mathrm{m}/\mathrm{m}\;{}^* & \0 & \0 & \0 & \0 & \0 & \0 & \0 & \0
      \\
      \mathrm{sr} & \mathrm{m}^2/\mathrm{m}^2\;{}^* & \0 & \0 & \0 & \0 & \0 & \0 & \0 & \0
      \\
      \mathrm{Hz} & \mathrm{s}^{-1} & \0 & \0 & \0 & -1 & \0 & \0 & \0 & \0
      \\
      \mathrm{N} & \mathrm{kg}{\cdot}\mathrm{m}/\mathrm{s}^2 & +1 & +1 & +1 & -2 & \0 & \0 & \0 & \0
      \\
      \mathrm{Pa} & \mathrm{N}/\mathrm{m}^2 & +1 & -1 & +1 & -2 & \0 & \0 & \0 & \0
      \\
      \mathrm{J} & \mathrm{N}{\cdot}\mathrm{m} & +1 & +2 & +1 & -2 & \0 & \0 & \0 & \0
      \\
      \mathrm{W} & \mathrm{J}/\mathrm{s} & +1 & +2 & +1 & -3 & \0 & \0 & \0 & \0
      \\
      \mathrm{C} & \mathrm{A}{\cdot}\mathrm{s} & \0 & \0 & \0 & +1 & +1 & \0 & \0 & \0
      \\
      \mathrm{V} & \mathrm{W}/\mathrm{A} & +1 & +2 & +1 & -3 & -1 & \0 & \0 & \0
      \\
      \mathrm{F} & \mathrm{C}/\mathrm{V} & -1 & -2 & -1 & +4 & +2 & \0 & \0 & \0
      \\
      \mathrm{\Omega} & \mathrm{V}/\mathrm{A} & +1 & +2 & +1 & -3 & -2 & \0 & \0 & \0
      \\
      \mathrm{S} & \mathrm{\Omega}^{-1} & -1 & -2 & -1 & +3 & +2 & \0 & \0 & \0
      \\
      \mathrm{Wb} & \mathrm{V}{\cdot}\mathrm{s} & +1 & +2 & +1 & -2 & -1 & \0 & \0 & \0
      \\
      \mathrm{T} & \mathrm{Wb}/\mathrm{m}^2 & +1 & \0 & +1 & -2 & -1 & \0 & \0 & \0
      \\
      \mathrm{H} & \mathrm{Wb}/\mathrm{A} & +1 & +2 & +1 & -2 & -2 & \0 & \0 & \0
      \\
      \text{\textdegree}\mathrm{C} & \mathrm{K} & \0 & \0 & \0 & \0 & \0 & +1 & \0 & \0
      \\
      \mathrm{lm} & \mathrm{cd}{\cdot}\mathrm{sr} & \0 & \0 & \0 & \0 & \0 & \0 & \0 & +1
      \\
      \mathrm{lx} & \mathrm{lm}/\mathrm{m}^2 & \0 & -2 & \0 & \0 & \0 & \0 & \0 & +1
      \\
      \mathrm{Bq} & \mathrm{s}^{-1} & \0 & \0 & \0 & -1 & \0 & \0 & \0 & \0
      \\
      \mathrm{Gy} & \mathrm{J}/\mathrm{kg} & \0 & +2 & \0 & -2 & \0 & \0 & \0 & \0
      \\
      \mathrm{Sv} & \mathrm{J}/\mathrm{kg} & \0 & +2 & \0 & -2 & \0 & \0 & \0 & \0
      \\
      \mathrm{kat} & \mathrm{mol}/\mathrm{s} & \0 & \0 & \0 & -1 & \0 & \0 & +1 & \0
      \\ \bottomrule
    \end{array}
  \end{equation*}
\end{table}

\begin{figure}
  \centering
  \newcommand\unode[2]{\xunode{#1}{#2}{#1}}
  \newcommand\xunode[3]{\pgfnodebox{#1}[virtual]{#2}{$\mathstrut\mathrm{#3}$}{1pt}{1pt}}
  \newcommand\udep[2]{\pgfnodeconnline{#1}{#2}}
  \newcommand\utdep[2]{
    \begin{pgfscope}
      \color{black!33}
      \pgfsetlinewidth{0.4pt}
      \pgfsetdash{{1mm}{1mm}}{0mm}
      \udep{#1}{#2}
    \end{pgfscope}
  }
  \newcommand\ucdep[2]{
    \begin{pgfscope}
      \color{black!33}
      \pgfsetlinewidth{0.6pt}
      \pgfsetdash{{0.6pt}{1.2pt}}{0mm}
      \udep{#1}{#2}
    \end{pgfscope}
  }
  \begin{pgfpicture}{-1.8cm}{0cm}{9cm}{7.2cm}
    \pgfsetxvec{\pgfpoint{1.8cm}{0cm}}
    \pgfsetyvec{\pgfpoint{0cm}{1.2cm}}
    \pgfnodesetsepstart{1pt}
    \pgfnodesetsepend{1pt}
    \pgfsetlinewidth{0.6pt}
    \unode{m}{\pgfxy(0,0)}
    \unode{g}{\pgfxy(1,0)}
    \unode{s}{\pgfxy(2,0)}
    \unode{A}{\pgfxy(3,0)}
    \unode{K}{\pgfxy(4,0)}
    \unode{mol}{\pgfxy(5,0)}
    \unode{cd}{\pgfxy(-1,0)}
    \unode{rad}{\pgfxy(0.33,1)} \ucdep{rad}{m}
    \unode{sr}{\pgfxy(-0.5,1)} \ucdep{sr}{m}
    \unode{Hz}{\pgfxy(1.5,1.5)} \udep{Hz}{s}
    \unode{Bq}{\pgfxy(2.17,1.5)} \udep{Bq}{s}
    \unode{N}{\pgfxy(1,1)} \udep{N}{g} \udep{N}{m} \udep{N}{s}
    \unode{C}{\pgfxy(3,2.5)} \udep{C}{s} \udep{C}{A}
    \xunode{oC}{\pgfxy(4,1)}{\text{\textdegree}C} \udep{oC}{K}
    \unode{kat}{\pgfxy(5,2.5)} \udep{kat}{s} \udep{kat}{mol}
    \unode{lm}{\pgfxy(-1,2)} \udep{lm}{cd} \udep{lm}{sr}
    \unode{Pa}{\pgfxy(0,2)} \udep{Pa}{N} \utdep{Pa}{m}
    \unode{J}{\pgfxy(1,2)} \udep{J}{N} \utdep{J}{m}
    \unode{W}{\pgfxy(1.67,3)} \udep{W}{J} \utdep{W}{s}
    \unode{V}{\pgfxy(1.67,4)} \udep{V}{W} \udep{V}{A}
    \unode{F}{\pgfxy(3,5)} \udep{F}{V} \udep{F}{C}
    \xunode{O}{\pgfxy(2.33,5)}{\Omega} \udep{O}{V} \utdep{O}{A}
    \unode{S}{\pgfxy(2.33,6)} \udep{S}{O}
    \unode{Wb}{\pgfxy(1,5)} \udep{Wb}{V} \utdep{Wb}{s}
    \unode{T}{\pgfxy(0.67,6)} \udep{T}{Wb} \utdep{T}{m}
    \unode{H}{\pgfxy(1.67,6)} \udep{H}{Wb} \utdep{H}{A}
    \unode{Gy}{\pgfxy(0.67,3)} \udep{Gy}{J} \utdep{Gy}{g}
    \unode{Sv}{\pgfxy(1.17,3)} \udep{Sv}{J} \utdep{Sv}{g}
    \unode{lx}{\pgfxy(-0.5,3)} \udep{lx}{lm} \udep{lx}{m}
  \end{pgfpicture}
  \caption{Hasse diagram $(>_{\mathrm{SI}})$ of coherent SI base units.
    \NEW{Light dashed lines indicate dependencies that are explicit in the
      definition, but omitted in the transitive reduct.  Dotted lines indicate
      pathological cases.}}
  \label{fig:si-hasse}
\end{figure}

\clearpage
\subsection{The Conversion Hierarchy}
\label{hier}

\begin{definition}[Conversion Hierarchy]
  A conversion is called \dots
  \begin{enumerate}
  \item \emph{consistent} iff its closure is again a conversion;
  \item \emph{closed} iff it is its own closure;
  \item \emph{finitely generated} iff it is the closure of a finite conversion;
  \item \emph{defined} iff it is the closure of a defining conversion;
  \item \emph{well-defined} iff it is the closure of a well-defining conversion;
  \item \emph{regular} iff it is the closure of an empty conversion.
  \end{enumerate}
\end{definition}

\begin{theorem}
  \label{thm:hier}
  Each property in the conversion hierarchy entails the preceding.
\end{theorem}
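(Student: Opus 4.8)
The six properties sit in a linear order, so the plan is to establish the five successive implications $(6)\Rightarrow(5)\Rightarrow(4)\Rightarrow(3)\Rightarrow(2)\Rightarrow(1)$, each time either exhibiting the appropriate witness conversion or appealing to a general property of the closure operator. Three of the five steps are essentially definitional, and two carry the actual content.

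For $(6)\Rightarrow(5)$: a regular conversion is $\varnothing^*$, so it suffices to check that the empty relation $\varnothing$ is well-defining. It is vacuously a defining conversion, and its dependency order $>_\varnothing$ is empty, because $\de{u_0} \propto_\varnothing v$ never holds; the empty relation is antireflexive, hence, on the finite set $\Unit\subbase$, well-founded in the sense of Definition~\ref{def:well-def}. For $(5)\Rightarrow(4)$: every well-defining conversion is by definition already defining, so the same witness transfers the property. For $(2)\Rightarrow(1)$: a closed conversion satisfies $C = C^*$, and since $C$ is a conversion by hypothesis, its closure $C^* = C$ is again a conversion, which is exactly consistency.

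The first step with genuine content is $(4)\Rightarrow(3)$, where the finiteness of the model is used. If $C = D^*$ with $D$ defining, I would show that $D$ is in fact a \emph{finite} conversion. By basicness every triple of $D$ has first component $\de{u_0}$ for some base unit $u_0$; by functionality in the first component all triples sharing that first component agree in their unit output $v$, and then by \eqref{eq:conv-fun} they agree in the ratio $r$ as well. Hence $D$ contains at most one triple per base unit, and since $\Unit\subbase$ is finite, $D$ is finite. Thus $C = D^*$ witnesses $C$ as finitely generated. The point to get right here is that it is functionality in the \emph{first} component, not mere basicness, that bounds the generator count by $\lvert \Unit\subbase \rvert$.

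The remaining step $(3)\Rightarrow(2)$ rests on idempotency of the closure. By construction $C^*$ is the least relation containing $C$ that satisfies the Horn-style axioms \eqref{eq:close-mult}--\eqref{eq:close-dist}; in particular it satisfies them itself, and is therefore its own closure, $(C^*)^* = C^*$. With $C = D^*$ this gives $C^* = (D^*)^* = D^* = C$, so $C$ is closed. The main obstacle, such as it is, is thus concentrated in $(4)\Rightarrow(3)$: the other four implications are either trivial bookkeeping or a direct consequence of the closure being an extensive, idempotent operator.
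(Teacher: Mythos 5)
Your proposal is correct and follows essentially the same route as the paper's proof: the trivial steps (closed implies consistent, well-defining implies defining, empty implies well-defining) are handled identically, finitely-generated-implies-closed rests on idempotency of the closure operator, and defined-implies-finitely-generated rests on the cardinality of the generator being bounded by $\lvert \Unit\subbase \rvert$. Your only addition is spelling out why that cardinality bound holds (basicness plus functionality in the first component, together with \eqref{eq:conv-fun}, give at most one triple per base unit), a detail the paper asserts without elaboration.
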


\begin{shortproof}%[Theorem~\ref{thm:hier}]
  %The hierarchical entailments are all straightforward.
  \begin{enumerate}
  \item \emph{Closed entails consistent.} -- If $C = C^*$, and $C$ is a
    conversion, then evidently so is $C^*$.
  \item \emph{Finitely generated entails closed.} -- The closure operator is
    idempotent.  Thus, if $C = B^*$, where $B$ is irrelevantly finite, then also
    $B^*=(B^*)^*=C^*$.
  \item \emph{Defined entails finitely generated.} -- If $C = B^*$ and $B$ is
    defining, then the cardinality of $B$ is bounded by the cardinality of
    $\Unit\subbase$, which is finite.
  \item \emph{Well-defined entails defined.} -- If $C = B^*$ and $B$ is
    well-defining, then $B$ is also defining.
  \item \emph{Regular entails well-defined.} -- If $C = B^*$ and
    $B = \varnothing$, then $B$ is vacuously well-defining.
  \end{enumerate}
\end{shortproof}

\begin{theorem}
  \label{thm:cons}
    Consistency is non-local; namely the following three statements are equivalent in
    the closure of a conversion $C$:
    \begin{enumerate} \def\theenumi{\alph{enumi}}
    \item contradictory factors exist for some pair of units;
    \item contradictory factors exist for all convertible pairs of units;
    \item $\conv[C^*]{\varnothing\subzero}{\neq 1}{\varnothing\subzero}$.
    \end{enumerate}
\end{theorem}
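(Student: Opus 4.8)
The plan is to establish the cycle $(a) \Rightarrow (c) \Rightarrow (b) \Rightarrow (a)$, leaning throughout on Theorem~\ref{thm:conv-sub}: the closure $C^*$ is a subgroup of $\mathrm{Conv} = \Unit \times \RatM \times \Unit$, whose operation is componentwise, with neutral element $(\varnothing\subzero, 1, \varnothing\subzero)$ and inverse $(u, r, v)^{-1} = (u^{-1}, r^{-1}, v^{-1})$ (precisely the content of axioms \eqref{eq:close-mult} and \eqref{eq:close-inv}).

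For $(a) \Rightarrow (c)$, I would suppose $\conv[C^*]urv$ and $\conv[C^*]u{r'}v$ with $r \neq r'$, and simply take the group quotient of the two triples. Since $C^*$ is a subgroup, it contains $(u, r', v) \cdot (u, r, v)^{-1} = (\varnothing\subzero, r' r^{-1}, \varnothing\subzero)$, and $r' r^{-1} \neq 1$ because $r \neq r'$; this is exactly the loop asserted in (c).

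For $(c) \Rightarrow (b)$, I would start from $\conv[C^*]{\varnothing\subzero}{r_0}{\varnothing\subzero}$ with $r_0 \neq 1$ and let $\conv[C^*]usv$ witness an arbitrary convertible pair. Closure under the group operation multiplies this triple by the contradiction loop to give $(u, s, v) \cdot (\varnothing\subzero, r_0, \varnothing\subzero) = (u, s r_0, v) \in C^*$; since $s r_0 \neq s$, the pair $(u, v)$ carries two distinct factors, and as the pair was arbitrary, every convertible pair is contradictory. Finally $(b) \Rightarrow (a)$ is immediate once one notes that some convertible pair exists at all: the subgroup $C^*$ contains its neutral element $(\varnothing\subzero, 1, \varnothing\subzero)$ (equivalently, the identity axiom \eqref{eq:close-cat-id} gives $\conv[C^*]u1u$), so (b) is non-vacuous and specialises to (a).

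I do not anticipate a genuine obstacle: the substance of the claim was already discharged in establishing that $C^*$ is a subgroup, and the three equivalences amount to bookkeeping inside that group. The only point meriting attention is the logical shape of $(b) \Rightarrow (a)$, where one must explicitly exhibit a convertible pair so that the universally quantified hypothesis does not hold vacuously. The conceptual message\,---\,that a single contradiction $(\varnothing\subzero, r_0, \varnothing\subzero)$ located at the neutral object propagates, by translation in the group, to every convertible pair\,---\,is exactly the advertised non-locality, and it is closure under multiplication that performs the propagation.
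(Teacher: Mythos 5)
Your proposal is correct and follows essentially the same route as the paper's own proof: the same cycle $(a) \Rightarrow (c) \Rightarrow (b) \Rightarrow (a)$, with $(a) \Rightarrow (c)$ by combining the two contradictory triples via \eqref{eq:close-mult} and \eqref{eq:close-inv}, $(c) \Rightarrow (b)$ by translating the loop at $\varnothing\subzero$ onto an arbitrary convertible pair, and $(b) \Rightarrow (a)$ by nonemptiness. Your phrasing in terms of the subgroup structure of Theorem~\ref{thm:conv-sub}, and your explicit exhibition of the neutral element as a witness for the final step, are just slightly more spelled-out versions of exactly what the paper does.
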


\begin{proof}%[Theorem~\ref{thm:cons}]
  By circular implication.  Assume (a), i.e., both
  $\conv[C^*]{u}{r_1}{v}$ and $\conv[C^*]{u}{r_2}{v}$ with
  $r_1 \neq r_2$.  Multiplication via \eqref{eq:close-mult} of the
  former with the inverse by \eqref{eq:close-inv} of the latter
  yields:
  \begin{align*}
    \conv[C^*]{\varnothing\subzero}{r_1r_2^{-1}}{\varnothing\subzero}
  \end{align*}
  with $s = r_1r_2^{-1} \neq 1$, which is (c).  Now let
  $u' \propto_{C^*} v'$ be any convertible pair, i.e.
  \begin{align*}
    \conv[C^*]{u'}{\exists t_1}{v'}
  \end{align*}
  Multiplication via \eqref{eq:close-mult} with the preceding yields:
  \begin{align*}
    \conv[C^*]{u'}{st_1}{v'}
  \end{align*}
  with $st_1 \neq t_1$, which is (b).  This in turn implies (a) by nonemptiness.
\end{proof}

\begin{theorem}
  \label{thm:conv-equiv}
  For closed conversions, convertibility and coherence are group congruence
  relations.
\end{theorem}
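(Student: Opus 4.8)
The plan is to verify directly the two conditions defining a group congruence on the unit group $\Unit$: that $\propto_C$ and $\cong_C$ are equivalence relations, and that each is compatible with the multiplication of units. Since $C$ is closed we have $C = C^*$, so every closure axiom and every consequence established in Theorems~\ref{thm:conv-sub} and~\ref{thm:conv-cat} holds verbatim for $C$ itself; the proof is therefore essentially a matter of reading off those results while keeping track of the conversion factor.

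First I would obtain the equivalence-relation properties from Theorem~\ref{thm:conv-cat}. Reflexivity is \eqref{eq:close-cat-id}: the triple $\conv[C]{u}{1}{u}$ witnesses both $u \propto_C u$ and $u \cong_C u$. Symmetry is \eqref{eq:close-cat-inv}: from $\conv[C]{u}{r}{v}$ one gets $\conv[C]{v}{r^{-1}}{u}$, hence $v \propto_C u$, and since $r = 1$ forces $r^{-1} = 1$, also $v \cong_C u$. Transitivity is \eqref{eq:close-cat-comp}: from $\conv[C]{u}{r}{v}$ and $\conv[C]{v}{s}{w}$ one gets $\conv[C]{u}{rs}{w}$, hence $u \propto_C w$, and $r = s = 1$ forces $rs = 1$, so $u \cong_C w$. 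Compatibility with multiplication then comes from the closure axiom \eqref{eq:close-mult}: given $\conv[C]{u}{r}{u'}$ and $\conv[C]{v}{s}{v'}$, it produces $\conv[C]{uv}{rs}{u'v'}$, which yields $uv \propto_C u'v'$ unconditionally, and $uv \cong_C u'v'$ when $r = s = 1$. This establishes both claims.

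I expect no genuine obstacle, since the statement is a corollary of the structural theorems already in hand; the single point demanding care is the bookkeeping of conversion factors, namely the observation that the trivial subgroup $\{1\}$ of $\RatM$ is preserved by the inversion, composition, and pairing used above\,---\,this is exactly what makes coherence, and not only convertibility, a congruence. A more conceptual rendering, which I would mention as an alternative, identifies each relation with coset membership for a subgroup of $\Unit$. Writing $N_\propto$ for the projection onto the first coordinate of $C \cap (\Unit \times \RatM \times \{\varnothing\subzero\})$, and $N_\cong$ for that of $C \cap (\Unit \times \{1\} \times \{\varnothing\subzero\})$\,---\,both subgroups of $\Unit$, being projections of intersections of subgroups of $\mathrm{Conv}$\,---\,one checks, using reflexivity and \eqref{eq:close-mult}, that $u \propto_C v \iff uv^{-1} \in N_\propto$ and $u \cong_C v \iff uv^{-1} \in N_\cong$. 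As $\Unit$ is abelian these subgroups are automatically normal, whence both relations are congruences with no further argument.
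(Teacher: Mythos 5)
Your proof is correct and takes essentially the same route as the paper: the equivalence-relation properties come from Theorem~\ref{thm:conv-cat}, and compatibility with multiplication is exactly the subgroup property of Theorem~\ref{thm:conv-sub} (axiom~\eqref{eq:close-mult}), which the paper invokes abstractly as ``any equivalence that is also a subgroup of the direct sum is a congruence'' while you verify it concretely with factor bookkeeping. Your alternative coset rendering via $N_\propto$ and $N_\cong$ is a valid extra framing, but not a different proof in substance.
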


\begin{proof}%[Theorem~\ref{thm:conv-equiv}]
  That $\propto_C$ and $\cong_C$ are equivalences follows directly from
  Theorem~\ref{thm:conv-cat}, by forgetting the morphisms.  Any equivalence that
  is also a subgroup of the direct sum (Theorem~\ref{thm:conv-sub}) is a
  congruence.  %\qed
\end{proof}

\begin{theorem}
  \label{thm:conv-sdc}
  For closed conversions, the resulting category is strict dagger compact.
\end{theorem}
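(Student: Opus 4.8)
The plan is to produce all of the required structure by hand and then observe that, because the objects form the abelian group $\Unit = \Abel(\Unit\subpre)$ while the morphism factors live in the commutative group $\RatM$, every coherence, snake and dagger-compatibility law degenerates to a triviality. For the symmetric strict monoidal structure I take the group operation of $\Unit$ as tensor on objects, $u \otimes v = uv$, with monoidal unit $I = \varnothing\subzero$; on morphisms I set $r \otimes s = rs$, which is well-typed precisely by the multiplication axiom~\eqref{eq:close-mult}. Since the operation of a free abelian group is strictly associative and unital, the associator and unitors are identities, so the category is \emph{strict} monoidal; bifunctoriality of $\otimes$ reduces to $1 \otimes 1 = 1$ and the interchange law $(r_2r_1)(s_2s_1) = (r_2s_2)(r_1s_1)$, both immediate from commutativity of $\RatM$. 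Because $\Unit$ is \emph{abelian}, $u \otimes v$ and $v \otimes u$ are the same object, so the symmetry $\sigma_{u,v}$ may be taken to be $\mathrm{id}_{uv}$, and the naturality, hexagon and symmetry axioms collapse to $1 \circ 1 = 1$.

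Next I install the dagger. By Theorem~\ref{thm:conv-cat} the category is a groupoid, and I equip it with the canonical groupoid dagger given by inversion of morphisms: for $f$ the morphism $\conv[C^*]{u}{r}{v}$, let $f^\dagger$ be $\conv[C^*]{v}{r^{-1}}{u}$, which exists by~\eqref{eq:close-cat-inv}. This $\dagger$ is the identity on objects and an involutive contravariant functor, since $(r^{-1})^{-1} = r$, $\mathrm{id}^\dagger = \mathrm{id}$, and $(rs)^{-1} = s^{-1}r^{-1}$; its compatibility with the monoidal and symmetric structure, $(r \otimes s)^\dagger = r^\dagger \otimes s^\dagger$ and $\sigma^\dagger = \sigma^{-1}$, follows once more from the commutativity of $\RatM$ and the triviality of $\sigma$. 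Note that this selection is available for every closed conversion, whether or not it is thin.

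For the compact structure I choose the dual of an object to be its group inverse, $u^* = u^{-1}$. Then $u \otimes u^* = uu^{-1} = I = u^* \otimes u$ on the nose, so the coevaluation $\mathrm{coev}_u : I \to u \otimes u^*$ and evaluation $\mathrm{ev}_u : u^* \otimes u \to I$ are both endomorphisms of $I$, and I take each to be the factor $1$, available by~\eqref{eq:close-cat-id}. Under these strict identifications both legs of each zigzag identity are tensors and composites of factor-$1$ morphisms whose boundary objects already multiply correctly, so the snake equations reduce to $1 \circ 1 = 1 = \mathrm{id}_u$ and $1 \circ 1 = 1 = \mathrm{id}_{u^*}$. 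Finally, dagger-compactness requires the dagger to interact coherently with the duals, in the standard form $\mathrm{ev}_u = \mathrm{coev}_u^\dagger \circ \sigma_{u^*,u}$; but $\mathrm{coev}_u$, its dagger, the symmetry $\sigma_{u^*,u}$ and $\mathrm{ev}_u$ are all the factor $1$ on the object $I$, so this holds trivially.

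There is no deep obstacle here: the only genuine input from the conversion axioms is the groupoid structure of Theorem~\ref{thm:conv-cat} (identities, composition, inverses) together with the ambient group operations furnished by Theorem~\ref{thm:conv-sub} through~\eqref{eq:close-mult} and~\eqref{eq:close-inv}; everything else is inherited from $\Unit$ and the commutativity of $\RatM$. The one point deserving care is the bookkeeping in the compact part: one must invoke \emph{strictness} to identify $u \otimes u^*$ and $u^* \otimes u$ with $I$ literally, after which the coevaluation, evaluation, symmetry and their daggers all live in $\mathrm{Hom}(I, I)$ and coincide with $1$. I expect that spelling out the two zigzag identities under these identifications, rather than any algebraic difficulty, is the only thing that needs to be written in full.
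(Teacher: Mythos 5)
Your proposal is correct and takes essentially the same route as the paper's own proof: the tensor is the group operation of $\Unit$ on objects and of $\RatM$ on conversion factors with $I = \varnothing\subzero$, the dagger is inversion in $\RatM$, the dual is inversion in $\Unit$, and strictness collapses all coherence, zigzag and compatibility conditions to trivialities. You merely spell out the routine verifications (interchange, symmetry, snake equations, dagger--dual compatibility) that the paper leaves implicit behind the remark that in the strict case no extra laws are required.
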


\begin{proof}%[Theorem~\ref{thm:conv-sdc}]
  A dagger compact category combines several structures.  Firstly, it is
  symmetric monoidal, i.e., there is a bifunctor $\otimes$ and an object $I$
  such that:
  \begin{align*}
    A \otimes (B \otimes C) &\cong (A \otimes B) \otimes C & I \otimes A \cong A &\cong A \otimes I & A \otimes B &\cong B \otimes A
  \end{align*}
  In the general case, all isomorphisms $\cong$ are named natural
  transformations subject to additional coherence conditions, but in the strict
  case they are identities, and hence no extra laws are required.  Here, the
  abelian group structures of $\Unit, \RatM$ provide the operations:
  \begin{align*}
    u \otimes v &= uv & r \otimes s &= rs & I &= \varnothing\subzero
  \end{align*}

  Secondly, there is a contravariant involutive functor $\dag$ that is
  compatible with the above:
  \begin{align*}
    f : A \to B &\implies f^\dag : B \to A & (f^\dag)^\dag &= f & (f \otimes g)^\dag &= f^\dag \otimes g^\dag
  \end{align*}
  Here, this is given by the inverse in $\RatM$:
  \begin{align*}
    r^\dag &= r^{-1}
  \end{align*}

  Lastly, the category is compact closed, i.e., endowed with a dual $A^\star$ of each
  object $A$, and natural transformations (not necessarily isomorphisms):
  \begin{align*}
    A \otimes A^\star \Rightarrow I \Rightarrow A^\star \otimes A
  \end{align*}
  In the general case, these are again subject to more coherence conditions, but
  in the strict case they are simply identities.  Here, the dual is given by the
  inverse in $\Unit$:
  \begin{align*}
    u^\star &= u^{-1}
  \end{align*}
  %\qed
\end{proof}

\begin{theorem}
  \label{thm:closed-eval}
  In a closed conversion $C$, any two units are convertible if they are root
  equivalent, and coherent if they are numerically equivalent.
  \begin{align*}
    u \simeq\subroot v &\implies \conv[C]u{\operatorname{pval}(u)\operatorname{pval}(v)^{-1}}v \implies u \propto_C v
    &
    u \simeq\subeval v &\implies u \cong_C v
  \end{align*}
\end{theorem}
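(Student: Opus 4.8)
The plan is to construct the required converting triple explicitly, by reading the distribution axiom~\eqref{eq:close-dist} as two composable conversions that meet at a common stripped unit. Since $C$ is closed, $C = C^*$, so both the closure axioms and the category laws of Theorem~\ref{thm:conv-cat} apply verbatim to $C$ itself, and I may freely use identity, composition, and inversion of conversion morphisms.

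For the first implication, I would suppose $u \simeq\subroot v$, that is $\operatorname{root}(u) = \operatorname{root}(v)$. The pivotal observation is that $\operatorname{strip} = \operatorname{unroot} \circ \operatorname{root}$ factors through $\operatorname{root}$, so root equivalence already forces $\operatorname{strip}(u) = \operatorname{strip}(v)$. Now instantiate the prefix-stripping half of~\eqref{eq:close-dist} at $u$ to obtain $\conv[C]{u}{\operatorname{pval}(u)}{\operatorname{strip}(u)}$, and the prefix-restoring half at $v$ to obtain $\conv[C]{\operatorname{strip}(v)}{\operatorname{pval}(v)^{-1}}{v}$. Rewriting $\operatorname{strip}(v)$ as $\operatorname{strip}(u)$ makes the two triples share an endpoint, so categorial composition~\eqref{eq:close-cat-comp} yields $\conv[C]{u}{\operatorname{pval}(u)\operatorname{pval}(v)^{-1}}{v}$, which is precisely the stated converting triple; convertibility $u \propto_C v$ then follows immediately from its definition, the factor witnessing the existential.

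For the second implication, I would unfold numerical equivalence through $\operatorname{eval} = \langle \operatorname{pval}, \operatorname{root} \rangle$: the equation $\operatorname{eval}(u) = \operatorname{eval}(v)$ is equivalent to the conjunction of $\operatorname{pval}(u) = \operatorname{pval}(v)$ and $\operatorname{root}(u) = \operatorname{root}(v)$. The second conjunct is exactly root equivalence, so the first implication already supplies $\conv[C]{u}{\operatorname{pval}(u)\operatorname{pval}(v)^{-1}}{v}$; the first conjunct collapses the factor $\operatorname{pval}(u)\operatorname{pval}(v)^{-1}$ in $\RatM$ to $1$, leaving $\conv[C]{u}{1}{v}$, which is coherence $u \cong_C v$ by Definition~\ref{def:unit-coherent}.

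The argument reduces to pure bookkeeping once the strip-equality pivot is in place, so I expect no genuine obstacle. The only point demanding care is orienting the two halves of~\eqref{eq:close-dist} correctly\,---\,taking the stripping direction at the source $u$ and the restoring direction at the target $v$\,---\,so that their shared endpoint is the common stripped unit, and the intermediate factors multiply to $\operatorname{pval}(u)\operatorname{pval}(v)^{-1}$ rather than cancelling prematurely.
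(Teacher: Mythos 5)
Your proposal is correct and follows essentially the same route as the paper's own proof: deriving $\operatorname{strip}(u) = \operatorname{strip}(v)$ from root equivalence, instantiating the two halves of the distribution axiom~\eqref{eq:close-dist} at $u$ and $v$ respectively, composing via~\eqref{eq:close-cat-comp}, and then collapsing the factor to $1$ under numerical equivalence. No gaps; the bookkeeping is exactly as the paper does it.
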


\begin{proof}%[Theorem~\ref{thm:closed-eval}]
  Let $C$ be a closed conversion.  Assume $u \simeq\subroot v$.  Unfolding the
  definition of $(\simeq\subroot)$ and applying $\operatorname{unroot}$ to
  both sides yields:
  \begin{equation*}
    \operatorname{strip}(u) = \operatorname{strip}(v)      
  \end{equation*}
  The two halves of \eqref{eq:close-dist} for $u$ and $v$, respectively, yield:
  \begin{align*}
    \conv[C]u{\operatorname{pval}(u)}{\operatorname{strip}(u)} && \conv[C]{\operatorname{strip}(v)}{\operatorname{pval}(v)^{-1}}v
  \end{align*}
  Transitivity via \eqref{eq:close-cat-comp} concludes
  \begin{equation*}
    \conv[C]u{\operatorname{pval}(u)\operatorname{pval}(v)^{-1}}v
  \end{equation*}
  which also proves convertibility, $u \propto_C v$.
  
  Now assume $u \simeq\subeval v$.  From this follows all of the above, and additionally:
  \begin{equation*}
    \operatorname{pval}(u) = \operatorname{pval}(v)
  \end{equation*}
  Hence the conversion ratio simplifies
  \begin{equation*}
    \conv[C]u1v
  \end{equation*}
  which proves coherence, $u \cong_C v$.
\end{proof}

\begin{theorem}
  \label{thm:reg-eval}
  In a regular conversion $C$, in addition to Theorem~\ref{thm:closed-eval}, any
  two units are convertible if \emph{and only if} they are root equivalent, and
  coherent if \emph{and only if} they are numerically equivalent:
  \begin{align*}
    u \simeq\subroot v &\iff u \propto_C v
    &
    u \simeq\subeval v &\iff u \cong_C v
  \end{align*}
\end{theorem}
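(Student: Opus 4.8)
The plan is to notice that the two forward implications are already supplied by Theorem~\ref{thm:closed-eval}, since a regular conversion is in particular closed (Theorem~\ref{thm:hier}). All the new content sits in the two converse implications, and I would obtain both at once by computing the regular conversion $C = \varnothing^*$ explicitly. Concretely, I conjecture that $C$ equals
\begin{align*}
  R = \bigl\{ \bigl(u,\, \operatorname{pval}(u)\operatorname{pval}(v)^{-1},\, v\bigr) \bigm| u \simeq\subroot v \bigr\},
\end{align*}
after which the theorem falls out by unfolding the definitions of $\propto_C$ and $\cong_C$ together with $\operatorname{eval} = \langle \operatorname{pval}, \operatorname{root} \rangle$.

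First I would prove $R \subseteq C$, which is immediate: it is exactly the first implication chain of Theorem~\ref{thm:closed-eval}, instantiated for the (closed) conversion $C$. The heart of the argument is the reverse inclusion $C \subseteq R$. Because $C = \varnothing^*$ is by definition the smallest relation containing the empty relation and stable under the three closure axioms, it suffices to verify that $R$ itself is stable under those axioms; minimality then forces $\varnothing^* \subseteq R$. Stability under multiplication \eqref{eq:close-mult} and inversion \eqref{eq:close-inv} follows mechanically from $\operatorname{root}$ and $\operatorname{pval}$ being group homomorphisms: root equivalence is preserved under products and inverses, and the factor $\operatorname{pval}(u)\operatorname{pval}(v)^{-1}$ multiplies and inverts in lockstep.

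The one axiom that needs genuine attention is distribution \eqref{eq:close-dist}, where I must check that both $\bigl(u, \operatorname{pval}(u), \operatorname{strip}(u)\bigr)$ and $\bigl(\operatorname{strip}(u), \operatorname{pval}(u)^{-1}, u\bigr)$ lie in $R$. The decisive facts are that $\operatorname{strip}(u) \simeq\subroot u$, since $\operatorname{root}$ is a left inverse of $\operatorname{unroot}$ (Lemma~\ref{lemma:strip-idemp}), and, crucially, that $\operatorname{pval}(\operatorname{strip}(u)) = 1$, because stripping re-embeds the root via the empty prefix $\eta$, whose value is the neutral ratio. Granting these, the $R$-factor $\operatorname{pval}(u)\operatorname{pval}(\operatorname{strip}(u))^{-1}$ collapses to exactly $\operatorname{pval}(u)$, and symmetrically to $\operatorname{pval}(u)^{-1}$, as \eqref{eq:close-dist} demands.

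With $C = R$ established, the converse implications are purely definitional. A witness for $u \propto_C v$ is a triple of $C = R$, whose membership already forces $u \simeq\subroot v$; and $u \cong_C v$ means $(u, 1, v) \in R$, i.e.\ $u \simeq\subroot v$ together with $\operatorname{pval}(u) = \operatorname{pval}(v)$, which is precisely $u \simeq\subeval v$ by $\operatorname{eval} = \langle \operatorname{pval}, \operatorname{root} \rangle$. The main obstacle I anticipate is pinning down $\operatorname{pval}(\operatorname{strip}(u)) = 1$ cleanly from the definitions of $\operatorname{pval}$ and $\operatorname{unroot}$; everything else is routine homomorphism bookkeeping.
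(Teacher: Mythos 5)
Your proposal is correct, and it shares the paper's overall skeleton: the forward implications come from Theorem~\ref{thm:closed-eval} (regular entails closed by Theorem~\ref{thm:hier}), and the converses come from minimality of the closure over a relation stable under the axioms \eqref{eq:close-mult}, \eqref{eq:close-inv}, \eqref{eq:close-dist}. The genuine difference lies in the invariant carried through that induction. The paper tracks only that the endpoints of every triple are root equivalent; this yields $u \propto_C v \implies u \simeq\subroot v$, but for the coherence half it must then play $\conv[C]u1v$ against the factor $\operatorname{pval}(u)\operatorname{pval}(v)^{-1}$ supplied by Theorem~\ref{thm:closed-eval} and invoke functionality \eqref{eq:conv-fun} of the conversion $C$ to force $\operatorname{pval}(u) = \operatorname{pval}(v)$. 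You instead carry the exact factor along, obtaining the explicit description $C = R$; then both converses are definitional, no appeal to \eqref{eq:conv-fun} is needed, and as a byproduct you re-prove that $\varnothing^*$ is functional in its unit arguments, i.e., the vacuous case of Theorem~\ref{thm:well-cons}. The price is the extra verification you correctly single out, $\operatorname{pval}(\operatorname{strip}(u)) = 1$, which does hold: $\operatorname{pval} \circ \operatorname{unroot} = \varepsilon_\RatM \circ \Abel(\operatorname{pval}\subpre \circ \eta_{\Pref,\Unit\subbase})$, and $\pi_1 \circ \eta_{\Pref,\Unit\subbase}$ is constantly the neutral prefix $\varnothing\subzero$ with $\operatorname{val}(\varnothing\subzero) = 1$, so the evaluation by $\varepsilon_\RatM$ yields a power of $1$; combined with $\operatorname{root} \circ \operatorname{unroot} = \mathrm{id}$ (Lemma~\ref{lemma:strip-idemp}), this gives stability of $R$ under \eqref{eq:close-dist}. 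In short, both arguments are sound: the paper's is leaner because its invariant is weaker and it leans on the standing assumption that $C$ is a conversion, while yours costs a modest extra computation and buys strictly more, namely an explicit normal form for the regular conversion.
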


\begin{proof}
  Since any regular conversion is closed, it suffices to prove the converses of
  Theorem~\ref{thm:closed-eval}.
  \begin{enumerate}
  \item All closure axioms preserve root equivalence: If their preconditions are
    instantiated with root equivalent unit pairs, then so are their conclusions.
    It follows that a regular conversion can only relate root equivalent unit
    pairs:
    \begin{equation*}
      u \propto_C v \implies u \simeq\subroot v
    \end{equation*}
  \item Assume
    $u \cong_C v$.  Then we have the above, and additionally
    \begin{equation*}
      \conv[C]u1v
    \end{equation*}
    This, together with Theorem~\ref{thm:closed-eval} and \eqref{eq:conv-fun}, yields:
    \begin{equation*}
      \operatorname{pval}(u) \operatorname{pval}(v)^{-1} = 1
    \end{equation*}
    Thus we find both
    \begin{align*}
      \operatorname{pval}(u) &= \operatorname{pval}(v)
      &
      \operatorname{root}(u) &= \operatorname{root}(v)
    \end{align*}
    and combine components
    \begin{equation*}
      \operatorname{eval}(u) = \operatorname{eval}(v)
    \end{equation*}
    which proves numerically equivalence, $u \simeq\subeval v$.
  \end{enumerate}
\end{proof}

Most of the entailments of Theorem~\ref{thm:hier} are generally proper, but one is not:

\begin{theorem}
  \label{thm:closed-fg}
  All closed conversions are finitely generated.
\end{theorem}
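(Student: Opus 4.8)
The plan is to recast finite generation as a statement about a single quotient group. Write $D \le \mathrm{Conv}$ for the subgroup generated by all the distribution triples $\conv{u}{\operatorname{pval}(u)}{\operatorname{strip}(u)}$ and $\conv{\operatorname{strip}(u)}{\operatorname{pval}(u)^{-1}}{u}$ demanded by \eqref{eq:close-dist}. I would first observe that for \emph{any} conversion $B$ one has $B^{*} = \langle B \cup D\rangle$, the subgroup generated by $B$ together with $D$: the right-hand side is a subgroup, it contains $D$ and hence satisfies \eqref{eq:close-dist}, so it contains $B^{*}$; conversely $B^{*}$ contains $B$, contains $D$ by \eqref{eq:close-dist}, and is a subgroup by Theorem~\ref{thm:conv-sub}, so it contains $\langle B\cup D\rangle$. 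Since $C$ is closed, $D \subseteq C = C^{*}$, so $C/D$ is a genuine quotient, and $C = \langle B \cup D\rangle$ holds for a finite $B \subseteq C$ exactly when the image of $B$ generates $C/D$. Thus $C$ is finitely generated as a closure if and only if $C/D$ is finitely generated as an abelian group, and any finite generating set lifts to a finite $B \subseteq C$, which is automatically a conversion since \eqref{eq:conv-dim} and \eqref{eq:conv-fun} are inherited by subsets.

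Next I would display $C/D$ as a subgroup of a finitely generated group. Consider the homomorphism $\bar\rho : \mathrm{Conv} \to \Unit\subroot$, $(u,r,v) \mapsto \operatorname{root}(u)\operatorname{root}(v)^{-1}$. Because $\operatorname{root}\circ\operatorname{strip} = \operatorname{root}$ (from Lemma~\ref{lemma:strip-idemp}), every distribution triple lies in $\ker\bar\rho$, so $D \subseteq \ker\bar\rho$. The target $\Unit\subroot = \Abel(\Unit\subbase)$ is free abelian over the \emph{finite} set $\Unit\subbase$, hence finitely generated, so the image $\bar\rho(C)$ is finitely generated as well, subgroups of finitely generated abelian groups being finitely generated. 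It therefore suffices to show that $\bar\rho$ restricted to $C$ has kernel exactly $D$, for then $C/D \cong \bar\rho(C)$ is finitely generated and the theorem follows.

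The main obstacle is precisely the equality $\ker\bar\rho \cap C = D$; the inclusion $\supseteq$ is immediate, and $\subseteq$ is where the hypotheses enter. Take $(u,r,v) \in C$ with $\operatorname{root}(u) = \operatorname{root}(v)$, i.e.\ $u \simeq\subroot v$. Since $C$ is closed, Theorem~\ref{thm:closed-eval} gives $\conv[C]u{\operatorname{pval}(u)\operatorname{pval}(v)^{-1}}v$, and functionality \eqref{eq:conv-fun} forces $r = \operatorname{pval}(u)\operatorname{pval}(v)^{-1}$; hence $\ker\bar\rho \cap C$ is exactly the set $E$ of triples $(u,\operatorname{pval}(u)\operatorname{pval}(v)^{-1},v)$ with $\operatorname{root}(u) = \operatorname{root}(v)$. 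It remains to prove $E = D$. Forgetting the now-redundant ratio identifies $E$ isomorphically with the root-equivalence subgroup $R = \{(u,v) \mid \operatorname{root}(u) = \operatorname{root}(v)\} \subseteq \Unit \times \Unit$, and carries $D$ to the subgroup generated by the graph $\{(w,\operatorname{strip}(w))\}$ together with its reverse $\{(\operatorname{strip}(w),w)\}$.

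The crux is that these two families jointly generate all of $R$. Given $(u,v) \in R$, set $w = \operatorname{strip}(u)^{-1}v$; then $\operatorname{root}(w) = \operatorname{root}(u)^{-1}\operatorname{root}(v) = \varnothing\subzero$, so $\operatorname{strip}(w) = \varnothing\subzero$, and one computes $(u,\operatorname{strip}(u)) \cdot (\operatorname{strip}(w),w) = (u\,\varnothing\subzero, \operatorname{strip}(u)\,\operatorname{strip}(u)^{-1}v) = (u,v)$. This is exactly where the reverse distribution triples of \eqref{eq:close-dist}, and not merely the forward ones, are indispensable: without them $D$ would fall short of $E$. With $E = D$ established, $\bar\rho|_C$ has kernel $D$, so $C/D \cong \bar\rho(C) \le \Unit\subroot$ is finitely generated, a finite generating set lifts to a finite conversion $B \subseteq C$ with $B^{*} = \langle B \cup D\rangle = C$, and $C$ is finitely generated as claimed.
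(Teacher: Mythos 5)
Your proof is correct, and it follows the same overall strategy as the paper's---quotient out the prefix-shuffling part and land in a finitely generated free abelian group---but the technical execution is genuinely different and in places more rigorous. First, your subgroup $D$ is precisely what the paper calls the \emph{regular} conversion, i.e., the closure $\varnothing^*$ of the empty conversion (your identity $B^* = \langle B \cup D\rangle$, instantiated at $B = \varnothing$, shows this); the paper likewise reduces to showing that $C$ modulo its regular part is finitely generated. From there the paths diverge: the paper realizes the quotient by choosing \emph{stripped representatives}, passing to an auxiliary relation $B$ on stripped units and embedding it into $\Unit\subroot \times \Unit\subroot$ by forgetting the conversion factor, whereas you work with the honest quotient $C/D$ and exhibit it as the image $\bar\rho(C) \leq \Unit\subroot$ of the root-difference homomorphism, with the kernel identity $\ker\bar\rho \cap C = D$ carrying the real content. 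That identity is exactly where the paper instead invokes Theorem~\ref{thm:reg-eval} (convertibility in a regular conversion coincides with root equivalence); your two-generator decomposition $(u,v) = (u, \operatorname{strip}(u)) \cdot (\operatorname{strip}(w), w)$ with $w = \operatorname{strip}(u)^{-1}v$ reproves that fact from scratch, and your observation that the reverse triples of \eqref{eq:close-dist} are indispensable here is a nice point the paper leaves implicit. Both proofs bottom out in the same classical fact that subgroups of finitely generated free abelian groups are finitely generated (you apply it at rank $\lvert \Unit\subbase \rvert$, the paper at rank $2\lvert \Unit\subbase \rvert$). What your version buys is precision at the two spots where the paper waves its hands (``by reasoning in analogy to Theorem~\ref{thm:closed-eval}, we can show that $B$ is a conversion and a subgroup \dots and that $B^* = C$''), and it makes explicit the bookkeeping equivalence---finitely generated as a closure if and only if $C/D$ is finitely generated as an abelian group---which the paper leaves tacit; conversely, you could have shortened your argument by citing Theorem~\ref{thm:reg-eval} to get $\ker\bar\rho \cap C = D$ directly.
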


\begin{proof}%[Theorem~\ref{thm:closed-fg}, Outline]
  Let $C$ be a closed conversion.  The component group
  $\Unit = \Abel(\Unit\subpre)$ is not finitely generated as a group, so neither
  is $\mathrm{Conv}$; thus the result is not immediate.

  Since closure is monotonic, any closed conversion has the corresponding
  regular conversion as a (normal) subgroup.  Thus it suffices to show that the
  remaining quotient group is finitely generated.

  By Theorem~\ref{thm:reg-eval} we have:
  \begin{align*}
    u \simeq\subroot u' \land v \simeq\subroot v' \implies \bigl( u \propto_C v \iff u' \propto_C v' \bigr)
  \end{align*}
  
  Pick a representative pair from each equivalence class, the obvious candidate
  being the stripped one.  Let $B$ be the relation such that:
  \begin{align*}
    \conv[C]urv \iff \conv[B]{\operatorname{strip}(u)}{\operatorname{pval}(u)^{-1}\operatorname{pval}(v)}{\operatorname{strip}(v)}
  \end{align*}
  By reasoning in analogy to Theorem~\ref{thm:closed-eval}, we can show that $B$
  is a conversion and a subgroup of $\mathrm{Conv}$, and that $B^* = C$.

  Thus it suffices to show that $B$ is finitely generated as a group.  Since the
  map $\operatorname{root}$ is bijective between stripped and root units, $B$ is
  isomorphic to a subgroup of the binary direct sum
  $\Unit\subroot \times \Unit\subroot$, namely by forgetting the conversion
  factor.  But $\Unit\subroot \times \Unit\subroot$ is a finitely generated free
  abelian group\,---in the wide sense, namely naturally isomorphic to
  $\Abel(\Unit\subbase \times \Unit\subbase)$---\,whose subgroups are all free and
  finitely generated.  %\qed
\end{proof}

\begin{remark}
  Being finitely generated as a closed conversion in this sense is entailed by
  being finitely generated as an abelian group in the sense of
  Theorem~\ref{thm:conv-sub}, but \emph{not} vice versa; the former could be
  ``larger'' due to axiom \eqref{eq:close-dist}, which adds infinitely many elements.
\end{remark}

\begin{theorem}
  \label{thm:well-cons}
  Every well-defining conversion $C$ is consistent, and hence gives rise to a
  well-defined closure $C^*$.
\end{theorem}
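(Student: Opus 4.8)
The plan is to reduce consistency to the non-local criterion of Theorem~\ref{thm:cons} and then rule out its condition~(c) by exhibiting a group homomorphism that faithfully tracks conversion factors. Since $C$ is well-defining, Theorem~\ref{thm:rew-fix} furnishes a stable fixed point of the rewriting endomorphism; write $\operatorname{nf} = \operatorname{rwr}\subeval(C)^{N_C} : \Unit\subeval \to \Unit\subeval$. This map is a group homomorphism: unwinding its definition with Lemma~\ref{lemma:powers}, one sees that $\operatorname{rwr}\subeval(C)$ sends $(\rho, \delta(u_1)^{z_1}\cdots\delta(u_n)^{z_n})$ to $(\rho, \varnothing\subzero)\cdot\prod_i \operatorname{rwr}\subbase(C)(u_i)^{z_i}$, which is exactly the homomorphic extension of the base expansion along the adjunction of Figure~\ref{fig:adj}; hence so is its finite power $\operatorname{nf}$. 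Moreover $\operatorname{nf}$ fixes every pure-ratio element $(\rho, \varnothing\subzero)$, since these contain no base unit to expand, and by stability of the fixed point it satisfies $\operatorname{nf} \circ \operatorname{rwr}\subeval(C) = \operatorname{nf}$.

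Next I would define the \emph{discrepancy} homomorphism $\Psi : \mathrm{Conv} \to \Unit\subeval$ by $\Psi(u, r, v) = \operatorname{nf}(\operatorname{eval}(u)) \cdot \operatorname{nf}(\operatorname{eval}(v))^{-1} \cdot (r^{-1}, \varnothing\subzero)$, which is a homomorphism because each of its three factors is. Consistency then reduces to the inclusion $C^* \subseteq \ker\Psi$: if $\conv[C^*]{\varnothing\subzero}{r}{\varnothing\subzero}$ held, then $\Psi$ would evaluate to $(r^{-1}, \varnothing\subzero)$, which is the group identity only for $r = 1$, so condition~(c) of Theorem~\ref{thm:cons} would fail and hence~\eqref{eq:conv-fun} holds throughout $C^*$. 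Because $\ker\Psi$ is a subgroup of $\mathrm{Conv}$, it is automatically closed under the multiplication axiom~\eqref{eq:close-mult} and the inversion axiom~\eqref{eq:close-inv}; it therefore suffices to show that $\ker\Psi$ contains the generators $C$ and is closed under the distribution axiom~\eqref{eq:close-dist}.

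The distribution axiom is purely formal: since $\operatorname{eval}(u) = (\operatorname{pval}(u), \varnothing\subzero)\cdot\operatorname{eval}(\operatorname{strip}(u))$ in $\Unit\subeval$ and $\operatorname{nf}$ is a homomorphism fixing ratios, applying $\operatorname{nf}$ gives $\Psi(u, \operatorname{pval}(u), \operatorname{strip}(u)) = (1, \varnothing\subzero)$, and the reversed half follows by inversion. The base case is the crux. For a generator $\conv[C]{\de{u_0}}{r}{v}$ I use that, by the definitions of $\operatorname{xpd}(C)$ and $\operatorname{rwr}\subbase(C)$, one generator step coincides with one rewriting step, namely $\operatorname{rwr}\subeval(C)(\operatorname{eval}(\de{u_0})) = \operatorname{rwr}\subbase(C)(u_0) = r\cdot\operatorname{eval}(v)$; composing with $\operatorname{nf}$ and invoking $\operatorname{nf} \circ \operatorname{rwr}\subeval(C) = \operatorname{nf}$ yields $\operatorname{nf}(\operatorname{eval}(\de{u_0})) = r\cdot\operatorname{nf}(\operatorname{eval}(v))$, i.e.\ $\Psi(\de{u_0}, r, v) = (1, \varnothing\subzero)$. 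This closes the argument, so $C^* \subseteq \ker\Psi$ and $C$ is consistent. Finally $C^*$ is dimensionally consistent, since all closure axioms relate codimensional pairs (stripping prefixes preserves dimension, by Lemma~\ref{lemma:strip-idemp} and $\operatorname{dim} = \operatorname{dim}\subroot\circ\operatorname{root}$), so $C^*$ satisfies~\eqref{eq:conv-dim} as well and is a genuine conversion; being the closure of the well-defining $C$, it is well-defined by the definition of the hierarchy. The main obstacle is precisely the exactness of this generator-to-rewriting-step correspondence, which requires careful unwinding of the monadic definitions of $\operatorname{xpd}(C)$, $\operatorname{rwr}\subbase(C)$ and $\operatorname{rwr}\subeval(C)$ together with the homomorphy of $\operatorname{nf}$, so that conversion factors accumulate exactly as~\eqref{eq:close-mult} demands.
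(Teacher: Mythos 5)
Your proof is correct, but it takes a genuinely different route from the paper's. The paper proves Theorem~\ref{thm:well-cons} by well-founded induction over the dependency order $>_C$: the finitely many generator triples are sorted in ascending topological order, the closure is constructed incrementally by alternately adjoining the next generator and re-closing, and the key invariant\,---\,that no such step introduces contradictory factors\,---\,is asserted but left as a sketch. You instead build one global invariant out of the rewriting machinery: Theorem~\ref{thm:rew-fix} supplies the stable normal form $\operatorname{rwr}\subeval(C)^{N_C}$, which you correctly observe to be a group endomorphism of $\Unit\subeval$ fixing pure ratios (your formula for $\operatorname{rwr}\subeval(C)$ on products of powers is right, and homomorphy follows since it is the product of the ratio inclusion with the homomorphic extension of $\operatorname{rwr}\subbase(C)$), and you package it into the homomorphism $\Psi$ on $\mathrm{Conv}$. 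Since $\ker\Psi$ is a subgroup it is closed under \eqref{eq:close-mult} and \eqref{eq:close-inv} for free, and your checks of the generators and of \eqref{eq:close-dist} give $C^* \subseteq \ker\Psi$ by minimality of the closure, whence \eqref{eq:conv-fun}; the separate observation that every axiom preserves codimensionality gives \eqref{eq:conv-dim}. Both steps you flag as delicate (the generator-to-rewriting-step identity and the stability $\operatorname{rwr}\subeval(C)^{N_C} \circ \operatorname{rwr}\subeval(C) = \operatorname{rwr}\subeval(C)^{N_C}$) do check out against the monadic definitions. As for what each approach buys: your $\ker\Psi$ is exactly the relation $C^\sharp$ of Theorem~\ref{thm:rewrite}, so your argument in effect establishes the inclusion $C^* \subseteq C^\sharp$, i.e.\ one half of the paper's later proof of that theorem\,---\,consistency and correctness of the conversion algorithm then flow from a single construction. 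The paper's induction, by contrast, needs nothing beyond the dependency order itself, but making its sketch rigorous requires an invariant for every intermediate closure, which is essentially the work your homomorphism does once and for all. (Minor remark: the detour through Theorem~\ref{thm:cons} is dispensable, since $C^* \subseteq \ker\Psi$ already yields \eqref{eq:conv-fun} directly, $\ker\Psi$ being functional in the ratio component.)
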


\begin{proof}%[Theorem~\ref{thm:well-cons}, Outline]
  By well-founded induction.  Let $C$ be a well-defining conversion.  Its
  finitely many element triples can be sorted w.r.t.\ $>_C$ in ascending
  topological order of the left hand sides.  Now consider an incremental
  construction of $C^*$, starting from the empty set, and alternatingly adding
  the next element of $C$ and applying the closure operator.  It can be shown
  that no such step introduces contradictory conversion factors. 
\end{proof}

\NEW{\subsection{Well-Defining Unit Conversions and Algorithmic Considerations}}

\begin{remark}
  It follows \NEW{from Theorem~\ref{thm:well-cons}} that a defining
  but non-well-defining conversion can only be inconsistent due to a
  circular dependency.
\end{remark}  

\begin{remark}
  The convertibility problem for closed conversions can encode the \emph{word
    problem} for quotients of unit groups, thus there is potential danger of
  undecidability.  We conjecture that, by Theorem~\ref{thm:closed-fg}, closed
  conversions are \emph{residually finite}, such that a known decision algorithm
  \cite{Robinson1996} could be used in principle.
\end{remark}

For well-defined conversions, a more direct and efficient algorithm exists.

\begin{definition}[Exhaustive Rewriting]
  Given a well-defining conversion $C$, units can be evaluated and exhaustively
  rewritten:
  \begin{align*}
    \operatorname{rwr}^*(C) &: \Unit \to \Unit\subeval &
    \operatorname{rwr}^*(C) &= {\operatorname{rwr}\subeval(C)^{N_C}} \circ {\operatorname{eval}}    
  \end{align*}
  Each step requires a linearly bounded number of group operations.
  The kernel of this map can be upgraded to a ternary
  relation $C^\sharp \subseteq \Forget(\mathrm{Conv})$:
  \begin{align*}
    C^\sharp &= \left\{
      (u, rs^{-1}, v) \in \Forget(\mathrm{Conv})
      \middle|
      \begin{aligned}
        \operatorname{rwr}^*(C)(u) &= (r, u')
        \\
        \operatorname{rwr}^*(C)(v) &= (s, v')
      \end{aligned}
      \land u' = v'
    \right\}
  \end{align*}
\end{definition}

\begin{theorem}
  \label{thm:rewrite}
  Exhaustive rewriting solves the well-defined conversion problem; namely,
  let $C$ be a well-defining conversion, then:
  \begin{align*}
    C^* &= C^\sharp
  \end{align*}
\end{theorem}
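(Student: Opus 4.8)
The goal is the equality $C^* = C^\sharp$ of two ternary relations on $\Forget(\mathrm{Conv})$, where $C$ is well-defining. Since set equality splits into two inclusions, the plan is to prove $C^\sharp \subseteq C^*$ and $C^* \subseteq C^\sharp$ separately, exploiting two facts already available: the closure $C^*$ is an invertible (thin) category by Theorem~\ref{thm:conv-cat}, and the rewriting iteration $\operatorname{rwr}\subeval(C)$ reaches a fixed point after at most $N_C \le \lvert\Unit\subbase\rvert$ steps by Theorem~\ref{thm:rew-fix}. The whole argument hinges on a single bridging lemma: every single rewriting step corresponds to a derivable conversion judgement in $C^*$. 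Concretely, I would first establish that for any evaluated unit $w$, the step from $w = (t, w')$ to $\operatorname{rwr}\subeval(C)(w)$ is witnessed inside $C^*$, in the sense that the input unit and its rewritten form are related by a conversion triple in $C^*$ whose factor is exactly the ratio produced by the rewriting bookkeeping.

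\emph{Step 1: rewriting is sound for $C^*$.} The plan is to show, by the depth-decreasing structure of Lemma~\ref{lemma:depth-term}, that for every unit $u$ with $\operatorname{eval}(u) = (r, u')$ and $\operatorname{rwr}^*(C)(u) = (r, u')$ (abusing notation for the final result), we have $\conv[C^*]{u}{\text{(the accumulated ratio)}}{\iota(u')}$, where $\iota$ embeds the normalized result back as a unit. The base case is a fixed point of depth $0$, handled by reflexivity \eqref{eq:close-cat-id}. The inductive step rewrites each maximal-depth base unit $u_0$ using its defining triple $\conv[C^*]{\de{u_0}}{\cdots}{v}$ from the generating set $C$; multiplicativity \eqref{eq:close-mult} lifts this replacement through the surrounding product of powers, and transitivity \eqref{eq:close-cat-comp} composes successive steps so that the conversion factors multiply correctly. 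This gives $C^\sharp \subseteq C^*$: if $\operatorname{rwr}^*(C)(u) = (r, u')$ and $\operatorname{rwr}^*(C)(v) = (s, v')$ with $u' = v'$, then $\conv[C^*]{u}{r}{\iota(u')}$ and $\conv[C^*]{v}{s}{\iota(v')}$, and inverting the second via \eqref{eq:close-cat-inv} and composing yields $\conv[C^*]{u}{rs^{-1}}{v}$.

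\emph{Step 2: rewriting is complete for $C^*$.} For the reverse inclusion I would argue that $\operatorname{rwr}^*(C)$ is a \emph{normal form} for convertibility: two units land on the same rewritten root unit $u' = v'$ exactly when they are convertible in $C^*$. The key observation is that exhaustive rewriting computes a canonical representative of the convertibility class, so it suffices to check that each closure axiom preserves ``has the same $\operatorname{rwr}^*$-normal form, with ratios differing by the recorded factor.'' Axioms \eqref{eq:close-mult} and \eqref{eq:close-inv} are handled by the monad-multiplication structure of $\Pair[\RatM]\FinSupp$ underlying $\operatorname{rwr}\subeval(C)$, which is a group homomorphism on the evaluated units, so products and inverses of rewrites are rewrites of products and inverses. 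Axiom \eqref{eq:close-dist} is handled because $\operatorname{rwr}^*(C) = \operatorname{rwr}\subeval(C)^{N_C} \circ \operatorname{eval}$ already factors through $\operatorname{eval}$, which identifies $u$ with its prefix value times its stripped form; thus the prefix-distributing axiom acts trivially on normal forms. Consequently any triple in $C^*$ satisfies the defining condition of $C^\sharp$.

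\emph{Expected main obstacle.} The delicate point is Step~2, completeness, specifically verifying that $\operatorname{rwr}^*(C)$ genuinely computes a \emph{confluent} normal form, so that convertible units are guaranteed to reach a \emph{common} rewritten root and not merely reducible ones. Well-definedness (well-foundedness of $>_C$) guarantees termination, but I expect the real work to be showing that the order in which maximal-depth base units are expanded does not affect the final evaluated unit\,---\,i.e., a local-confluence/Newman-style argument, or equivalently a proof that $\operatorname{rwr}\subeval(C)$ at its fixed point is a well-defined function on convertibility classes. Once confluence is secured, matching the accumulated conversion ratios against the group structure of $\RatM$ is routine, since both $\operatorname{eval}$ and the rewriting monad operations are homomorphisms and the factor bookkeeping is exactly multiplication in $\RatM$.
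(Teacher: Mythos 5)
Your overall strategy is the same as the paper's: prove both inclusions, getting $C^\sharp \subseteq C^*$ by showing that every rewriting step is witnessed by a triple in $C^*$ (induction along the iteration, then composition via \eqref{eq:close-cat-comp} after inverting one side via \eqref{eq:close-cat-inv}), and getting $C^* \subseteq C^\sharp$ by showing that $C^\sharp$ obeys the closure axioms and invoking minimality of $C^*$. Your Step~1 matches the paper's first half essentially verbatim and is sound.

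Step~2, however, has a genuine gap: you verify only the three closure axioms \eqref{eq:close-mult}, \eqref{eq:close-inv} and \eqref{eq:close-dist}, but $C^*$ is the smallest relation \emph{containing $C$} and closed under those axioms, so the minimality argument also needs the base case $\conv[C]urv \implies \conv[C^\sharp]urv$, which the paper lists explicitly and which you omit. This is not a formality; it is precisely the step where exhaustiveness is genuinely needed. For a defining rule $\conv[C]{\de{u_0}}{r}{v}$, the first rewriting step sends $\operatorname{eval}(\de{u_0})$ to $(r, \varnothing\subzero) \diamond \operatorname{eval}(v)$, so the rewriting of $\de{u_0}$ runs one iteration ``ahead'' of the rewriting of $v$; one must check that $N_C$ iterations still suffice on both sides to reach the common fixed point (they do, because every base unit in $\operatorname{supp}(\operatorname{root}(v))$ has depth strictly below that of $u_0$), whence $\conv[C^\sharp]{\de{u_0}}{r}{v}$. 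Your proposal instead attributes the role of exhaustiveness to axiom \eqref{eq:close-dist}, which in fact follows from the homomorphism property alone, since $\operatorname{eval}(u)$ and $\operatorname{eval}(\operatorname{strip}(u))$ differ by the rewriting-fixed element $(\operatorname{pval}(u), \varnothing\subzero)$. Relatedly, your ``expected main obstacle''\,---\,a confluence/Newman-style argument about the order of expanding maximal-depth base units\,---\,is a non-issue: $\operatorname{rwr}\subeval(C)$ is a single deterministic function that expands \emph{all} base units in the support simultaneously, via $\Pair[\RatM]\FinSupp(\operatorname{rwr}\subbase(C))$ followed by the monad multiplication $\xi$, so there is no choice of redex and nothing to confluence-check; termination and the uniform bound are already settled by Theorem~\ref{thm:rew-fix}. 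The real work sits in the homomorphism property of $\operatorname{rwr}^*(C)$ (which you correctly identify for \eqref{eq:close-mult} and \eqref{eq:close-inv}) and in the missing base case above.
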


\begin{shortproof}%[Theorem~\ref{thm:rewrite}, Outline]
  By mutual inclusion.
  \begin{enumerate}
  \item Define a relation
    $(\rightsquigarrow_{C^*}) \subseteq \Unit \times \Unit\subeval$:
    \begin{equation*}
      \label{eq:proof-rewrite-rel}
      u \rightsquigarrow_{C^*} (r, u') \iff \conv[C^*]ur{\operatorname{unroot}(u')}
    \end{equation*}
    Show by induction 
    \begin{align*}
      u &\rightsquigarrow_{C^*} \operatorname{eval}(u)
      \\
      u \rightsquigarrow_{C^*} v \implies u &\rightsquigarrow_{C^*} \operatorname{rwr}\subeval(C)(v)
    \end{align*}
    that it contains the graph of the exhaustive rewriting map:
    \begin{equation*}
      \label{eq:proof-rewrite-cont}
      u \rightsquigarrow_{C^*} \operatorname{rwr}^*(C)(u)
    \end{equation*}
    Now assume $(u, rs^{-1}, v) \in C^\sharp$, i.e.:
    \begin{align*}
      \begin{aligned}
        \operatorname{rwr}^*(C)(u) &= (r, u')
        \\
        \operatorname{rwr}^*(C)(v) &= (s, v')
      \end{aligned}
      &&
      u' &= v'
    \end{align*}
    Combining this with the preceding property, both directly for $u$ and
    reversed via \eqref{eq:close-cat-inv} for $v$, we get:
    \begin{align*}
      \conv[C^*]{u}{r}{\operatorname{unroot}(u')} = \conv[C^*]{\operatorname{unroot}(v')}{s^{-1}}{v}
    \end{align*}
    Transitivity via \eqref{eq:close-cat-comp} concludes
    \begin{align*}
      \conv[C^*]{u}{rs^{-1}}{v}
    \end{align*}
    such that $C^\sharp \subseteq C^*$.
  \item For the converse, show that $C^\sharp$ is a closure of $C$, by virtue of
    obeying the axioms,
    \begin{align*}
      \conv[C]urv &\implies \conv[C^\sharp]urv
      \\
      \conv[C^\sharp]urv \land \conv[C^\sharp]{u'}{r'}{v'} &\implies \conv[C^\sharp]{uu'}{rr'\kern-1pt}{vv'}
      \\
      \conv[C^\sharp]urv &\implies \conv[C^\sharp]{u^{-1}}{r^{-1}\kern-1pt}{v^{-1}}
      \\
      &\hspace{-2cm} \conv[C^\sharp]u{\operatorname{pval}(u)}{\conv[C^\sharp]{\operatorname{strip}(u)}{\operatorname{pval}(u)^{-1}}{u}}
    \end{align*}
    which requires rewriting to be exhaustive.  Among such relations, $C^*$ is
    minimal.  %\qed
  \end{enumerate}
\end{shortproof}

This theorem ensures that the following decision algorithm is correct.
  
\begin{definition}[Conversion Algorithm]
  \begin{enumerate}
  \item Given two units $u$ and $v$, rewrite them independently and
    exhaustively to evaluated units $(r, u')$ and $(s, v')$, respectively.
  \item Compare the root units $u'$ and $v'$:
    \begin{enumerate}
    \item If they are identical, then the conversion factor is $rs^{-1}$.
    \item Otherwise, there is no conversion factor.
    \end{enumerate}
  \end{enumerate}
\end{definition}

\begin{example}[Mars Climate Orbiter]
  The subsystems of the Orbiter attempted to communicate using different units
  of \emph{linear momentum}, namely \emph{pound-force-seconds}
  $u = \de{\mathrm{lbf}}\de{\mathrm{s}}$ vs.\ \emph{newton-seconds}
  $v = \de{\mathrm{N}}\de{\mathrm{s}}$.  The relevant base units are
  well-defined w.r.t\ the irreducible SI units as
  \begin{math}
    \conv{\de{\mathrm{lbf}}}1{\de{\mathrm{lb}} \de{g_{\mathrm{n}}}}
  \end{math}
  and
  \begin{math}
    \conv{\de{\mathrm{N}}}1{\delta(\mathrm{kg}) \de{\mathrm{m}} \de{\mathrm{s}}^{-2}}
  \end{math},
  with the auxiliary units \emph{pound}
  \begin{math}
    \conv{\de{\mathrm{lb}}}{a}{\de{\mathrm{g}}}
  \end{math} (see Example~\ref{ex:uk})
  and \emph{norm gravity}
  \begin{math}
    \conv{\de{g_{\mathrm{n}}}}{b}{\de{\mathrm{m}} \de{\mathrm{s}}^{-2}}
  \end{math}, and the conversion factors $a=453.59237$ and $b=9.80665$.
  Exhaustive rewriting yields
  \begin{math}
    \operatorname{rwr}^*(C)(u) = (a b, \de{\mathrm{g}} \de{\mathrm{m}} \de{\mathrm{s}}^{-1})
  \end{math} and
  \begin{math}
    \operatorname{rwr}^*(C)(v) = \bigl(1000, \de{\mathrm{g}} \de{\mathrm{m}} \de{\mathrm{s}}^{-1}\bigr)
  \end{math},
  and hence
  \begin{math}
    \conv{u}{ab\mathbin/1000}{v}
  \end{math}. 
  This conclusion reconstructs the factor given in the introduction.

  By virtue of Theorems \ref{thm:conv-cat} and \ref{thm:conv-sdc}, the same
  reasoning can be presented in diagrammatic instead of textual form; see Figure~\ref{fig:mars}.
  \begin{figure}
    \begin{equation*}
      \xymatrix{
        &&
        \lfloor \Usym{g} \rfloor
        \\
        &
        \lfloor \Usym{lb} \rfloor
        \ar[ru]^{a}
        \ar[rr]_{a/1000}
        &&
        \delta(\Usym{kg})
        \ar[lu]_{1000}
        \\
        &
        g_{\mathrm{n}}
        \ar[rr]^{b}
        &&
        \lfloor \Usym{m} \rfloor\lfloor \Usym{s} \rfloor^{-2}
        \\
        &
        \lfloor \Usym{lb} \rfloor g_{\mathrm{n}}
        \ar[rr]^{ab/1000}
        &&
        \delta(\Usym{kg}) \lfloor \Usym{m} \rfloor\lfloor \Usym{s} \rfloor^{-2}
        \\
        \lfloor \Usym{lbf} \rfloor
        \ar[ru]^{1}
        \ar[rrrr]_{ab/1000}
        &&&&
        \lfloor \Usym{N} \rfloor
        \ar[lu]_{1}
      }
    \end{equation*}
    \caption{Calculation of the Mars Climate Orbiter Correction Factor}
    \label{fig:mars}
  \end{figure}
\end{example}

\section{Conclusion}

We have presented a group-theoretic formal model of units of measure, to our
knowledge the first one that supports prefixes and arbitrary conversion factors.
The model is based on two composeable monadic structures, and hence has the
compositionality required for reasoning with substitution and for denotational
semantics.  The model is epistemologically stratified, and distinguishes cleanly
between necessary properties and natural operations on the one hand, and
contingent properties and definable interpretations on the other, leading to a
hierarchy of semantic equivalences.

The semantic data structures of the proposed model deviate from historical
practice in small, but important details:
\begin{itemize}
\item Prefixes apply to compound units as a whole, rather than to individual
  base units.
\item Combinations of prefixes are allowed.
\end{itemize}

We have characterized unit conversion rules by a class of ternary relations
equipped with a closure operator, that function both as a group congruence and
as a category, together with a six-tiered hierarchy of subclasses, each
\NEW{satisfying stronger useful algebraic properties} than the preceding.  This
model reconstructs and justifies the reductionistic approach of traditional
scientific unit systems.

\subsection{Related Work}

\cite{Kennedy1994} has found existing related work then to be generally lacking
in both formal rigor and universality.  With regard to the research programme
outlined by \cite{Kennedy1996}, matters have improved only partially; see
\cite{McKeever2020} for a recent critical survey.
Formally rigorous and expressive type systems have been proposed for various
contexts, such as C \cite{HILLS201251}, UML/OCL \cite{Mayerhofer2016}, generic
(but instantiated for Java) \cite{Xiang2020}.  However, in that line of
research, neither flexibly convertible units in general nor prefixes in
particular are supported.

An object-oriented solution with convertible units is described by
\cite{Allen2004}.  Their approach, like Kenne\-dy's, is syntactic; abelian groups
are added as an ad-hoc language extension with ``abelian classes'' and a
normalization procedure.  Prefixes are not recognized.  Unit conversion is
defined always in relation to a fixed reference unit per dimension.  This
implies that all codimensional units are convertible, which is manifestly
unsound; \NEW{see Section~\ref{unit-dim-aargh}}.  Nevertheless, the approach has
been reiterated in later work such as \cite{Cooper2007}.  By contrast,
\cite{Karr1978} had already proposed a more flexible relational approach to unit
conversion, with methods of matrix calculus for checking consistency.

\subsection{Problem Cases Revisited}
\label{cases-rev}
  
\subsubsection{Problem Case: Neutral Elements}

The statement ``The number $1$ is not a dimension'' \NOW{(translated from
  \cite{ISO80000-1-de})} is a rather meaningless outcome of the ambiguity of
informal language.  The \NOW{\emph{expression}} $1$, taken as the multiplicative
notation for the neutral element $\varnothing\subzero$ of the free abelian group
$\Dim = \Abel(\Dim\subbase)$, is of course a perfectly valid dimension; in this
sense the above statement is false.  $1$ is however not a base dimension symbol,
nor is the dimension that it represents the product of any base dimensions; in
that much narrower sense the above statement is true.  The misunderstanding is
manifest in the attribute ``dimensionless'', which in the domain of quantity
values has no meaning other than ``of dimension $1$''.

% Apparently there is a pervasive pattern that the intuition of natural language
% obscures the ``empty'' elements of generated mathematical structures: It is
% quite common in non-technical contexts to say that two sets, such as the loci of
% two geometrical shapes, have ``no intersection''.  Of course there exists an
% intersection in the strict sense of set theory, but it is the set of no points.
% In a similar vein, one might say that the number $1$ has no prime factoring,
% where in actuality it has the factoring into no prime numbers.
% %% 
% We believe that colloquialisms of this kind should have no place in scientific
% literature such as international standards.
  
The contradictory case of ``$1$ is a derived unit'' is more subtle, but no less
\NOW{problematic}.  The expression $1$, taken as the multiplicative notation for
the neutral element $\varnothing\subzero$ of the free abelian group
$\Unit = \Abel(\Unit\subbase)$ is a valid unit.  As above, it is not a product
of any base units.  \NOW{As recognized in \cite{sib9}, it is a necessary
  inhabitant of the group-based model, not subject to explicit definition.
  Since the standard notions of being derived do not distinguish between syntax
  and semantics, they cannot distinguish between \emph{compound} and
  \emph{defined} entities: The former depend syntactically on the base units
  they are made of.  The latter depend only semantically on their compound
  definition.  The unit $1$ is of the former kind, but there is no base unit
  that it depends on; as such it is derived literally from nothing.}

% The concept of being ``derived'', as used in
% \cite{ISO80000-1}, can only be applied meaningfully to base units: It means that
% an otherwise irreducible symbol is defined as equal to an expression over other
% symbols.  But $1$ is not defined in this sense; its meaning is a necessary
% property of the model being group-based, not a contingent property that can be
% asserted by prescriptive publication.

\NOW{Furthermore, although the unit proper $1 = \varnothing\subzero \in \Unit$
  may not be able to take a prefix $p \in \Pref\subbase$, the normalized unit
  $\operatorname{norm}(1) = (\varnothing\subzero, \varnothing\subzero) \in
  \Unit\subnorm$ surely can, the result being
  $(\delta(p), \varnothing\subzero)$.}
  
\subsubsection{Problem Case: Equational Reasoning}

The traditional syntactic structure of prefixed units, in particular the
interacting design decisions that prefixes are simple and bind stronger than
products, is inherently non-compositional; it cannot be fixed to allow the
substitution of complex expressions for atomic symbols in a straightforward and
consistent way.

The model presented above side-steps the problem by considering a pair of
structures: A faithful model of traditional syntax, $\Unit$ is juxtaposed with a
slightly coarser abstract syntax, $\Unit\subnorm$, the two being related
precisely by the natural distributive law
$\operatorname{norm} = \beta_{\Pref,\Unit\subbase}$.  The latter \NEW{model} is a monad,
which entails a number of well-understood virtues regarding the use as a
compositional data structure \cite[e.g.]{DBLP:conf/mpc/Voigtlander08}; monad
unit and multiplication give a natural notion of compositional substitution that
turns reductionistic defining equations into a fully effective algebraic
rewriting system.

\begin{itemize}
\item The SI base unit $\mathrm{kg}$ parses as
  \begin{align*}
    u = \delta_{\Unit\subpre}((\delta_{\Pref\subbase}(\mathrm{k}), \mathrm{g}))
    \in \Unit
  \end{align*}
  and normalizes to
  \begin{align*}
    u\subnorm = \operatorname{norm}(u) = \bigl(\delta_{\Pref\subbase}(\mathrm{k}),
    \delta_{\Unit\subbase}(\mathrm{g})\bigr) \in \Unit\subnorm = \Pair[\Pref](\Unit\subroot)
  \end{align*}
  Attaching a new prefix $\upmu$ is effected non-invasively by forming an intermediate \NEW{two-level} expression
  \begin{align*}
    e = \bigl(\delta_{\Pref\subbase}(\upmu), u\subnorm\bigr) \in \Pair[\Pref]^2(\Unit\subroot)
  \end{align*}
  and flattening it naturally by means of the multiplication of the monad $\Pair[\Pref]$:
  \begin{align*}
    v\subnorm = \mu_{\Pref,\Unit\subnorm}(e) = \bigl(\delta_{\Pref\subbase}(\upmu)\delta_{\Pref\subbase}(\mathrm{k}),
    \delta_{\Unit\subbase}(\mathrm{g})\bigr) \in \Unit\subnorm
  \end{align*}
  The compound prefix \emph{micro-kilo} is distinct from, but has the same
  value as the simple prefix \emph{milli}.
  \begin{align*}
    u\subnorm &\not\simeq\subnorm v\subnorm &
    u\subnorm &\simeq\subeval v\subnorm
  \end{align*}
  For the purpose of correct semantic calculation, it is of no importance
  whether compound prefix notation is allowed or forbidden syntactically; the
  issue arises only when the result of a calculation needs to be printed in a
  specific format.
\item For dealing with centilitres, consider a well-defining conversion that
  just defines the litre as in Example~\ref{ex:precip}:
  \begin{align*}
    \conv[C]{\de{\mathrm{L}}}{1}{\delta((\delta(\mathrm{d}), \mathrm{m}))^3}
  \end{align*}
  Calculation as above would get stuck with the symbolically irreducible
  compound prefix \emph{centi-deci-deci-deci-}.  Conversion, on the hand, can
  work with numerical values.  Since the conversion $C^*$ in question is
  well-defined, it suffices to consider the results of exhaustive rewriting:
  \begin{align*}
    \operatorname{rwr}^*(C)(\delta((\delta(c), L))) &=
    \operatorname{rwr}\subeval(C)(\operatorname{eval}(\delta((\delta(c), L))))
    \\
    &=
    \xi(\Pair[\RatM]\FinSupp(\operatorname{rwr}\subbase(C))(\operatorname{eval}(\delta((\delta(\mathrm{c}), L)))))
    \\
    &=
    \xi(\Pair[\RatM]\FinSupp(\operatorname{rwr}\subbase(C))((\operatorname{pval}(\delta(\mathrm{c})), \delta(\mathrm{L}))))
    \\
    &=
    \xi(\Pair[\RatM]\FinSupp(\operatorname{rwr}\subbase(C))((\operatorname{val}(\mathrm{c}), \delta(\mathrm{L}))))
    \\
    &=
    \xi((\operatorname{val}(\mathrm{c}), \FinSupp(\operatorname{rwr}\subbase(C))(\delta(\mathrm{L}))))
    \\
    &=
    \xi((\operatorname{val}(\mathrm{c}), \delta(\operatorname{rwr}\subbase(C)(\mathrm{L}))))
    \\
    &=
    \xi((\operatorname{val}(\mathrm{c}), \delta(\mu(\Pair[\RatM](\operatorname{eval})(\operatorname{xpd}(C)(\mathrm{L}))))))
    \\
    &=
    \xi((\operatorname{val}(\mathrm{c}), \delta(\mu(\Pair[\RatM](\operatorname{eval})((1, \delta((\delta(\mathrm{d}), \mathrm{m}))^3))))))
    \\
    &=
    \xi((\operatorname{val}(\mathrm{c}), \delta(\mu((1, \operatorname{eval}(\delta((\delta(\mathrm{d}), \mathrm{m}))^3))))))
    \\
    &=
    \xi((\operatorname{val}(\mathrm{c}), \delta(\mu((1, (\operatorname{pval}(\delta(\mathrm{d})^3), \delta(\mathrm{m})^3))))))
    \\
    &=
    \xi((\operatorname{val}(\mathrm{c}), \delta(\mu((1, (\operatorname{val}(\mathrm{d})^3, \delta(\mathrm{m})^3))))))
    \\
    &=
    \xi((\operatorname{val}(\mathrm{c}), \delta((1 \cdot \operatorname{val}(\mathrm{d})^3, \delta(\mathrm{m})^3))))
    \\
    &= \cdots
    \\
    &=
    \mu((\operatorname{val}(\mathrm{c}), (1 \cdot \operatorname{val}(\mathrm{d})^3, \delta(\mathrm{m})^3)))
    \\
    &=
    (\operatorname{val}(\mathrm{c}) \cdot 1 \cdot \operatorname{val}(\mathrm{d})^3, \delta(\mathrm{m})^3)
    \\
    &=
    (10^{-5}, \delta(\mathrm{m})^3)
  \end{align*}
  Thus, the judgement that a centilitre is $10^{-5}$ cubic meters arises as the
  product of the relevant prefix values and the conversion ratio for the litre.
  Note that the natural transformations employed in the calculation serve no
  other purpose but to transport contingent data to the right places.
\item The defining conversion for the \emph{watt} can be seen in
  Table~\ref{tab:si}.  The semantically equivalent notations for the
  \emph{hectowatt} are exactly the coherence (equivalence) class of
  $\delta((\NEW{\delta}(\mathrm{h}), \delta(\mathrm{W})))$.  Among this set, 44 elements
  obey traditional syntax rules, while infinitely many more involve multiple
  base prefixes on the same base unit.
\end{itemize}

\subsubsection{Problem Case: Cancellation}
\label{cases-rev-cancel}
  
Units of measure are entities of calculation.  As such, their usefulness
depends crucially on their algebraic behavior.  Making exceptions in the
algebra of units in order to encode properties of quantities seems like
putting the cart before the horse.  Thus the distinction
\begin{align*}
  \Usym{rad} = \Usym{m}/\Usym{m} \neq 1 \neq
  \Usym{sr} = \Usym{m}^2/\Usym{m}^2
\end{align*}
makes no algebraic sense, yet is the expression of a valid distinction at some
level, which our model should be able to accomodate.

We propose that the adequate level for making the distinction is quantities:
Quantities have both algebraic and \emph{taxonomical} (sub- and supertype)
structure, with the interaction being largely unstudied.  Quantities may be
associated with a default unit to express their values in.

The quantity \emph{plane angle} is defined as the quotient of the quantities
\emph{arc length} and \emph{circle radius}.  Both are subquantities of
\emph{length}, and hence inherit the default unit $\Usym{m}$.  The quantity
\emph{solid angle} is defined as the quotient of the quantities \emph{covered
  surface area} and \emph{sphere radius} squared.  The former inherits the
default unit $\Usym{m}^2$ from its superquantity \emph{area}, the latter
inherits the squared default unit of \emph{length}.

Thus, the equation $\Usym{rad} = \Usym{m}/\Usym{m}$ is an attempt to encode the
genealogy of the quantity \emph{plane angle} into the definition of its default
unit, in terms of the default units of its precursor quantities.  The message is
irreconcilably at odds with an algebraic reading, since the equation
$\Usym{m}/\Usym{m} = 1$ is a necessary property of the group-based model.

In a model with fine-grained conversion relations, the problem is easily
side-stepped: $\Usym{rad}$ as a unit of plane angle does not require any
reductionistic definition at all, let alone one that equates it unintendendly
with other units.  Rather, $\Usym{rad}$ is codimensional with, but can be left
unconvertible to, both $\Usym{sr}$ and $1$, in the well-defined conversion
relation of choice.

In fact, the possibility to have a partition of multiple interconvertible
components within a set of codimensional units behaves like a subtype system.
We leave the intriguing question whether this feature models relevant issues of
quantity ontology, \NEW{in particular whether it is a suitable complement for
  the concept of quantity kinds}, to future research.
  
\subsubsection{Problem Case: Prefix Families}

The group algebra of prefixes can express the semantics of geometric families of
base prefixes in a formally precise manner within the model: Statements of
interest correspond directly to relationships among mathematical objects.  For
example, the statement ``\emph{mega} means \emph{kilo} squared'' is embodied in
the fact that the pair $(\delta(\mathrm{M}), \delta(\mathrm{k})^2) \in \Pref^2$
is contained in the kernel of the model operation $\operatorname{val}$.

\subsubsection{Problem Case: Units and Dimensions}
\label{re-dim-aargh}

The approach that prefers one canonical unit per dimension is quite prevalent in
the foundational literature \cite{Kennedy1996,Allen2004}, motivated by both the
desire to keep things simple and to manage conversion information in a concise
and consistent way.  It is, however, prone to overdoing convertibility.  The
following quotation demonstrates this by jumping to unwarranted conclusions
twice \cite[boldface added]{Allen2004}:
\begin{quotation}
  To convert between measurements in different units of the same dimension, we
  must specify conversion factors between \textbf{various} units of that
  dimension.  A natural place to keep this information is in the definition of a
  unit: each unit specifies how to convert measurements in that unit to
  measurements in \textbf{any other} defined unit (for the same dimension).

  Although the number of such conversion factors is quadratic in the number of
  units, it is not necessary to maintain so many factors explicitly: if we can
  convert between measurements in units $A$ and $B$, and between measurements
  in units $B$ and $C$, then we can convert between measurements in $A$ and
  $C$ via $B$.  \textbf{Thus}, it is sufficient to include in the definition of
  every unit a single conversion factor to a \emph{primary unit} of that
  dimension, and convert between any two \textbf{commensurable} units via their
  common primary unit.
\end{quotation}

In the model presented here, there is no sufficient reason for all
codimensional units to be convertible; the latter equivalence relation can be
seen as a proper refinement of the former.  The placement of conversion data
is described in an object-oriented fashion in \cite{Allen2004}.  Here, by
contrast, we distinguish a (well-)defining conversion and its closure.

The statement ``The sievert is equal to the joule per kilogram [i.e., the
gray]'' from \cite{CGPM1979R5} is likely \NEW{a} category mistake.  It is
obviously false in a literal reading: Three quantity value assignments that
speak about the same irradiation event may assign very different \NEW{numerical
  values} to absorbed dose (in \emph{gray}), equivalent and effective dose (both
in \emph{sievert}), since the relevant medical calculation models crucially
involve context-dependent scaling factors both greater and less than one.

We conjecture that the actually intended message is threefold:
\begin{itemize}
\item The quantities of \emph{absorbed}, \emph{equivalent} and \emph{effective
    dose of ionizing radiation} are specializations of the more general
  \NEW{parent} quantity \emph{dose of ionizing radiation}.  \NEW{They shall be
    regarded as being of different kind.}
\item Absorbed dose inherits the default unit \emph{gray}, reductionistically
  defined as \emph{joule per kilogram}, from its parent quantity; equivalent
  and effective dose do not.
\item The practice to notate the calculation models with unitless
  scaling factors is acknowledged, although a \NEW{coefficient with a}
  unit of \emph{sievert per gray} would improve the formal rigor.
\end{itemize}

\subsection{Future Work}

\subsubsection{Number Theory of Conversion Ratios}

Conversion ratios have been defined axiomatically as positive rational numbers.
The reason is an application of Occam's Razor: The positive rationals are the
simplest multiplicative group, and no need for other numbers is evident from the
examples discussed so far.  However, there are some elephants in the room.

Of course, the conversion between full turns (or rational fractions thereof) and
radians of planar angle famously features the irrational factor $\pi$.
Fortunately, transcendental numbers such as $\pi$ are nicely orthogonal to
rationals, such that $\pi$ itself can be considered a unit of dimension
$\varnothing\subzero$, and treated purely symbolically.

Another application of irrational factors arises in the treatment of logarithms.
Since logarithms to different bases are merely linearly scaled copies of the
same basic function shape, it is possible to consider a \emph{single}, generic
function $\log$ that makes no mention of the base.  The result can then be
expressed in logarithmic units such as \emph{bit}, \emph{nat} or \emph{decibel},
which specify bases $2$, $e$ or $10^{1/10}$, respectively.
  
As usually in number theory, algebraic irrationals are more complicated, because
their powers may overlap with the rationals in complicated ways.  An
illuminating example is provided by the ``DIN'' families of paper sizes
\cite{ISO216}.  The ``A'' family defines a sequence of length units,
$a_0, \dots, a_{11}$, such that paper sizes arise as $A_n = a_n \times a_{n+1}$.
Ideally, perfect self-similarity would be achieved by the sequence being
geometric, with a factor of $a_n/a_{n+1} = \sqrt2$.  The largest paper size in
the family, $A_0$, covers one square meter.  These facts can be specified
unambiguously by a conversion relation
\begin{align*}
  \conv{a_0a_1}{1}{\mathrm{m}^2} &&
  \conv{a_n^2}{2}{a_{n+1}^2}
\end{align*}
but not by a well-defined one.  Consequently, the ensuing conversion rules are
not complete with respect to the algebra of $\sqrt2$.  The standard
\cite{ISO216} uses a different approach to avoiding irrationals: The lengths are
specified as integer multiples of a millimeter, with margins of tolerance such
that the exact geometric solution is a valid ``approximation''.

\subsubsection{Affine Conversion}

Some traditional units of measure have the property that a \NEW{numerical value}
of zero may have different meanings.  The most well-known example are the units
of \emph{temperature}, namely \emph{degrees Celsius} and \emph{Farenheit}, in
relation to the corresponding SI unit \emph{Kelvin}.  Their formal behavior can
be accommodated in the present model, but with some severe limitations.

The first step is to note that in equation~\ref{eq:conv} in
Remark~\ref{rem:conv}, the ratio $r$ is not just a number, but also a linear map
acting from the right on numerical values.  Thus, the middle component of conversion
relations could be generalized to invertible linear maps, and subsequently to
invertible \emph{affine} maps, for example:
\begin{align*}
  \conv{\text{\textdegree}C}{\operatorname\lambda x\mathpunct.\frac95x+32}{\text{\textdegree}F}
\end{align*}

However, the concept of conversion closure fails for this generalization.
Affine maps are not multiplicative: For two affine maps $f$ and $g$, there is in
general no affine map $fg$ such that $fg(xy) = f(x)g(y)$.

Thus the convertibility of affine units does not survive being put into context.
In particular, the product of any two quantity values measured in such units is
already nonsensical.

\subsubsection{Syntax}

Perhaps surprisingly, the syntax of quantity values is, in actual computing
practice, just as controversial and poorly understood as the semantics.

\cite{ISO80000-1,sib9} give some informal notational rules that document the
standard practice within the SI/ISQ, but not an actual grammar, let alone proof
of non-ambiguity.  By contrast, the programming language F\# supports an ad-hoc
syntax that is not compatible with any standard.  The example
\begin{quote}
  \ttfamily 55.0<miles/hour>
\end{quote}
given in \cite{fsharpunits} features angled brackets instead of the standard
invisible multiplication operator, and a plural form, which is explicitly ruled
out by \cite{ISO80000-1,sib9}.

A systematic study of standard-conforming syntax rules for quantity values, and
of their interaction with common rules for arithmetic expressions, would be a
useful basis for the design and implementation of ergonomic and clear unit-aware
programming language extensions.

\begin{ENEW}
  \section*{Acknowledgments}
  Anonymous reviewers have provided useful suggestions for the improvement of
  this article.
\end{ENEW}
\bibliographystyle{fundam}
\bibliography{conv-fi}

\end{document}